\documentclass[11pt,a4paper]{article}

\usepackage[T1]{fontenc}
\usepackage[utf8]{inputenc}
\usepackage{lmodern}
\usepackage[a4paper,margin=2.5cm]{geometry}
\usepackage{amsmath,amssymb,amsthm}
\usepackage{mathrsfs}
\usepackage{cite}
\usepackage{authblk}
\usepackage[
  colorlinks=true,
  linkcolor=blue,
  citecolor=blue,
  urlcolor=blue
]{hyperref}

\numberwithin{equation}{section}

\theoremstyle{plain}
\newtheorem{theorem}{Theorem}[section]
\newtheorem{lemma}[theorem]{Lemma}
\newtheorem{proposition}[theorem]{Proposition}

\theoremstyle{definition}

\theoremstyle{remark}
\newtheorem{remark}[theorem]{Remark}

\newcommand{\ii}{\mathrm{i}}
\newcommand{\dd}{\mathrm{d}}
\newcommand{\nn}{\nonumber}

\newcommand{\cyl}{\mathsf{Cyl}}

\title{Localization, Factorization and  Dualities for Elliptic Kernels}

\author[1,2]{Alessio Fontanarossa}
\author[1,2]{Fabrizio Nieri}
\author[1,2]{Antonio Pittelli}

\affil[1]{\emph{Dipartimento di Matematica, Universit\`a di Torino, Via Carlo Alberto 10, 10123 Torino, Italy}}
\affil[2]{\emph{INFN, Sezione di Torino, Via Pietro Giuria 1, 10125 Torino, Italy}}

\date{}

\begin{document}

\setcounter{secnumdepth}{3}

\maketitle

\begin{abstract}
We study the exact partition function
of 4d \(\mathcal N=1\) supersymmetric gauge theories on a torus times a cylinder \(\cyl=I\times S^1\), where \(I\) is a finite interval carrying two boundary components.  Each endpoint supports
an independent Dirichlet or Robin-like boundary polarization, so that the partition
function is a boundary-to-boundary elliptic kernel.  We construct the rigid supersymmetric
geometry, determine the BPS locus, and compute the chiral-multiplet 1-loop
determinants for the four possible boundary polarizations via equivariant localization.
The resulting elementary building blocks are theta functions dressed by
cubic  phases.  We then prove rank-changing Seiberg-type
dualities as identities of Jeffrey--Kirwan residues of these elliptic kernels.
We also discuss factorization into holomorphic-block cap wavefunctions represented by elliptic Gamma functions,
dimensional reductions to three and two dimensions, complete-intersection gauged
linear sigma models, and elliptic kernels for 4d \(\mathcal N=4\) super Yang--Mills and the
Klebanov--Witten theory, useful for holographic applications. 
\end{abstract}

\setcounter{tocdepth}{3}
\tableofcontents

\section{Introduction and Summary}

In supersymmetric and topological settings, exact amplitudes of quantum field theories on manifolds with boundaries naturally behave as kernels.  In two dimensions, this viewpoint is standard: interval amplitudes compute overlaps of boundary states, cap or hemisphere amplitudes assign wavefunctions to boundary conditions, and factorization follows from inserting a complete set of supersymmetric ground states. Exact results for 2d \( \mathcal N = (2,2) \) gauge theories on a disk geometry ($D^2$) provide a concrete realization of this rich structure \cite{Hori:2013ika,Honda:2013uca}. A related lesson is that exact answers on compact spaces \cite{Pestun:2007rz,Festuccia:2011ws,Dumitrescu:2012ha} often factorize into more elementary blocks attached to simpler pieces of the geometry, and gluing such blocks reconstructs indices and partition functions on closed backgrounds. In three dimensions, this far-reaching property has been widely explored for theories defined on manifolds admitting a genus-one Heegaard splitting, starting from \cite{Pasquetti:2011fj,Beem:2012mb}. In four dimensions, the natural playground is provided by geometries containing a two-torus ($T^2$), most notably the localization of 4d \(\mathcal N=1\) theories on \(D^2\times T^2\) and the associated four-dimensional holomorphic blocks \cite{Longhi:2019hdh,Nieri:2015yia}. 

Interval ($I$) geometries add a further layer: observables become kernels between two boundary sectors. 
 This is already visible in three-dimensional indices on \(I\times T^2\), where bulk multiplets couple to boundary \(2d\) degrees of freedom and the supersymmetric observables are related to elliptic genera, open-string Witten indices and interval factorizations \cite{Sugiyama:2020uqh,Yoshida:2014ssa,Zhao:2025ixe}. Related gluing, cigar, line-segment and slab constructions, including interfaces and boundary localization, have been developed in \cite{Dedushenko:2018aox,Dedushenko:2018tgx,Dedushenko:2021mds,Dedushenko:2022uay,Dedushenko:2023qjq,Litvinov:2022pjd}.
The purpose of this paper is to construct and analyze such kernels for 4d  \(\mathcal N=1\)  gauge theories on \(
\cyl\times T^2 \), where the cylinder \(\cyl=I \times S^1\) has two boundary components.  Throughout the paper we use the equivalent descriptions \(
\cyl\times T^2=(I\times S^1)\times T^2\simeq I\times T^3
\), depending on which aspect of the geometry we wish to emphasize. 
The title refers to elliptic kernels for a concrete reason: fixing two supersymmetric boundary conditions at the northern and southern ends, the interval direction makes the path integral a boundary-to-boundary amplitude, i.e. a transition matrix element between boundary sectors, while the torus makes this amplitude elliptic in the sense of the $\cyl\times pt/S^1/T^2$ hierarchy.\footnote{We use the word ``kernel'' in the boundary-to-boundary sense appropriate to an interval geometry. This should be distinguished from another common usage in the literature on supersymmetric partition functions and elliptic hypergeometric integrals, where compact-space indices depending on two sets of flavor fugacities can be interpreted, for instance, as integral or distributional kernels; see e.g. the elliptic hypergeometric identities of \cite{Spiridonov:2001ig,Rains:2003}, related recent work on elliptic kernels and duality walls \cite{Bottini:2021vms,Benvenuti:2024mpn}, and the holomorphic-block/bad-theory literature \cite{Pasquetti:2011fj,Beem:2012mb,Nieri:2015yia,Giacomelli:2023zkk,Giacomelli:2024laq}. In this paper we analyze only the elementary \(D/R\) boundary polarizations and their flip determinants. We leave for future work a  kernel with matrix elements labelled by a complete basis of boundary conditions or boundary defects.} Eventually, supersymmetric localization turns the matrix element into a finite-dimensional contour integral of a meromorphic function of gauge and flavor fugacities, whose elementary factors are theta functions or, equivalently, products of elliptic-Gamma cap wavefunctions.
Once the elliptic kernels are obtained, we study how they  behave under changes of boundary conditions/polarizations, anomaly cancellation conditions and compatibility with non-perturbative dualities. We also study how to recover the lower-dimensional results by taking suitable limits and the concatenation of several kernels, possibly coupled to cap wavefunctions. 
 
This paper sits at the intersection of three lines of work which naturally connect supersymmetric quantum field theory to geometry, representation theory and special functions. The first is localization on manifolds with boundary, where one must keep track not only of the bulk BPS locus but also of admissible boundary conditions, boundary terms and possible boundary degrees of freedom \cite{Hori:2013ika,Longhi:2019hdh,Sugiyama:2020uqh,Iannotti:2023jji,Dedushenko:2018aox,Dedushenko:2018tgx,Dedushenko:2022uay}.  The second is the theory of holomorphic blocks and their elliptic lifts, where wavefunctions are assigned to elementary pieces of a background and gluing is expressed by integration over gauge fugacities \cite{Pasquetti:2011fj,Beem:2012mb,Nieri:2015yia}.  The third is the study of dual boundary conditions and duality interfaces, where infrared equivalences of bulk theories act non-trivially on the space of boundary conditions and must be tested by anomaly matching and supersymmetric indices \cite{Dimofte:2017tpi}.  The present construction combines these ingredients in a setting with two independent boundaries in four dimensions and lowest supersymmetry.

\paragraph{Main results.}

The paper starts from the study of the rigid supersymmetric background \mbox{\(\cyl\times T^2\)}, making the Killing-spinor bilinears and the supercharge vector explicit. The first main result is the equivariant index evaluation of the 1-loop determinants for the four basic boundary polarizations of chiral multiplets, namely for the four possible choices of Dirichlet (D) or Robin(-like) (R) boundary conditions at the two ends. These are the same conditions considered in \cite{Longhi:2019hdh} and further studied in \cite{Dedushenko:2023cvd}. The mixed polarizations are acyclic away from the resonance hyperplanes of the boundary problem, meaning that the relevant first-order complex has no admissible kernel or cokernel modes and hence contributes the unit determinant
\begin{equation}
z_{\rm DR}(u)=z_{\rm RD}(u)=1 .
\label{eq:intro-zDR-RD-block}
\end{equation}
The unmixed polarizations give the two non-trivial elementary kernels, built out of theta or elliptic-Gamma functions dressed by \textit{cubic} polynomial phases in the relevant fugacities
\begin{equation}
z_{\rm DD}(u)
:=
\text{e}^{-\frac{\ii\pi}{3} P_3(-u)}\,
\Theta(-u;\omega)~,\quad z_{\rm RR}(u)
:=
\frac{
\text{e}^{\frac{\ii\pi}{3} P_3(u)}
}{
\Theta(u;\omega)
},
\end{equation}
the $\omega$ parameter representing the rotational fugacity of the cylinder. For $q$-Pochhammer symbols, theta and elliptic-Gamma functions we follow the  conventions of \cite{Felder:1999mq,Narukawa:2003,Friedman:2004}, as summarized in the appendix. In particular
\begin{equation}
P_3(u) :=B_{3,3}(u;1,\tau,\omega)+B_{3,3}(\omega-u;1,\tau,\omega) = -\frac{\tau+1}{2\omega\tau}\left(6u^2-6\omega u+\omega^2+\tau\right).
\label{eq:intro-theta-P3-def}
\end{equation}
is the relevant combination of \textit{cubic} Bernoulli polynomials, $\tau$ being the torus modular parameter. This is a distinguished feature with respect to the three-dimensional case, where the polynomial phases are quadratic. Another key difference is the diverse factorization into cap wavefunctions, namely
\begin{equation}
\Theta(u;\omega)=\frac{1}{\Gamma_e(u;\tau,\omega)\Gamma_e(\omega-u;\tau,\omega)}~.
\end{equation}

For interacting theories and general gauge group \(G\), the exact partition function can be written in a matrix-integral form, with 1-loop factors given  by boundary-polarized cylinder blocks.  Let \(u\) denote a gauge fugacity in the Cartan, \(\Delta_G\) the set of roots, and \(\rho\) the weights of a chiral representation.  For a chiral multiplet \(\Phi_I\) of R-charge \(r_I\), flavor fugacity \(\nu_I\), gauge representation \(\mathcal R_I\) and boundary polarization \(\mathsf q_I\in\{\mathrm{DD},\mathrm{DR},\mathrm{RD},\mathrm{RR}\}\), we define the full chiral 1-loop determinant by
\begin{align}
Z_{\Phi_I}^{\mathrm{DD}}(u)
&=
\prod_{\rho\in\mathcal R_I}
z_{\rm DD}\big(\rho(u)+\nu_I+(r_I-2)\gamma_R\big),
\nn\\
Z_{\Phi_I}^{\mathrm{RR}}(u)
&=
\prod_{\rho\in\mathcal R_I}
z_{\rm RR}\big(\rho(u)+\nu_I+r_I\gamma_R\big),
\nn\\
Z_{\Phi_I}^{\mathrm{DR}}(u)&=Z_{\Phi_I}^{\mathrm{RD}}(u)=1 .
\label{eq:intro-general-chiral-factors}
\end{align}
The non-zero-weight contribution of a vector multiplet with Neumann boundary conditions is obtained from the same first-order complex as a chiral multiplet with Dirichlet conditions in the adjoint representation with effective R-charge \(r=2\), after the usual ghost and Cartan zero-mode cancellations. This gives a 1-loop factor \(z_{\rm DD}(\alpha(u))\) for each root \(\alpha\in\Delta_G\)
\begin{equation}
Z_{\rm vec}^{G}(u)=\prod_{\alpha\in\Delta_G}z_{\rm DD}\big(\alpha(u)\big).
\label{eq:intro-general-vector-factor}
\end{equation}
The total localized expression is thus
\begin{equation}
\mathcal Z_{\cyl\times T^2}^{G,\{\mathsf q_I\}}
=
\frac{1}{|W_G|}
\oint_{\mathcal C_\eta}
\prod_{a=1}^{\operatorname{rk}G}\frac{\dd u_a}{2\pi\ii}\,
Z_{\rm vec}^{G}(u)
\prod_I Z_{\Phi_I}^{\mathsf q_I}(u).
\label{eq:intro-general-gauge-partition}
\end{equation}
Boundary degrees of freedom, when present, multiply the integrand by additional factors. In the main text we consider the flip determinants between RR and DD conditions only. The contour prescription is a key ingredient, and  we use the Jeffrey--Kirwan (JK) residue \cite{Nekrasov:2009uh,JeffreyKirwan1995,Benini:2013xpa,Benini:2013nda,Closset:2017zgf}.

The second main result is the boundary-flip calculus on a two-boundary geometry.  On a disk, a boundary multiplet can flip one boundary condition into another by multiplying the bulk determinant by a lower-dimensional determinant.  On a cylinder, the effect of flipping one end depends on what is imposed at the other end.  For this reason we keep track of all elementary flips at the northern and southern boundaries. The resulting flip factors are full theta-function determinants of the cylinder kernel. Elliptic-Gamma functions appear when the cylinder determinant is factorized into cap wavefunctions.  This separation between boundary determinants and cap wavefunctions is one of the conceptual points of the paper.

The third main result is a rank-changing duality of Seiberg type \cite{Seiberg:1994pq}. The electric theory has gauge group \(U(N_c)\), while the magnetic theory has gauge group \(U(N_f-N_c)\). The JK residues on the two sides can be matched term by term. Electric poles are labelled by subsets \(\mathcal I\) of \(N_c\) flavor labels, magnetic poles by complementary subsets \(\mathcal J=\mathcal I^c\) of cardinality \(N_f-N_c\). At the endpoint \(N_f=N_c\), the magnetic gauge group becomes trivial and the same identity reduces to a theta-function confining evaluation formula, which is new as far as we know. From the viewpoint of special functions, the resulting identities are close in spirit to elliptic hypergeometric residue formulae and elliptic beta-type evaluations \cite{Spiridonov:2001ig,Rains:2003}.  From the viewpoint of field theory, they are the cylinder analogues of integral  identities produced by supersymmetric dualities.

The fourth main result is the analysis of dimensional limits. In three dimensions, the cylinder blocks reduce to theta-function determinants factorizable into pairs of $q$-Pochhammer symbols, each representing a cap wavefunction. In two dimensions, they degenerate to trigonometric determinants factorizable into pairs of Euler-Gamma functions. The complete-intersection examples reproduce the expected equivalence of geometric and Landau--Ginzburg residue chambers, in the spirit of the GLSM phase structure of \cite{Witten:1993yc}.  

Last but not least, matter contents relevant to holographic SCFTs are analyzed in order to pave the way for some of the future research directions. For instance, writing \(u_{ab}:=u_a-u_b\), \(SU(N)\) \(\mathcal N=4\) admits the following cylinder partition function, obtained by taking two adjoint chirals in RR polarization and one in DD polarization,
\begin{align}
Z_{\mathcal N=4}^{SU(N)}(\Delta_1,\Delta_2,\Delta_3)
={}&
\frac{
\left[
 z_{\rm RR}(\Delta_1)\,
 z_{\rm RR}(\Delta_2)\,
 z_{\rm DD}(\Delta_3)
\right]^{N-1}}{N!}
\oint_{\mathcal C_{SU(N)}}
\prod_{a=1}^{N-1}\frac{\dd u_a}{2\pi\ii}
\nonumber\\
&\times
\prod_{\substack{a,b=1\\a\neq b}}^{N}
 z_{\rm DD}(u_{ab})\,
 z_{\rm RR}(u_{ab}+\Delta_1)\,
 z_{\rm RR}(u_{ab}+\Delta_2)\,
 z_{\rm DD}(u_{ab}+\Delta_3).
\label{eq:intro-N4-SU-partition}
\end{align}
with \(u_N=-\sum_{a=1}^{N-1}u_a\).  Here the  \(\Delta_I\) are the fugacities of the three adjoint chiral multiplets.  Equivalently, keeping the Bernoulli contact phase explicit, the same partition function can be written directly in terms of theta functions as
\begin{align}
Z_{\mathcal N=4}^{SU(N)}(\Delta_1,\Delta_2,\Delta_3)
={}&
\frac{1}{N!}
\oint_{\mathcal C_{SU(N)}}
\prod_{a=1}^{N-1}\frac{\dd u_a}{2\pi\ii}
\exp\!\left[
\frac{\ii\pi}{3}\,\mathcal P^{SU(N)}_{\mathcal N=4}(\Delta)
\right]
\nonumber\\
&\times
\left[
\frac{\Theta(-\Delta_3;\omega)}
{\Theta(\Delta_1;\omega)\Theta(\Delta_2;\omega)}
\right]^{N-1}
\prod_{\substack{a,b=1\\a\neq b}}^{N}
\frac{
\Theta(-u_{ab};\omega)\,
\Theta(-(u_{ab}+\Delta_3);\omega)
}{
\Theta(u_{ab}+\Delta_1;\omega)\,
\Theta(u_{ab}+\Delta_2;\omega)
}.
\label{eq:intro-N4-SU-theta-partition}
\end{align}
Using \eqref{eq:intro-theta-P3-def}, the Bernoulli contact phase in the theta-function presentation simplifies to the gauge-independent polynomial
\begin{align}
\mathcal P^{SU(N)}_{\mathcal N=4}(\Delta)
={}&
-\frac{(N-1)(\tau+1)(\omega^2+\tau)}{2\omega\tau}
\nonumber\\
&-
\frac{3(N^2-1)(\tau+1)}{\omega\tau}
\Bigl(
\Delta_1^2+\Delta_2^2-\Delta_3^2
-\omega(\Delta_1+\Delta_2+\Delta_3)
\Bigr).
\label{eq:intro-N4-SU-theta-phase}
\end{align}
All dependence on the gauge fugacities cancels in \eqref{eq:intro-N4-SU-theta-phase}; before the cancellation the possible linear terms are proportional to \(\sum_{a\neq b}u_{ab}=0\), while the quadratic \(u_{ab}\)-terms cancel root by root.  Thus \eqref{eq:intro-N4-SU-partition} is the compact block form, while \eqref{eq:intro-N4-SU-theta-partition} is the fully equivalent theta-function form.
The contour is again specified by the JK prescription. This finite-rank formula is one of the inputs for a possible large-\(N\) analysis of cylinder kernels and their comparison with supersymmetric gravitational backgrounds containing cylindrical factors, in the spirit of the twisted-index, spindle-index and black-hole microstate literature~\cite{Hosseini:2016tor,Hosseini:2019lkt,Hong:2018viz,Inglese:2023wky,Colombo:2024mts}.

\paragraph{Outlook.}

Several directions remain open.  A first one is the systematic inclusion of     boundary interactions, anomaly inflow~\cite{Callan:1984sa} and defects or interfaces~\cite{Dimofte:2017tpi}.  The two ends of the cylinder suggest a categorical interpretation: boundary polarizations should define objects, cylinder kernels should define morphisms or pairings, and boundary matter should act by composition, in analogy with the categorical description of branes and boundary conditions in supersymmetric boundary field theory, topological strings and Landau--Ginzburg models~\cite{Gaiotto:2008sa,Douglas:2000gi,Kapustin:2002bi,Herbst:2008jq}.

A second direction is a classification of cylinder dualities.  The Seiberg-like determinant-pair family studied here is deliberately conservative, because it is the class for which JK chambers and complementary residue matching can all be controlled explicitly.  The anomaly-screening equations allow more general matter assignments, including adjoint and flavor-adjoint variants.   

A third direction   may connect the present construction to derived categories of boundary conditions~\cite{Douglas:2000gi,Kapustin:2002bi,Herbst:2008jq}, analytic kernels for elliptic special functions~\cite{Felder:1999mq,Narukawa:2003}, and the tt* interpretation of block fusion~\cite{Cecotti:1991me,Cecotti:2013mba}.  The word ``fusion'' is meant here in the broad sewing/gluing sense familiar from conformal field theory and modular tensor category constructions~\cite{Moore:1988qv,Fuchs:2002cm}.  At the level established in this paper the concrete algebraic statements are the reflected inverse relation for the two unmixed blocks and the flip identities \eqref{eq:north-boundary-flips-theta}--\eqref{eq:south-boundary-flips-theta}; associativity of more general boundary-matter compositions is left for future work.

A fourth direction is the large-rank limit.  Since the observable is a boundary-to-boundary kernel rather than a closed-space index, its saddle-point interpretation should differ from standard compact-space formulas.  The  \(SU(N)\) \(\mathcal N=4\) contour integral \eqref{eq:intro-N4-SU-partition} is the most direct starting point. A holographic interpretation, if any, should be closer to a transition amplitude in an asymptotically locally \(AdS_5\) spacetime with two three-dimensional boundary components than to a closed-space partition function.  In a local decompactification limit \(T^3\simeq\mathbb R^3\), the relevant bulk geometry may admit an \(AdS_4\)-sliced description, namely a warped \(AdS_4\times I\) form.  This is the \(D=4\) instance of the general \(AdS_D\times I\) interval geometries studied recently in axio-dilaton gravity, while related defect-CFT constructions recover observables from warped \(AdS_p\times S^q\times I\) backgrounds~\cite{Faedo:2026defect,Dibitetto:2026axiodilaton,Dibitetto:2026adsnine}.  It would be interesting to understand whether the two boundary components give rise to two gravitational blocks~\cite{Hosseini:2019iad} and whether their gluing reproduces black-string, black-brane or accelerating-black-hole observables~\cite{Benini:2013cda} in backgrounds containing an elliptic factor, as happens in related compact computations~\cite{Hosseini:2016tor,Hosseini:2019lkt,Hong:2018viz,Inglese:2023wky,Colombo:2024mts}.

Finally, one can study degenerations of the cylinder.  If one boundary collapses, becomes asymptotic or is replaced by an orbifold point, the cylinder kernel should interpolate between the present two-boundary construction, elliptic disk blocks~\cite{Longhi:2019hdh}, twisted indices on hyperbolic manifolds~\cite{Pittelli:2018rpl,Iannotti:2023jji} and localization on orbifold geometries~\cite{Inglese:2023wky,Inglese:2023tyc,Pittelli:2024ugf}.  Understanding this interpolation may clarify which degrees of freedom are genuinely localized at a boundary or singular point and which are artifacts of a chosen polarization.

\paragraph{Outline.}

\hyperref[sec:geometry-loc]{Section~2} constructs the supersymmetric background by solving the conformal new minimal equations. The Killing-spinor bilinears are evaluated explicitly and used to identify the equivariant parameters entering localization.

\hyperref[sec:bps-loci]{Section~3} describes the BPS locus. The vector multiplet equations imply that a particular combination of boundary holonomies is constant along the interval; this constant forms the complex gauge fugacity together with the toric holonomies.

\hyperref[sec:oneloop]{Section~4} derives the 1-loop determinants using the cohomological approach.  The elementary flips of boundary conditions  at the two boundaries  are listed and the distinction between full theta-function boundary determinants and elliptic-Gamma cap wavefunctions explained.

\hyperref[sec:applications-examples]{Section~5} contains the main applications.  \hyperref[sec:duality]{Subsection~5.1} proves the Seiberg-like duality between \(U(N_c)\) and \(U(N_f-N_c)\), discusses the \(SU\) projection and records the theta-function confining identity at the \(U(0)\) endpoint.  \hyperref[sec:complete-intersections-cylinder]{Subsection~5.2} studies complete-intersection GLSMs on the two-dimensional cylinder.  \hyperref[sec:factorization-half-blocks]{Subsection~5.3} discusses factorization into cap wavefunctions, while \hyperref[sec:holographic-benchmarks]{Subsection~5.4} studies the consistency of cylinder integrals for matter contents familiar from holographic SCFTs.

\hyperref[app:notation-conventions]{Appendix~A} collects notation, boundary-condition labels, 1-loop determinant conventions, spinor and curvature conventions, Shintani--Barnes regularization, Jeffrey--Kirwan residue conventions.  \hyperref[app:anomalies]{Appendix~B} records the anomaly-screening equations used to test candidate boundary polarizations.  \hyperref[app:dimensional-limits]{Appendix~C} discusses dimensional limits to three and two dimensions, distinguishing literal Shintani--Barnes evaluations from contact-term-normalized blocks.

\paragraph{Acknowledgements.}

The work of FN was partially supported by the CRT Foundation under the grant 108399/2024.0434.

\section{Supersymmetric Geometry}\label{sec:geometry-loc}

\subsection{Background and domain}

Fix parameters
\[
0<\theta_N<\theta_S<\pi,\qquad \theta\in[\theta_N,\theta_S],\qquad
\varphi\sim\varphi+2\pi,\qquad x\sim x+2\pi,\qquad y\sim y+2\pi,
\]
and define
\begin{equation}
\tau:=\tau_1+\ii\tau_2\quad(\tau_2>0).
\label{eq:tau}
\end{equation}

Let $h(\theta)$ be a smooth real-valued function on $[\theta_N,\theta_S]$.
Throughout, $h(\theta)$ is assumed to be strictly positive on the closed interval:
\begin{equation}
h(\theta)>0\qquad \text{for all }\theta\in[\theta_N,\theta_S].
\label{eq:h-positive}
\end{equation}
This condition ensures that the $\varphi$-fiber has strictly positive proper length at both boundary circles; if $h$ vanishes at an endpoint, the $\varphi$-circle collapses and the geometry ceases to be a cylinder.

Let $f(\theta)$ be a smooth, real-valued and strictly positive  function on $[\theta_N,\theta_S]$. Usually,
\begin{equation}
\lim_{\theta\to0, \pi}f(\theta)\to b,
 \label{eq:fdef}
\end{equation}
but we will not need it as the closed interval $[\theta_N, \theta_S]$ is strictly contained into $(0, \pi)$.
Introduce the toric one-form
\begin{equation}
\Upsilon:=\dd\varphi+\Omega_3\,\dd x+\Omega_4\,\dd y .
\label{eq:Upsilon-def}
\end{equation}
The metric is taken to be
\begin{equation}
\dd s^2
=
L^2 f(\theta)^2\,\dd\theta^2
+
L^2 b^2 h(\theta)^2\,\Upsilon^2
+
L^2\beta^2\Big( (\dd x+\tau_1 \dd y)^2+\tau_2^2\,\dd y^2\Big).
\label{eq:metric}
\end{equation}
Introduce the complex combination
\begin{equation}
\omega := \Omega_3\,\tau-\Omega_4 .
\label{eq:omega}
\end{equation}

At fixed $(x,y)$ one has $\Upsilon=\dd\varphi$, and the $(\theta,\varphi)$-submanifold is a warped cylinder with boundary circles at
$\theta=\theta_N$ and $\theta=\theta_S$, of proper lengths $2\pi L b\,h(\theta_N)$ and $2\pi L b\,h(\theta_S)$, respectively.
The Riemannian density in coordinates $(\theta,\varphi,x,y)$ is
\begin{equation}
\sqrt{g}=L^4\,b\,\beta^2\,\tau_2\,f(\theta)\,h(\theta).
\label{eq:sqrtg}
\end{equation}
The positivity assumption \eqref{eq:h-positive} guarantees non-degeneracy of \eqref{eq:metric} on the closed cylinder.

Consider the orthonormal coframe
\begin{equation}
e^1 = L f(\theta)\,\dd\theta,\qquad
e^2 = L b\,h(\theta)\,\Upsilon,\qquad
e^3 = L\beta\,(\dd x+\tau_1 \dd y),\qquad
e^4 = L\beta\tau_2\,\dd y,
\label{eq:frame}
\end{equation}
so that $\dd s^2=\sum_{a=1}^4 (e^a)^2$.

Let $\omega_{ab}=-\omega_{ba}$ denote the Levi--Civita spin connection one-forms defined by the torsion-free Cartan equation
\begin{equation}
\dd e^a + {\omega^a}_{\ b}\wedge e^b = 0.
\label{eq:cartan}
\end{equation}
In the coframe \eqref{eq:frame},
\begin{equation}
\omega_{12}=-\omega_{21} = -\,\frac{b\,h'(\theta)}{f(\theta)}\,\Upsilon,
\label{eq:omega12}
\end{equation}
and $\omega_{ab}=0$ for all other pairs $(a,b)\neq(1,2)$. Indeed, one has $\dd e^1=\dd e^3=\dd e^4=0$ and, since $\dd\Upsilon=0$,
\[
\dd e^2
= L b\,h'(\theta)\,\dd\theta\wedge\Upsilon.
\]
The $a=2$ Cartan equation forces ${\omega^2}_{\ 3}={\omega^2}_{\ 4}=0$ and requires
${\omega^2}_{\ 1}=(b\,h'(\theta)/f(\theta))\,\Upsilon$, which implies \eqref{eq:omega12}.
The remaining Cartan equations are solved by setting all other components to zero.

Let $E_a$ be dual to $e^a$. From \eqref{eq:frame},
\begin{align}
E_1 &= \frac{1}{L f(\theta)}\,\partial_\theta, \nn\\
E_2 &= \frac{1}{L b\,h(\theta)}\,\partial_\varphi, \nn\\
E_3 &= \frac{1}{L\beta}\,(\partial_x-\Omega_3\partial_\varphi), \nn\\
E_4 &= \frac{1}{L\beta\tau_2}\Big(\partial_y-\tau_1\partial_x-(\Omega_4-\Omega_3\tau_1)\partial_\varphi\Big).
\label{eq:dual-frame}
\end{align}

We use the Euclidean spinor and sigma-matrix conventions collected in Appendix~\ref{app:notation-conventions}.  In particular, curved sigma matrices are obtained from the frame and its inverse, and the covariant derivatives on Weyl spinors are defined by \eqref{eq:spinor-cov-der}.

The conformal new minimal equations are~\cite{Sohnius:1981tp,Festuccia:2011ws,Dumitrescu:2012ha}
\begin{align}
(\nabla_\mu-\ii A_\mu^C)\zeta+\frac14\,\sigma_\mu\widetilde\sigma^\nu(\nabla_\nu-\ii A_\nu^C)\zeta&=0,
\label{eq:CKSE1}\\
(\nabla_\mu+\ii A_\mu^C)\widetilde\zeta+\frac14\,\widetilde\sigma_\mu\sigma^\nu(\nabla_\nu+\ii A_\nu^C)\widetilde\zeta&=0.
\label{eq:CKSE2}
\end{align}
Take
\begin{equation}
A^C=\frac12\Big(\omega_{12}+\alpha_\varphi\,\dd\varphi+\alpha_x\,\dd x+\alpha_y\,\dd y\Big),
\label{eq:AC}
\end{equation}
and define
\begin{equation}
\Xi := \alpha_\varphi \varphi+\alpha_x x+\alpha_y y .
\label{eq:spinor-phase}
\end{equation}

The choice \eqref{eq:AC} retains its form for general $h(\theta)$: the $\theta$-dependence of the Levi--Civita connection is entirely captured by $\omega_{12}$ in \eqref{eq:omega12}, and its inclusion in $A^C$ ensures that the conformal equations admit constant-frame spinors with purely toric phases.

A convenient solution is
\begin{equation}
\zeta_\alpha=\sqrt{\kappa_0}\,e^{\frac{\ii}{2}\Xi}\binom{1}{0}_\alpha,\qquad
\widetilde\zeta^{\dot\alpha}=\sqrt{\kappa_0}\,e^{-\frac{\ii}{2}\Xi}\binom{0}{1}^{\dot\alpha},
\label{eq:zetasol}
\end{equation}
with constant $\kappa_0\in\mathbb{C}$.  The pair \(\zeta,\widetilde\zeta\) has opposite chirality and opposite \(U(1)_R\) charge; the supercharge used below is the real combination whose square closes on the vector field \(K=\zeta\sigma\widetilde\zeta\), together with gauge and R-symmetry transformations. Periodicity along $T^2$ and the boundary of the cylinder implies
\begin{equation}
\alpha_\varphi, \alpha_x, \alpha_y\in\mathbb{Z}.
\label{eq:alphaint}
\end{equation}

The new minimal Killing spinor equations are~\cite{Sohnius:1981tp,Festuccia:2011ws,Dumitrescu:2012ha}
\begin{align}
(\nabla_\mu-\ii A_\mu)\zeta+\ii V_\mu\zeta+\ii V^\nu\sigma_{\mu\nu}\zeta&=0,
\label{eq:KSE1}\\
(\nabla_\mu+\ii A_\mu)\widetilde\zeta-\ii V_\mu\widetilde\zeta-\ii V^\nu\widetilde\sigma_{\mu\nu}\widetilde\zeta&=0,
\label{eq:KSE2}
\end{align}
and are solved by the spinors \eqref{eq:zetasol} provided the background fields
\begin{equation}
V=L\beta\,\kappa\,(\dd x+\tau\,\dd y),\qquad
A=A^C+\frac32 V,
\label{eq:AV}
\end{equation}
are chosen, where $\kappa\in\mathbb{C}$ is constant.

\subsection{Bilinears as tangent vectors and dual one-forms}

Introduce the bilinears as tangent vectors
\begin{equation}
K := K^\mu\partial_\mu,\qquad
Y := Y^\mu\partial_\mu,\qquad
\widetilde Y := \widetilde Y^{\,\mu}\partial_\mu,\qquad
\widetilde K := \widetilde K^{\,\mu}\partial_\mu.
\label{eq:vectors}
\end{equation}
We use the following normalization for the vector bilinears:
\begin{equation}
K^\mu := \zeta\sigma^\mu\widetilde\zeta,
\qquad
Y^\mu := \frac{1}{2|\widetilde\zeta|^2}\,\zeta\sigma^\mu\widetilde\zeta^\dagger,
\qquad
\widetilde Y^\mu := -\frac{1}{2|\zeta|^2}\,\zeta^\dagger\sigma^\mu\widetilde\zeta,
\qquad
\widetilde K^{\,\mu} := \frac{1}{4|\zeta|^2|\widetilde\zeta|^2}\,\widetilde\zeta^\dagger\,\widetilde\sigma^\mu\,\zeta^\dagger.
\label{eq:bilinears}
\end{equation}
The metric dual one-forms are denoted by the musical isomorphism
\begin{equation}
K^\flat := g_{\mu\nu}K^\nu\,\dd x^\mu,\qquad
Y^\flat := g_{\mu\nu}Y^\nu\,\dd x^\mu,\qquad
\widetilde Y^\flat := g_{\mu\nu}\widetilde Y^{\,\nu}\,\dd x^\mu,\qquad
\widetilde K^\flat := g_{\mu\nu}\widetilde K^{\,\nu}\,\dd x^\mu.
\label{eq:flat}
\end{equation}

Using $\sigma^\mu_{\alpha\dot\alpha}\,\widetilde\sigma_{\mu}^{\dot\beta\beta}
=-2\,\delta_\alpha^{\ \beta}\delta_{\dot\alpha}^{\ \dot\beta}$, one obtains
\begin{equation}
K^\mu\,\big(K_\mu\big)^\ast = 4\,|\zeta|^2\,|\widetilde\zeta|^2\,K^\mu\,\widetilde K_\mu,
\qquad
K^\mu \widetilde K_\mu=\frac12,
\label{eq:Ktilde-norm}
\end{equation}
and, consistently with \eqref{eq:bilinears},
\begin{equation}
\big(K^\mu\big)^\ast = 4\,|\zeta|^2\,|\widetilde\zeta|^2\,\widetilde K^{\,\mu},
\qquad
\big(Y^\mu\big)^\ast = -\,\frac{|\zeta|^2}{|\widetilde\zeta|^2}\,\widetilde Y^{\,\mu},
\qquad
\big(\widetilde Y^\mu\big)^\ast = -\,\frac{|\widetilde\zeta|^2}{|\zeta|^2}\,Y^\mu.
\label{eq:conjugation}
\end{equation}

Using \eqref{eq:dual-frame}, one finds
\begin{align}
K &= \frac{\ii\,\kappa_0}{L\beta\tau_2}\Big(\omega\,\partial_\varphi-\tau\,\partial_x+\partial_y\Big),
\label{eq:Kcoord}\\
Y &= \frac{1}{2L f(\theta)}\,\partial_\theta-\frac{\ii}{2 L b\,h(\theta)}\,\partial_\varphi,
\label{eq:Ycoord}\\
\widetilde Y &= -\,\frac{1}{2L f(\theta)}\,\partial_\theta-\frac{\ii}{2 L b\,h(\theta)}\,\partial_\varphi,
\label{eq:Ytilcoord}
\end{align}
and
\begin{equation}
\widetilde K = \frac{1}{4|\zeta|^2|\widetilde\zeta|^2}\,K^\ast
=\frac{1}{4|\kappa_0|^2}\,K^\ast.
\label{eq:Ktilcoord}
\end{equation}

The metric dual one-forms are
\begin{align}
K^\flat &= L\beta\,\kappa_0\,(\dd x+\tau\,\dd y),
\label{eq:Kform}\\
Y^\flat &= \frac{L f(\theta)}{2}\,\dd\theta-\frac{\ii\,L b\,h(\theta)}{2}\,\Upsilon, \nn\\
\widetilde Y^\flat &= -\,\frac{L f(\theta)}{2}\,\dd\theta-\frac{\ii\,L b\,h(\theta)}{2}\,\Upsilon,
\label{eq:forms}
\end{align}
and
\begin{equation}
\widetilde K^\flat = \frac{L\beta}{4\kappa_0}\,(\dd x+\overline\tau\,\dd y),
\qquad
\overline\tau=\tau_1-\ii\tau_2.
\label{eq:Ktilform}
\end{equation}

Since the components of $K^\flat$ are constant in the coordinate basis and the vector $K$ is orthogonal to the fibre direction, one finds
\begin{equation}
\nabla_\mu K_\nu = 0.
\label{eq:nablaK}
\end{equation}
In particular, $K$ is Killing.  The contraction with the toric one-form $\Upsilon$ is
\begin{equation}
\iota_K\Upsilon = 0.
\label{eq:iotaKfiber}
\end{equation}

\section{Localization and BPS loci}\label{sec:bps-loci}

The localization setup below uses the standard cohomological form of rigid supersymmetry on curved backgrounds, adapted to the presence of supersymmetric boundary conditions as in the disk and hyperbolic computations of Refs.~\cite{Festuccia:2011ws,Dumitrescu:2012ha,Festuccia:2020yff,Panerai:2020boq,Longhi:2019hdh,Iannotti:2023jji}.

Define $\varrho=\varrho(\theta)$ by
\begin{equation}
\frac{\dd\varrho}{\dd\theta}=\frac{f(\theta)}{b\,h(\theta)},
\qquad
\varrho(\theta):=\int_{\theta_*}^{\theta}\frac{f(\theta')}{b\,h(\theta')}\,\dd\theta'.
\label{eq:u}
\end{equation}

Under \eqref{eq:h-positive}, $\varrho(\theta)$ is smooth and strictly monotone if $f(\theta)$ stays positive on $[\theta_N,\theta_S]$, which holds for the present $f(\theta)$ in \eqref{eq:fdef}. Consequently, $\varrho$ provides a global coordinate along the cylinder direction and is adapted to the $(Y,\widetilde Y)$-complex.

Then
\[
E_1=\frac{1}{L b\,h(\theta)}\,\partial_\varrho,
\qquad
E_2=\frac{1}{L b\,h(\theta)}\,\partial_\varphi,
\]
so that (up to a nonzero scalar factor)
\begin{equation}
Y \ \propto\ \partial_\varrho+\ii\partial_\varphi,
\qquad
\widetilde Y \ \propto\ \partial_\varrho-\ii\partial_\varphi.
\label{eq:Ydir}
\end{equation}
Introduce
\begin{equation}
z:=e^{\varrho+\ii\varphi},\qquad \bar z:=e^{\varrho-\ii\varphi}.
\label{eq:z}
\end{equation}

For a field $X$ of $R$-charge $q_R$ and gauge charge $q_G$ define
\begin{equation}
D_\mu X := \nabla_\mu X - \ii q_R A_\mu X - \ii q_G \mathcal A_\mu X.
\label{eq:Dmu}
\end{equation}
Define
\begin{equation}
L_Y := Y^\mu D_\mu,
\qquad
L_{\widetilde Y} := \widetilde Y^{\,\mu} D_\mu.
\label{eq:LY}
\end{equation}

Along the $\varphi$-circle at fixed $(x,y)$ one has $\Upsilon=\dd\varphi$, hence
\begin{equation}
A_\varphi
=
\frac12\Big(\alpha_\varphi-\frac{b\,h'(\theta)}{f(\theta)}\Big),
\qquad
\mathcal A_\varphi = \mathcal A(\partial_\varphi).
\label{eq:Aphi}
\end{equation}
Using \eqref{eq:u}, one checks
\begin{equation}
\frac{b\,h'(\theta)}{f(\theta)}
=
\frac{\dd}{\dd\varrho}\big(\log h(\theta)\big).
\label{eq:cotidentity}
\end{equation}

A vector multiplet $(\mathcal A_\mu,\lambda,\widetilde\lambda,D)$ has supersymmetry variations
\begin{align}
\delta \mathcal A_\mu &= \ii\,\widetilde\zeta\,\widetilde\sigma_\mu\,\lambda+\ii\,\zeta\,\sigma_\mu\,\widetilde\lambda,
\nn\\
\delta\lambda &= \mathcal F_{\mu\nu}\sigma^{\mu\nu}\zeta+\ii\,D\,\zeta,
\nn\\
\delta\widetilde\lambda &= \mathcal F_{\mu\nu}\widetilde\sigma^{\mu\nu}\widetilde\zeta-\ii\,D\,\widetilde\zeta,
\nn\\
\delta D &= \widetilde\zeta\,\widetilde\sigma^\mu\Big(D_\mu\lambda+\frac{3\ii}{2}V_\mu\lambda\Big)
-\zeta\,\sigma^\mu\Big(D_\mu\widetilde\lambda-\frac{3\ii}{2}V_\mu\widetilde\lambda\Big),
\label{eq:vm-susy}
\end{align}
where $\mathcal F_{\mu\nu}=\partial_\mu \mathcal A_\nu-\partial_\nu \mathcal A_\mu$ in the Abelian case, and $D_\mu$ is the appropriate covariant derivative on the gaugini.

A chiral multiplet $(\phi,\psi,F)$ of $R$-charge $r$ has SUSY variations
\begin{align}
\delta \phi & = \sqrt{2}\,\zeta \psi,
\nn\\
\delta \psi & = \sqrt{2}\,F\,\zeta + \ii\sqrt{2}\,\sigma^\mu\widetilde{\zeta}\,D_\mu \phi,
\nn\\
\delta F & = \ii\sqrt{2}\,\widetilde{\zeta}\,\widetilde{\sigma}^\mu\Big(D_\mu\psi-\frac{\ii}{2}V_\mu\psi\Big)
-2\ii\,(\widetilde{\zeta}\,\widetilde{\lambda})\,\phi,
\label{eq:cm-susy}
\end{align}
and for the anti-chiral multiplet $(\widetilde{\phi},\widetilde{\psi},\widetilde{F})$,
\begin{align}
\delta \widetilde{\phi} & = \sqrt{2}\,\widetilde{\zeta}\,\widetilde{\psi},
\nn\\
\delta \widetilde{\psi} & = \sqrt{2}\,\widetilde{F}\,\widetilde{\zeta}
+ \ii\sqrt{2}\,\widetilde{\sigma}^\mu\zeta\,D_\mu\widetilde{\phi},
\nn\\
\delta \widetilde{F} & = \ii\sqrt{2}\,\zeta\,\sigma^\mu\Big(D_\mu\widetilde{\psi}+\frac{\ii}{2}V_\mu\widetilde{\psi}\Big)
+2\ii\,\widetilde{\phi}\,(\zeta\,\lambda).
\label{eq:acm-susy}
\end{align}

Assume an Abelian ansatz depending only on $\theta$,
\begin{equation}
\mathcal A
=
(\mathcal A_\varphi(\theta)+\beta_\varphi)\,\dd\varphi
+
(\mathcal A_x(\theta)+\beta_x)\,\dd x
+
(\mathcal A_y(\theta)+\beta_y)\,\dd y.
\label{eq:a-ansatz}
\end{equation}
Here $\beta_\varphi,\beta_x,\beta_y$ denote constant holonomies along the three circles
(not to be confused with the metric parameter $\beta$ in \eqref{eq:metric}).

A standard component of the vector-multiplet BPS equations in the cohomological presentation is
\begin{equation}
\iota_K\mathcal F=0,
\label{eq:iKF0}
\end{equation}
with $\mathcal F=\dd\mathcal A$ in the Abelian case. Under the assumption that the components in \eqref{eq:a-ansatz}
depend only on $\theta$, one has
\[
\mathcal F = \mathcal A_\varphi'(\theta)\,\dd\theta\wedge \dd\varphi
+ \mathcal A_x'(\theta)\,\dd\theta\wedge \dd x
+ \mathcal A_y'(\theta)\,\dd\theta\wedge \dd y,
\]
and using $\,\iota_K(\dd\theta)=0\,$ and $\iota_K(\dd\varphi)=K^\varphi$, $\iota_K(\dd x)=K^x$, $\iota_K(\dd y)=K^y$, one finds
\[
\iota_K\mathcal F
=
-\Big(K^\varphi \mathcal A_\varphi'(\theta)+K^x \mathcal A_x'(\theta)+K^y \mathcal A_y'(\theta)\Big)\,\dd\theta.
\]
With $K$ as in \eqref{eq:Kcoord}, the condition \eqref{eq:iKF0} is equivalent to the first-order ODE
\begin{equation}
\omega\,\mathcal A_\varphi'(\theta)-\tau\,\mathcal A_x'(\theta)+\mathcal A_y'(\theta)=0,
\label{eq:BPS-ODE}
\end{equation}
which integrates to
\begin{equation}
\omega\,\mathcal A_\varphi(\theta)-\tau\,\mathcal A_x(\theta)+\mathcal A_y(\theta)=\mathcal A_0,
\label{eq:a0-def}
\end{equation}
for some integration constant $\mathcal A_0\in\mathbb{C}$. Equivalently, using \eqref{eq:omega},
\begin{equation}
\mathcal A_y(\theta)=\tau\,\mathcal A_x(\theta)+\big(\Omega_4-\tau\Omega_3\big)\,\mathcal A_\varphi(\theta)+\mathcal A_0 .
\label{eq:ay-solution}
\end{equation}

Introduce the component along the fibre direction,
\begin{equation}
\mathcal A_\Upsilon(\theta):=\mathcal A_\varphi(\theta)+\Omega_3 \mathcal A_x(\theta)+\Omega_4 \mathcal A_y(\theta).
\label{eq:aUpsilon}
\end{equation}
Then $\dd\mathcal A$ contains $\mathcal A_\Upsilon'(\theta)\,\dd\theta\wedge\Upsilon$, and using \eqref{eq:frame},
\begin{equation}
\mathcal F_{12}
=
\frac{1}{L^2\,b\,f(\theta)\,h(\theta)}\,\frac{\dd \mathcal A_\Upsilon}{\dd\theta}.
\label{eq:F12}
\end{equation}
On the cohomological BPS locus, the auxiliary field is fixed accordingly:
\begin{equation}
D\big|_{\rm BPS}
=
\frac{1}{L^2\,b\,f(\theta)\,h(\theta)}\,\frac{\dd \mathcal A_\Upsilon}{\dd\theta}.
\label{eq:DBPS-trig}
\end{equation}

The chiral-multiplet BPS equations follow by setting to zero the fermionic variations in \eqref{eq:cm-susy} and \eqref{eq:acm-susy}.
On a contour where bosons and auxiliaries are taken to obey standard reality conditions, a consistent supersymmetric saddle is obtained by
\begin{equation}
\psi=\widetilde\psi=0,\qquad F=\widetilde F=0,
\label{eq:chiral-BPS-aux}
\end{equation}
together with the first-order constraints obtained from the kinetic terms in $\delta\psi$ and $\delta\widetilde\psi$.
In particular, the projected equations
\begin{equation}
\sigma^\mu\widetilde\zeta\,D_\mu\phi=0,
\qquad
\widetilde\sigma^\mu\zeta\,D_\mu\widetilde\phi=0
\label{eq:chiral-BPS-proj}
\end{equation}
can be expressed in the present background in terms of the first-order operators
$L_{\widetilde Y}$ and $L_Y$ introduced in \eqref{eq:LY}. Concretely, up to an overall nonzero scalar factor,
\begin{equation}
\sigma^\mu\widetilde\zeta\,D_\mu\phi=0\ \Longleftrightarrow\ L_{\widetilde Y}\phi=0,
\qquad
\widetilde\sigma^\mu\zeta\,D_\mu\widetilde\phi=0\ \Longleftrightarrow\ L_Y\widetilde\phi=0.
\label{eq:chiral-BPS-LY}
\end{equation}
In the localization computation, the one-loop determinant is obtained by expanding around the BPS locus
(and imposing supersymmetric boundary conditions on the two boundary circles).

Let $\mathcal{B}_2=\{(\theta,\varphi):\theta\in[\theta_N,\theta_S],\,\varphi\sim\varphi+2\pi\}$ at fixed $(x,y)$, with boundary circles
$S^1_S$ at $\theta=\theta_S$ and $S^1_N$ at $\theta=\theta_N$.
For any Abelian connection $A_{(0)}$ on $\mathcal{B}_2$,
\begin{equation}
\frac{1}{2\pi}\int_{\mathcal{B}_2} \dd A_{(0)}
=
\frac{1}{2\pi}\oint_{S^1_S} A_{(0)}
-
\frac{1}{2\pi}\oint_{S^1_N} A_{(0)}.
\label{eq:flux}
\end{equation}
For the vector multiplet connection \eqref{eq:a-ansatz},
\begin{equation}
\frac{1}{2\pi}\int_{\mathcal{B}_2} \dd\mathcal A
=
\mathcal A_\Upsilon(\theta_S)-\mathcal A_\Upsilon(\theta_N).
\label{eq:fluxgauge}
\end{equation}

\section{One-loop determinants and Boundary Polarizations}\label{sec:oneloop}

\subsection{Kernel equations, supercharge algebra, and adjointness}

For the spectral analysis, one expands fluctuations in Fourier modes along the three circles and treats the constant holonomies
$\beta_\varphi,\beta_x,\beta_y$ as background parameters. In this setting the dependence on the BPS profile
enters through the combination fixed by \eqref{eq:a0-def}, while the mode labels remain in $\mathbb{Z}^3$.

Consider Fourier modes on the three-torus,
\begin{equation}
X(\theta,\varphi,x,y)=e^{\ii(m_\varphi\varphi+m_x x+m_y y)}\,F(\theta),
\qquad (m_\varphi,m_x,m_y)\in\mathbb{Z}^3.
\label{eq:Fourier}
\end{equation}

Using \eqref{eq:Ydir} and \eqref{eq:Aphi}, the kernel equations reduce to first-order equations on $(\varrho,\varphi)$:
\begin{align}
L_Y X=0 \quad &\Longleftrightarrow\quad (\partial_\varrho+\ii D_\varphi)X=0,
\label{eq:LYeq}\\
L_{\widetilde Y} X=0 \quad &\Longleftrightarrow\quad (\partial_\varrho-\ii D_\varphi)X=0,
\label{eq:LYteq}
\end{align}
where $D_\varphi=\partial_\varphi-\ii q_R A_\varphi-\ii q_G \mathcal A_\varphi$ and $\mathcal A_\varphi=\mathcal A(\partial_\varphi)$.

For $L_Y X=0$, solutions can be written as
\begin{equation}
X^{(Y)}_{m_\varphi,m_x,m_y}
=
c^{(Y)}_{m_\varphi,m_x,m_y}\,
\exp\!\big(\ii(m_\varphi\varphi+m_x x+m_y y)\big)\,
\exp\!\Big(\big(m_\varphi-\tfrac{q_R}{2}\alpha_\varphi-q_G \mathcal A_\varphi\big)\,\varrho(\theta)\Big)\,
\big(h(\theta)\big)^{\frac{q_R}{2}} .
\label{eq:kerYmodes}
\end{equation}
For $L_{\widetilde Y} X=0$,
\begin{equation}
X^{(\widetilde Y)}_{m_\varphi,m_x,m_y}
=
c^{(\widetilde Y)}_{m_\varphi,m_x,m_y}\,
\exp\!\big(\ii(m_\varphi\varphi+m_x x+m_y y)\big)\,
\exp\!\Big(-\big(m_\varphi+\tfrac{q_R}{2}\alpha_\varphi+q_G \mathcal A_\varphi\big)\,\varrho(\theta)\Big)\,
\big(h(\theta)\big)^{-\frac{q_R}{2}} .
\label{eq:kerYtilmodes}
\end{equation}

On a field $X$ of charges $(q_R,q_G)$ one has
\begin{equation}
\delta^2 X
=
2\ii\Big(\mathcal{L}_K X - \ii q_R\,\Phi_R\,X - \ii q_G\,\Phi_G\,X\Big),
\label{eq:delta2schem}
\end{equation}
with $\Phi_R:=\iota_K A$ and $\Phi_G:=\iota_K\mathcal A$.
Using \eqref{eq:Kcoord} and \eqref{eq:iotaKfiber}, the spin-connection term in $A^C$ drops out of the contraction and one finds
\begin{equation}
\Phi_R
=
\iota_K A
=
\frac{\ii\,\kappa_0}{L\beta\tau_2}\,\gamma_R,
\qquad
\gamma_R:=\frac12\big(\omega \alpha_\varphi - \tau\alpha_x+\alpha_y\big).
\label{eq:gammaR}
\end{equation}

For the Abelian BPS connection \eqref{eq:a-ansatz}, the contraction with $K$ gives
\[
\Phi_G=\iota_K\mathcal A=\frac{\ii\,\kappa_0}{L\beta\tau_2}\Big(\omega(\mathcal A_\varphi(\theta)+\beta_\varphi)-\tau(\mathcal A_x(\theta)+\beta_x)+(\mathcal A_y(\theta)+\beta_y)\Big).
\]
On the BPS locus, the $\theta$-dependent part is constant by \eqref{eq:a0-def}, hence
\begin{equation}
\Phi_G
=
\iota_K\mathcal A
=
\frac{\ii\,\kappa_0}{L\beta\tau_2}\,\gamma_G,
\qquad
\gamma_G:=\mathcal A_0+\beta_y-\tau\,\beta_x+\omega\,\beta_\varphi .
\label{eq:gammaG}
\end{equation}
The expression \eqref{eq:gammaG} makes explicit the three integer shifts of the underlying flat connection:
\begin{equation}
\beta_y\mapsto\beta_y+1:\ \gamma_G\mapsto\gamma_G+1,
\qquad
\beta_\varphi\mapsto\beta_\varphi+1:\ \gamma_G\mapsto\gamma_G+\omega,
\qquad
\beta_x\mapsto\beta_x+1:\ \gamma_G\mapsto\gamma_G-\tau .
\label{eq:gammaG-shifts}
\end{equation}
The Shintani--Barnes representative used for the cylinder determinant below is chosen so that the final theta functions have nome \(e^{2\pi\ii\omega}\).  Hence the elementary cylinder blocks are meromorphic functions of the gauge fugacity on the \(\omega\)-elliptic lattice \(\mathbb Z+\omega\mathbb Z\).  All single-valuedness statements in the residue identities of this paper therefore refer to \(\omega\)-ellipticity in the gauge fugacities, not to invariance under an independent shift by \(\tau\).

On Fourier modes \eqref{eq:Fourier},
\[
\mathcal{L}_K X = \frac{\ii\,\kappa_0}{L\beta\tau_2}\,\big(\omega m_\varphi-\tau m_x+m_y\big)\,X,
\]
hence
\begin{equation}
\delta^2 X_{m_\varphi,m_x,m_y}
=
-\frac{2\ii\,\kappa_0}{L\beta\tau_2}\,
\Big(\omega m_\varphi-\tau m_x+m_y-q_G\gamma_G-q_R\gamma_R\Big)\,
X_{m_\varphi,m_x,m_y}.
\label{eq:delta2eigs}
\end{equation}
The common prefactor $-2\ii\kappa_0/(L\beta\tau_2)$ cancels in determinant ratios.

Let $\mathscr F$ be the $L^2$ completion of the relevant field space on the cylindrical region, with inner product
\begin{equation}
\langle X_1, X_2\rangle := \int \dd^4x\,\sqrt{g}\; (X_1)^\dagger X_2,
\label{eq:innerprod}
\end{equation}
and impose one of the supersymmetric boundary polarizations described below at each end of the interval.  These polarizations are chosen so that the boundary bilinear generated by integrating $L_Y$ by parts vanishes.  Equivalently, the localization deformation $\delta V$ has no uncancelled boundary variation on the allowed field space.  We therefore use the standard localization argument only within a fixed boundary polarization: deformations of the positive localizing functional that preserve the same boundary complex do not change the determinant ratio, while changing the polarization is a different observable and is implemented by the flip factors of Subsection~\ref{sec:oneloop}.  With these assumptions one has
\begin{equation}
L_Y^\dagger=-L_{\widetilde Y},
\label{eq:adjoint}
\end{equation}
and therefore
\begin{equation}
\operatorname{coker}(L_Y)\cong\ker(L_{\widetilde Y}),
\qquad
\operatorname{coker}(L_{\widetilde Y})\cong\ker(L_Y).
\label{eq:cokerker}
\end{equation}

\subsection{Index interpretation}\label{sec:index-interpretation}

The same algebraic data also has an index interpretation.  We take \(y=t_E\) to be the Euclidean time circle and use the Wick rotation \(t=\ii t_E\).  With the Lorentzian Hamiltonian convention \(\mathcal H=\ii\partial_t\), this gives \(\mathcal H=\partial_y\).  The path integral on \(\cyl\times T^2\) is then a graded trace over the Hilbert space of the theory on \(\cyl\times S^1_x\).  The two boundary polarizations are kept fixed throughout this quantization.  We use the convention, common in cohomological constructions of protected sectors, that symmetry generators act on local operators by commutators and hence by the corresponding differential operators; for instance the sphere construction of \cite{Dedushenko:2016jxl} writes the angular generator along the protected circle as \(P_\varphi=\ii\partial_\varphi\).  With the Fourier convention \eqref{eq:Fourier}, we therefore introduce
\begin{equation}
J:=\ii\partial_\varphi,
\qquad
\mathcal P:=\ii\partial_x,
\qquad
-\ii\partial_y\,X_{m_\varphi,m_x,m_y}=m_y X_{m_\varphi,m_x,m_y}.
\label{eq:index-generators}
\end{equation}
Thus a Fourier mode has eigenvalues \(J=-m_\varphi\) and \(\mathcal P=-m_x\).  The vanishing of \(\delta^2\) on a state contributing to the protected cohomology is not the separate vanishing of each charge, but the vanishing of the equivariant combination appearing in \eqref{eq:delta2eigs}.  Equivalently, for a state of charges \((q_R,q_G)\),
\begin{equation}
\delta^2 X_{\rm BPS}=0
\qquad\Longleftrightarrow\qquad
\omega m_\varphi-\tau m_x+m_y-q_G\gamma_G-q_R\gamma_R=0.
\label{eq:index-BPS-constraint-modes}
\end{equation}
In terms of the operators \eqref{eq:index-generators}, this becomes
\begin{equation}
-\ii\partial_y
=
\omega J-\tau\mathcal P+\gamma_G Q_G+\gamma_R Q_R
\qquad\hbox{on }\mathscr H_{\rm BPS},
\label{eq:index-BPS-operator-relation}
\end{equation}
where \(Q_G\) and \(Q_R\) denote the corresponding gauge, flavor, or R-symmetry charge operators.  Therefore the Euclidean translation around the \(y\)-circle can be rewritten on \(Q\)-cohomology as
\begin{equation}
\exp(-2\pi\partial_y)
=
\exp\!\Big[-2\pi\ii\big(\omega J-\tau\mathcal P+\gamma_G Q_G+\gamma_R Q_R\big)\Big]
\qquad\hbox{on }\mathscr H_{\rm BPS}.
\label{eq:index-evolution-rewritten}
\end{equation}
Introducing
\begin{equation}
q:=\exp(2\pi\ii\omega),
\qquad
p:=\exp(2\pi\ii\tau),
\label{eq:index-pq-def}
\end{equation}
the cylinder partition function may thus be viewed as the protected index
\begin{align}
Z_{\cyl\times T^2}
&=
\operatorname{Tr}_{\mathscr H(\cyl\times S^1_x)}
\left[(-1)^F\exp(-2\pi\mathcal H)\right]
=
\operatorname{Tr}_{\mathscr H(\cyl\times S^1_x)}
\left[(-1)^F\exp(-2\pi\partial_y)\right]
\nn\\
&=
\operatorname{Tr}_{\mathscr H_{\rm BPS}}
\left[
(-1)^F q^{-J}p^{\mathcal P}
\exp\!\big(-2\pi\ii(\gamma_G Q_G+\gamma_R Q_R)\big)
\right].
\label{eq:index-trace}
\end{align}
With the Wick-rotation convention used above, \(\mathcal H=\partial_y\) on the Euclidean functional integral.  The protected trace \eqref{eq:index-trace} is independent of this convention once the Euclidean algebra \eqref{eq:delta2eigs} is kept fixed.

\subsection{Boundary polarizations and one-loop determinants}

The boundary of $\mathcal{B}_2$ consists of two circles at $\theta=\theta_N$ and $\theta=\theta_S$.  In this paper a boundary polarization is simply the choice of supersymmetric boundary condition for the chiral multiplet at a given boundary circle.  We allow Dirichlet (D) and Robin (R) boundary conditions.  Since the cylinder has two ends, a bulk chiral multiplet is assigned an ordered pair of boundary conditions,
\[
DD,\qquad DR,\qquad RD,\qquad RR,
\]
where the first label refers to the northern boundary and the second to the southern boundary.

In the cohomological presentation, Robin boundary conditions are characterized by compatibility with the $\delta$-variation of the fermionic cohomological variable $B$. Schematically,
\begin{equation}
\delta B = L_{\widetilde Y}\phi + \cdots,
\label{eq:deltaBschem}
\end{equation}
and therefore Robin enforces
\begin{equation}
\big(L_{\widetilde Y}\phi\big)\big|_{\partial}=0.
\label{eq:RobinLytil}
\end{equation}
In particular, bulk modes in $\ker L_{\widetilde Y}$ satisfy \eqref{eq:RobinLytil} automatically.  Written in the radial coordinate \(\varrho\), this is a spectral Robin condition of the form \((\partial_\varrho-\ii D_\varphi)\phi|_\partial=0\), with the connection and charge-dependent terms contained in \(D_\varphi\); it is not an additional arbitrary parameter but the supersymmetric Robin operator paired with the Dirichlet condition \(\phi|_\partial=0\).

For mixed boundary choices, the cylinder supports no nontrivial cohomology for the first-order complex under generic background holonomies. This follows from the incompatibility of holomorphic/antiholomorphic transport data across the two boundary circles, as made explicit below.

Consider a field $X$ obeying the bulk equation $L_Y X=0$, hence of the form \eqref{eq:kerYmodes}. Acting on such a mode with $L_{\widetilde Y}$ yields
\begin{equation}
L_{\widetilde Y} X^{(Y)}_{m_\varphi,m_x,m_y}
=
2\,\big(m_\varphi-\tfrac{q_R}{2}\alpha_\varphi-q_G \mathcal A_\varphi\big)\,
\mathcal{W}(\theta)\,
e^{\ii(m_\varphi\varphi+m_x x+m_y y)},
\label{eq:LYtil-on-kerY}
\end{equation}
where $\mathcal{W}(\theta)$ is a nonzero weight obtained from the $\varrho(\theta)$ and $h(\theta)$ factors in \eqref{eq:kerYmodes}. In particular, at any boundary circle $\theta=\theta_\star$ the Robin condition $L_{\widetilde Y}X|_{\theta_\star}=0$ implies that each Fourier coefficient vanishes unless
\begin{equation}
m_\varphi-\tfrac{q_R}{2}\alpha_\varphi-q_G \mathcal A_\varphi=0.
\label{eq:quant-cond}
\end{equation}
For generic holonomies and background parameters, \eqref{eq:quant-cond} has no solution with $m_\varphi\in\mathbb{Z}$ and thus enforces $X\equiv 0$.  The exceptional loci defined by \eqref{eq:quant-cond} are resonance hyperplanes of the boundary problem.  In the gauge-theory residue formulae below we assume the flavor masses and spectator fugacities are generic enough that the JK poles selected by the unmixed RR factors do not lie on these resonance hyperplanes.  Under this genericity assumption the mixed-sector determinant is the meromorphic continuation of the identically trivial determinant on the complement of the resonance arrangement, hence it contributes the factor one at the selected poles.  If a resonance hyperplane is tuned to pass through a JK pole, extra boundary zero modes may appear and the corresponding wall-crossed boundary theory must be specified separately.

An analogous statement holds with $Y\leftrightarrow \widetilde Y$ interchanged. Consequently, for mixed boundary polarizations (DR or RD) one obtains, generically,
\begin{equation}
\ker L_Y=\ker L_{\widetilde Y}=0
\qquad\Longrightarrow\qquad
\operatorname{coker}(L_Y)=\operatorname{coker}(L_{\widetilde Y})=0.
\label{eq:acyclic-mixed}
\end{equation}
This provides the mechanism underlying the triviality of the bulk one-loop determinant for mixed polarizations.  In this precise sense the mixed boundary complexes are acyclic: the cohomology of the first-order complex is zero, because both the kernel and the cokernel vanish on the allowed space of modes.  Consequently the determinant ratio contains no non-trivial bulk factor and is equal to one, unless the parameters are tuned to a resonance hyperplane where extra boundary zero modes must be included explicitly.

After dropping the common prefactor in \eqref{eq:delta2eigs}, the universal $\delta^2$ eigenvalue factor on a Fourier mode labelled by
$(m_\varphi,m_x,m_y)\in\mathbb{Z}^3$ and carrying charges $(q_R,q_G)$ is
\begin{equation}
\lambda(m_\varphi,m_x,m_y;q_R,q_G)
:=
\omega m_\varphi-\tau m_x+m_y-q_G\gamma_G-q_R\gamma_R.
\label{eq:lambda-universal}
\end{equation}

For the Dirichlet polarization, the contributing cohomological scalar has effective charges
\begin{equation}
(q_R,q_G)=(r-2,\ q_G),
\label{eq:charges-D}
\end{equation}
hence the corresponding raw product is
\begin{align}
Z^{\mathrm{CM}}_{\mathrm{1\mbox{-}loop}}\big|_{D} 
& =
\prod_{(m_\varphi,m_x,m_y)\in\mathbb{Z}^3}
\Big(\omega m_\varphi-\tau m_x+m_y-q_G\gamma_G-(r-2)\gamma_R\Big).
\label{eq:ZrawD}
\end{align}

For the Robin polarization, compatibility with \eqref{eq:deltaBschem} enforces \eqref{eq:RobinLytil}. The relevant modes are those of the chiral scalar $\phi$ in $\ker L_{\widetilde Y}$, carrying charges
\begin{equation}
(q_R,q_G)=(r,\ q_G).
\label{eq:charges-R}
\end{equation}
With the placement of this sector in the denominator of the determinant ratio, the raw product reads
\begin{align}
Z^{\mathrm{CM}}_{\mathrm{1\mbox{-}loop}}\big|_{R} 
& =
\prod_{(m_\varphi,m_x,m_y)\in\mathbb{Z}^3}
\Big(\omega m_\varphi-\tau m_x+m_y+q_G\gamma_G+r\gamma_R\Big)^{-1}.
\label{eq:ZrawR}
\end{align}

In order to regularize these products, we employ a scheme similar to \cite{Barnes1904,Shintani1976}, namely we first consider the torus modes and then the cylinder ones by breaking the remaining product into two halves.  Geometrically this splitting is the one-loop shadow of cutting the cylinder into two oppositely oriented caps: after regularization, each half is naturally interpreted as a \(D^2\times T^2\) cap wavefunction, while the full \(\cyl\times T^2\) determinant is obtained by multiplying the two cap contributions and retaining the Bernoulli contact terms.  Thus the factorization used later is already visible at the level of the regulated triple product:
\begin{align}
Z^{\mathrm{CM}}_{\mathrm{1\mbox{-}loop}}\big|_{D} 
& =
\prod_{m_\varphi\geq 0}\prod_{(m_x,m_y)\in\mathbb{Z}^2}
(-1)\Big(\tau m_x-m_y+\omega m_\varphi-q_G\gamma_G-(r-2)\gamma_R\Big)\times\nn\\
& \times \Big(\tau m_x-m_y+\omega m_\varphi+\omega+q_G\gamma_G+(r-2)\gamma_R\Big)~,
\end{align}
and similarly 
\begin{align}
Z^{\mathrm{CM}}_{\mathrm{1\mbox{-}loop}}\big|_{R} 
& =
\prod_{m_\varphi\geq 0}\prod_{(m_x,m_y)\in\mathbb{Z}^2}
(-1)\Big(\tau m_x-m_y+\omega m_\varphi+q_G\gamma_G+r\gamma_R\Big)^{-1}\times\nn\\
& \times \Big(\tau m_x-m_y+\omega m_\varphi+\omega-q_G\gamma_G-r\gamma_R\Big)^{-1}~.
\end{align}
Then, using the standard Shintani-Barnes regularization as summarized in Subsection~\ref{app:Barnes}, such procedure gives us the final results (up to a normalization constant)
\begin{align}
Z^{\mathrm{CM}}_{\mathrm{1\mbox{-}loop}}\big|_{D} 
& = \frac{e^{-\ii\frac{\pi}{3}P_3(-q_G\gamma_G-(r-2)\gamma_R)}}{\Gamma_e(-q_G\gamma_G-(r-2)\gamma_R;\tau,\omega)\Gamma_e(\omega+q_G\gamma_G+(r-2)\gamma_R;\tau,\omega)} \nn\\
& =e^{-\ii\frac{\pi}{3}P_3(-q_G\gamma_G-(r-2)\gamma_R)} \Theta(-q_G\gamma_G-(r-2)\gamma_R;\omega)~,
\end{align}
and
\begin{align}
Z^{\mathrm{CM}}_{\mathrm{1\mbox{-}loop}}\big|_{R} 
& = e^{\ii\frac{\pi}{3}P_3(q_G\gamma_G+r\gamma_R)}\Gamma_e(q_G\gamma_G+r\gamma_R;\tau,\omega)\Gamma_e(\omega-q_G\gamma_G-r\gamma_R;\tau,\omega) \nn\\
& =\frac{e^{\ii\frac{\pi}{3}P_3(q_G\gamma_G+r\gamma_R)}}{\Theta(q_G\gamma_G+r\gamma_R;\omega)}~.\
\end{align}
With the conventions used above, the unmixed boundary polarizations define the two elementary four-dimensional cylinder blocks used in the applications below.  For a generic argument $u$, we set
\begin{equation}
z_{\rm DD}(u)
:=
\text{e}^{-\frac{\ii\pi}{3}P_3(-u)}
\Theta(-u;\omega)~,\quad 
z_{\rm RR}(u)
:=
\frac{
\text{e}^{\frac{\ii\pi}{3}P_3(u)}
}{\Theta(u;\omega)}.
\label{eq:4d-cylinder-blocks}
\end{equation}
With this normalization the DD and RR blocks are inverse after reflection of the argument:
\begin{equation}
z_{\rm DD}(u)\,z_{\rm RR}(-u)=1.
\label{eq:zDDzRR-inverse}
\end{equation}
Thus inserting a DD/RR pair with opposite arguments gives an exact inverse pair.  By contrast, the same-argument product is a holomorphic non-vanishing factor and introduces no polar hyperplane.
We also introduce
\begin{equation}
\Theta'_0:=\left.\frac{\partial}{\partial u}\Theta(u;\omega)\right|_{u=0},
\qquad
\varphi_0:=
\exp\!\left[\frac{\ii\pi}{3}P_3(0)\right].
\label{eq:theta-prime-varphi0}
\end{equation}
The constant $\varphi_0$ is the value of the cubic phase of the $RR$ block at $u=0$; no additional polynomial is being introduced.  Thus a simple positive-charge pole of $z_{\rm RR}(u)$ contributes $\varphi_0/\Theta'_0$ in the oriented JK convention used below.

The corresponding formula for a general gauge theory is obtained by multiplying these elementary factors over weights.  Let \(G\) be a compact gauge group with Weyl group \(W_G\), Cartan fugacities \(u=(u_1,\ldots,u_{\operatorname{rk}G})\), root system \(\Delta_G\), and chiral multiplets \(\Phi_I\) of R-charge \(r_I\), flavor fugacity \(\nu_I\), representation \(\mathcal R_I\), and boundary polarization \(\mathsf q_I\in\{\mathrm{DD},\mathrm{DR},\mathrm{RD},\mathrm{RR}\}\).  With the normalized contour measure used throughout this paper, the localized partition function is
\begin{equation}
\mathcal Z_{\cyl\times T^2}^{G,\{\mathsf q_I\}}
=
\frac{1}{|W_G|}
\oint_{\mathcal C_\eta}
\prod_{a=1}^{\operatorname{rk}G}\frac{\dd u_a}{2\pi\ii}\,
Z_{\rm vec}^{G}(u)\prod_I Z_{\Phi_I}^{\mathsf q_I}(u),
\label{eq:general-gauge-cylinder-partition}
\end{equation}
where
\begin{equation}
Z_{\rm vec}^{G}(u)=\prod_{\alpha\in\Delta_G}z_{\rm DD}\big(\alpha(u)\big)
\label{eq:general-vector-factor}
\end{equation}
and
\begin{align}
Z_{\Phi_I}^{\mathrm{DD}}(u)
&=
\prod_{\rho\in\mathcal R_I}
z_{\rm DD}\big(\rho(u)+\nu_I+(r_I-2)\gamma_R\big),
\nn\\
Z_{\Phi_I}^{\mathrm{RR}}(u)
&=
\prod_{\rho\in\mathcal R_I}
z_{\rm RR}\big(\rho(u)+\nu_I+r_I\gamma_R\big),
\nn\\
Z_{\Phi_I}^{\mathrm{DR}}(u)&=Z_{\Phi_I}^{\mathrm{RD}}(u)=1
\label{eq:general-chiral-factors}
\end{align}
for the bulk complex with no additional boundary multiplets.  The product over \(\rho\) is the product over the weights of the representation \(\mathcal R_I\); flavor fugacities are included by replacing \(\nu_I\) with the appropriate linear combination of flavor holonomies.  The vector factor \eqref{eq:general-vector-factor} is the non-zero-weight part of the adjoint one-loop determinant.  It can be obtained from the chiral determinant by taking the adjoint representation and the effective R-charge \(r=2\); the non-zero adjoint weights are the roots \(\alpha\in\Delta_G\), while the Cartan zero weights are cancelled by the ghost/zero-mode sector and leave the normalized Cartan measure in \eqref{eq:general-gauge-cylinder-partition}.  With our orientation convention this gives the DD root factor \(z_{\rm DD}(\alpha(u))\).  Using the reflected inverse relation one may equivalently trade it for inverse RR factors with reflected arguments, but the DD presentation is the one in which Vandermonde zeros remove coincident gauge eigenvalues most transparently.  Boundary multiplets are not included in \eqref{eq:general-chiral-factors}; they multiply the bulk kernel by the flip ratios described below.  The notation \(\mathcal C_\eta\) always denotes a JK cycle on the chosen \(\omega\)-elliptic fugacity torus.  Products of blocks which fail the \(\omega\)-ellipticity test should be regarded as local one-loop building blocks, not as standalone meromorphic kernels on this torus.

For mixed choices (DR or RD), the acyclicity statement \eqref{eq:acyclic-mixed} implies the absence of admissible kernel/cokernel modes in the bulk complex and hence a trivial elementary bulk block:
\begin{equation}
z_{\rm DR}(u)=z_{\rm RD}(u)=1,
\label{eq:Zmixed}
\end{equation}
up to possible finite-dimensional boundary degrees of freedom if included explicitly.  Equivalently, for a full chiral multiplet one has \(Z_{\Phi}^{\mathrm{DR}}=Z_{\Phi}^{\mathrm{RD}}=1\) in the minimal bulk complex.

The acyclicity statement \eqref{eq:Zmixed} refers to the bulk complex without additional boundary degrees of freedom.  As in the disk computation of boundary-condition flips \cite{Iannotti:2023jji,Dimofte:2017tpi,Longhi:2019hdh}, boundary multiplets can change the effective boundary polarization.  On the cylinder there are two independent boundary components, hence one can flip the northern and southern boundary conditions separately.

We denote the boundary circles at \(\theta=\theta_N\) and \(\theta=\theta_S\) by \(\partial_N\) and \(\partial_S\), respectively.  The first label in a pair such as \(DR\) refers to \(\partial_N\), while the second refers to \(\partial_S\).  We do not introduce separate symbols for the four bulk determinants: the non-trivial unmixed ones are the blocks \(z_{\rm RR}\) and \(z_{\rm DD}\), while the mixed elementary blocks are equal to one by \eqref{eq:Zmixed}.  A boundary multiplet which flips one end of the cylinder is represented by the ratio of the final determinant to the initial one.  These ratios are full boundary determinants, and therefore are written in terms of theta functions rather than elliptic Gamma cap wavefunctions.

Flipping the northern boundary gives four possibilities:
\begin{align}
Z^{\partial_N}_{RR\to DR}(u)
&:=\frac{1}{z_{\rm RR}(u)}
=\exp\!\left[-\frac{\ii\pi}{3}P_3(u)\right]\Theta(u;\omega),
\nn\\
Z^{\partial_N}_{DR\to RR}(u)
&:=z_{\rm RR}(u)
=\frac{\exp\!\left[+\frac{\ii\pi}{3}P_3(u)\right]}{\Theta(u;\omega)},
\nn\\
Z^{\partial_N}_{RD\to DD}(u)
&:=z_{\rm DD}(u)
=\exp\!\left[-\frac{\ii\pi}{3}P_3(-u)\right]\Theta(-u;\omega),
\nn\\
Z^{\partial_N}_{DD\to RD}(u)
&:=\frac{1}{z_{\rm DD}(u)}
=\frac{\exp\!\left[+\frac{\ii\pi}{3}P_3(-u)\right]}{\Theta(-u;\omega)}.
\label{eq:north-boundary-flips-theta}
\end{align}
Flipping the southern boundary gives the analogous four possibilities:
\begin{align}
Z^{\partial_S}_{RR\to RD}(u)
&:=\frac{1}{z_{\rm RR}(u)}
=\exp\!\left[-\frac{\ii\pi}{3}P_3(u)\right]\Theta(u;\omega),
\nn\\
Z^{\partial_S}_{RD\to RR}(u)
&:=z_{\rm RR}(u)
=\frac{\exp\!\left[+\frac{\ii\pi}{3}P_3(u)\right]}{\Theta(u;\omega)},
\nn\\
Z^{\partial_S}_{DR\to DD}(u)
&:=z_{\rm DD}(u)
=\exp\!\left[-\frac{\ii\pi}{3}P_3(-u)\right]\Theta(-u;\omega),
\nn\\
Z^{\partial_S}_{DD\to DR}(u)
&:=\frac{1}{z_{\rm DD}(u)}
=\frac{\exp\!\left[+\frac{\ii\pi}{3}P_3(-u)\right]}{\Theta(-u;\omega)}.
\label{eq:south-boundary-flips-theta}
\end{align}
These equations are the cylinder analogue of the boundary-condition flip: multiplying the full cylinder kernel by the appropriate boundary determinant changes one end of the cylinder from Robin to Dirichlet, or conversely.  The dependence on the other end is encoded in the initial and final labels, for instance \(RR\to DR\) and \(RD\to DD\) are different northern flips.  The flip operation is symmetric under exchanging Dirichlet and Robin: reversing an arrow replaces the determinant by its inverse, and exchanging the two boundary labels maps the mixed blocks \(z_{\rm DR}\) and \(z_{\rm RD}\) into each other.  This symmetry is consistent with the bulk acyclicity of both mixed sectors, but the two flips must still be distinguished because they live on different boundary components.  If instead one factorizes the cylinder into two caps, the corresponding one-sided ratios are elliptic-Gamma cap wavefunctions; those are useful for gluing, but they should not be confused with the full theta-function boundary determinants displayed above.

\section{Applications and Examples}\label{sec:applications-examples}

The cylinder blocks constructed above lead to four-dimensional $\cyl\times T^2$ residue identities of electric--magnetic type and to two-dimensional cylinder degenerations reproducing the chamber decomposition of GLSMs on $\cyl\simeq I\times S^1$.


\subsection{Dualities}\label{sec:duality}

The duality tests in this subsection are modeled on Seiberg-type rank-changing dualities~\cite{Seiberg:1994pq}, while the contour prescription is the Jeffrey--Kirwan prescription used in equivariant localization, elliptic genera and twisted indices~\cite{JeffreyKirwan1995,Benini:2013xpa,Benini:2013nda,Closset:2017zgf}. The selected matter content and boundary conditions follow from the anomaly cancellation conditions surveyed in the appendix. In this subsection, $\omega$-ellipticity refers to meromorphicity of the completed gauge integrand on the auxiliary gauge-fugacity torus \(u\sim u+1\sim u+\omega\). This is the torus relevant for the JK contour prescription, and should not be confused with the physical spacetime torus of modulus \(\tau\).

  \subsubsection{\texorpdfstring{\(U(N_c)\leftrightarrow U(N_f-N_c)\)}{U(Nc) <-> U(Nf-Nc)}}

\label{subsubsec:UNc-reinterpreted}

Set
\begin{equation}
n:=N_f-N_c\geq 0,
\end{equation}
and
\begin{equation}
M:=\sum_{i=1}^{N_f}m_i,
\qquad
\widetilde M:=\sum_{i=1}^{N_f}\widetilde m_i .
\label{eq:M-Mtilde}
\end{equation}
The electric theory has gauge group \(U(N_c)\), gauge fugacities \(u_a\), \(a=1,\ldots,N_c\), and integrand
\begin{equation}
\begin{aligned}
Z_{\rm el}(u)
=&\frac{1}{N_c!}\,
z_{\rm DD}\!\left[\sum_{a=1}^{N_c}u_a+M+\lambda\right]\,
z_{\rm DD}\!\left[-\sum_{a=1}^{N_c}u_a+\widetilde M+\lambda\right]
\\
&\times
\prod_{\substack{a,b=1\\ a\neq b}}^{N_c}
z_{\rm DD}[u_a-u_b]
\\
&\times
\prod_{a=1}^{N_c}\prod_{i=1}^{N_f}
z_{\rm RR}[-u_a-m_i]\,
z_{\rm RR}[u_a-\widetilde m_i]
\\
&\times
\prod_{a=1}^{N_c}\prod_{\alpha=1}^{n}
z_{\rm DD}[u_a+b_\alpha]\,
z_{\rm DD}[-u_a+\widetilde b_\alpha] .
\end{aligned}
\label{eq:Zel-UNc-reint}
\end{equation}
The corresponding electric partition function is the JK-contour integral of this meromorphic form
\begin{equation}
\mathcal Z_{\rm el}^{U(N_c)}
:=
\oint_{\mathcal C_\eta^{\rm el}}
\prod_{a=1}^{N_c}\frac{\dd u_a}{2\pi\ii}\,
Z_{\rm el}(u)
=
\operatorname{JK}_{U(N_c),\eta} Z_{\rm el} .
\label{eq:Zel-UNc-integral}
\end{equation}
The factor \(1/N_c!\) is already included in \eqref{eq:Zel-UNc-reint}.  The matter content consists of \(N_f\) fundamental RR chirals, \(N_f\) anti-fundamental RR chirals, \(n\) fundamental DD chirals, \(n\) anti-fundamental DD chirals, and two determinant-type DD chirals
\begin{equation}
S_+:\quad
z_{\rm DD}\!\left[\sum_a u_a+M+\lambda\right],
\qquad
S_-:\quad
z_{\rm DD}\!\left[-\sum_a u_a+\widetilde M+\lambda\right].
\end{equation}
Notice the \(\omega\)-ellipticity of \(Z_{\rm el}\) on the gauge fugacity space  follows from the quadratic condition \(n=N_f-N_c\) and from imposing the linear balancing condition
\begin{equation}
\sum_{\alpha=1}^{n}b_\alpha
=
\sum_{\alpha=1}^{n}\widetilde b_\alpha .
\label{eq:UNc-balancing-reint}
\end{equation}
Hence, when studying the possible contributing poles, we omit those given by integer shifts of the $\omega$-lattice. The determinant shifts are not independent masses: they are fixed by the total flavor fugacities in \eqref{eq:M-Mtilde}.  In checking the multiplier under a gauge shift \(u_a\mapsto u_a+\omega\), the dependence on the fundamental and anti-fundamental masses enters only through the total combinations \(M\) and \(\widetilde M\). These are precisely the combinations appearing in the two determinant shifts. Hence the mass-dependent part of the gauge \(\omega\)-elliptic multiplier cancels without imposing any balancing condition on the individual \(m_i\) or \(\widetilde m_i\). The individual masses remain, of course, genuine flavor fugacities of the integrand and of the JK residues.

We use the oriented JK prescription with the orientation fixed so that each simple pole of
\(z_{\rm RR}\) contributes \(\varphi_0/\Theta'_0\), cf.~\eqref{eq:theta-prime-varphi0}.  The charge covector of a polar hyperplane is the coefficient of the gauge fugacities in the argument of the corresponding RR block.  Thus a factor \(z_{\rm RR}[q\cdot u+m]\) contributes a polar hyperplane with charge covector \(q\).  Throughout this subsection we work in the chamber
\begin{equation}
\eta=(1,\ldots,1),
\label{eq:UNc-selected-chamber}
\end{equation}
which selects precisely the RR poles whose arguments have coefficient \(+1\) with respect to the relevant gauge fugacity.  On the electric side these are the factors \(z_{\rm RR}[u_a-\widetilde m_i]\).  The electric JK residue is therefore supported at
\begin{equation}
u_a=\widetilde m_i .
\end{equation}
The vector-multiplet Vandermonde factor removes coincident flavor labels; the contributing poles are indexed by subsets
\begin{equation}
\mathcal I\subset\{1,\ldots,N_f\},
\qquad
|\mathcal I|=N_c .
\end{equation}
Thus the electric JK residue is a finite sum of \(\binom{N_f}{N_c}\) non-vanishing sectors.  Equivalently, before quotienting by the Weyl group one may assign ordered flavor labels to the \(N_c\) gauge variables; the Vandermonde factor kills coincident labels and the factor \(1/N_c!\) removes the ordering.
With
\begin{equation}
S_{\mathcal I}:=\sum_{i\in\mathcal I}\widetilde m_i,
\end{equation}
Before writing the residue formula it is useful to spell out one finite example. Take \(N_f=5\) and \(N_c=2\), so that \(n=N_f-N_c=3\).  One electric residue sector is
\begin{equation}
\mathcal I=\{2,5\},
\qquad
u_1=\widetilde m_2,
\qquad
u_2=\widetilde m_5.
\label{eq:numerical-pole-example-electric}
\end{equation}
Equivalently, after relabelling the two electric gauge variables, this is the pole \(u_1=\widetilde m_2\), \(u_2=\widetilde m_5\).  The complementary magnetic sector is
\begin{equation}
\mathcal J=\mathcal I^c=\{1,3,4\},
\qquad
v_1=-\widetilde m_1,
\qquad
v_2=-\widetilde m_3,
\qquad
v_3=-\widetilde m_4,
\label{eq:numerical-pole-example-magnetic}
\end{equation}
up to Weyl permutations of the magnetic gauge variables.  The electric determinant factor depends on \(S_{\mathcal I}=\widetilde m_2+\widetilde m_5\), while the magnetic determinant factor depends on \(\sum_{j\in\mathcal J}\widetilde m_j\).  Since \(\mathcal I\sqcup\mathcal J=\{1,\ldots,5\}\), the two expressions are converted into each other by the total anti-fundamental mass \(\widetilde M\).  This is the concrete mechanism behind the complementary-residue matching proved below.
In general, the residue is
\begin{equation}
\operatorname{JK}_{U(N_c),\eta}Z_{\rm el}
=
\left(\frac{\varphi_0}{\Theta'_0}\right)^{N_c}
\sum_{\substack{\mathcal I\subset\{1,\ldots,N_f\}\\ |\mathcal I|=N_c}}
\mathcal E_{\mathcal I},
\label{eq:JK-el-UNc-reint}
\end{equation}
where
\begin{equation}
\begin{aligned}
\mathcal E_{\mathcal I}
=&\,
z_{\rm DD}[S_{\mathcal I}+M+\lambda]\,
z_{\rm DD}[-S_{\mathcal I}+\widetilde M+\lambda]
\\
&\times
\prod_{i\in\mathcal I}\prod_{\ell=1}^{N_f}
z_{\rm RR}[-\widetilde m_i-m_\ell]
\\
&\times
\prod_{i\in\mathcal I}\prod_{j\notin\mathcal I}
z_{\rm RR}[\widetilde m_i-\widetilde m_j]
\\
&\times
\prod_{i\in\mathcal I}\prod_{\alpha=1}^{n}
z_{\rm DD}[\widetilde m_i+b_\alpha]\,
z_{\rm DD}[-\widetilde m_i+\widetilde b_\alpha] .
\end{aligned}
\label{eq:electric-residue-UNc-reint}
\end{equation}

The magnetic theory has gauge group
\begin{equation}
U(n)=U(N_f-N_c),
\end{equation}
with fugacities \(v_r\), \(r=1,\ldots,n\).  In the chamber \eqref{eq:UNc-selected-chamber} we use the presentation
\begin{equation}
\begin{aligned}
Z_{\rm mag}(v)
=&\frac{1}{n!}\,
z_{\rm DD}\!\left[-\sum_{r=1}^{n}v_r+\lambda\right]\,
z_{\rm DD}\!\left[\sum_{r=1}^{n}v_r+M+\widetilde M+\lambda\right]
\\
&\times
\prod_{\substack{r,s=1\\ r\neq s}}^{n}
z_{\rm DD}[v_r-v_s]
\\
&\times
\prod_{r=1}^{n}\prod_{i=1}^{N_f}
z_{\rm RR}[v_r+\widetilde m_i]\,
z_{\rm DD}[-v_r+m_i]
\\
&\times
\prod_{r=1}^{n}\prod_{\alpha=1}^{n}
z_{\rm RR}[v_r-b_\alpha]\,
z_{\rm DD}[v_r-b_\alpha]\,
z_{\rm RR}[-v_r+b_\alpha]\,
z_{\rm RR}[-v_r-\widetilde b_\alpha] .
\end{aligned}
\label{eq:Zmag-UNc-reint-completed}
\end{equation}
The magnetic integrand is written in a completed selected-chamber presentation.  It is obtained from the uncompleted \(\omega\)-elliptic magnetic integrand by inserting the exact inverse pairs \(z_{\rm DD}[v_r-b_\alpha]z_{\rm RR}[-v_r+b_\alpha]=1\).  Therefore its \(\omega\)-elliptic multiplier is unchanged, and it is \(\omega\)-elliptic under the same conditions \eqref{eq:M-Mtilde} and \eqref{eq:UNc-balancing-reint} as the electric integrand.
The same-argument pair
\begin{equation}
z_{\rm RR}[v_r-b_\alpha] z_{\rm DD}[v_r-b_\alpha]
\label{eq:UNc-mag-regular-pair}
\end{equation}
is holomorphic and non-vanishing, and hence introduces no polar hyperplane.  The reflected pair satisfies
\begin{equation}
z_{\rm DD}[v_r-b_\alpha]z_{\rm RR}[-v_r+b_\alpha]=1 .
\label{eq:UNc-reint-identity-pair}
\end{equation}
These identities are used only to specify the polar presentation of the magnetic integrand.
The chamber \(\eta=(1,\ldots,1)\) selects, for each variable \(v_r\), the RR factors whose arguments have coefficient \(+1\) with respect to \(v_r\).  In the magnetic integrand the selected factors are
\begin{equation}
z_{\rm RR}[v_r+\widetilde m_i] ,
\label{eq:UNc-mag-selected-hyperplanes}
\end{equation}
and they generate the selected polar hyperplanes.  No further hyperplanes are selected.  The determinant factors, the vector factor, and the \(z_{\rm DD}[-v_r+m_i]\) factors are of DD type and do not produce poles.  The same-argument product in \eqref{eq:UNc-mag-regular-pair} is regular, while the remaining RR factors in the \(b_\alpha\)-sector have arguments \(-v_r+b_\alpha\) and \(-v_r-\widetilde b_\alpha\), hence coefficient \(-1\) with respect to \(v_r\).  They belong to the opposite JK chamber.  Thus the selected magnetic cell is generated only by the hyperplanes
\begin{equation}
v_r+\widetilde m_i=0 .
\end{equation}

The singlet sector is
\begin{equation}
Z_{\rm sing}=Z_M\,Z_B,
\end{equation}
where the meson sector is
\begin{equation}
Z_M=
\prod_{i,j=1}^{N_f}
z_{\rm RR}[-m_i-\widetilde m_j],
\label{eq:ZM-UNc-reint}
\end{equation}
and the boundary singlet sector is
\begin{equation}
Z_B=
\prod_{i=1}^{N_f}\prod_{\alpha=1}^{n}
z_{\rm DD}[\widetilde m_i+b_\alpha]\,
z_{\rm DD}[-\widetilde m_i+\widetilde b_\alpha].
\label{eq:ZB-UNc-reint}
\end{equation}
The magnetic gauge integral and the full magnetic partition function are
\begin{equation}
\mathcal Z_{\rm mag}^{U(n)}
:=
\oint_{\mathcal C_\eta^{\rm mag}}
\prod_{r=1}^{n}\frac{\dd v_r}{2\pi\ii}\,
Z_{\rm mag}(v)
=
\operatorname{JK}_{U(n),\eta}Z_{\rm mag},
\qquad
\mathcal Z_{\rm mag,full}^{U(n)}
:=Z_{\rm sing}\,\mathcal Z_{\rm mag}^{U(n)} .
\label{eq:Zmag-UNc-integral}
\end{equation}
The factors in \(Z_{\rm mag}(v)\) are cylinder one-loop determinants in the indicated polarizations.  The exact reflected DD/RR pairs inserted in \eqref{eq:Zmag-UNc-reint-completed} are not additional boundary multiplets; they are a change of polar presentation equal to one.  By contrast, the gauge-neutral sector \(Z_B\) can be interpreted as a boundary singlet sector implementing the required flip/cancellation of the spectator polarizations.  The same-argument product \eqref{eq:UNc-mag-regular-pair} is a holomorphic contact factor and should also not be counted as a pole-carrying boundary degree of freedom.  Genuine boundary multiplets enter through the flip ratios \eqref{eq:north-boundary-flips-theta}--\eqref{eq:south-boundary-flips-theta}.

\begin{lemma}[Selected JK poles and multiplicities]
\label{lem:selected-JK-poles-detpair}
Under the genericity assumptions of Theorem~\ref{thm:determinant-pair-cylinder-duality}, the chamber \(\eta=(1,\ldots,1)\) selects the following isolated poles, with unit multiplicity after the Weyl factors \(1/N_c!\) and \(1/n!\) are included.
On the electric side the selected poles are exactly
\begin{equation}
u_a=\widetilde m_i,
\qquad i\in\mathcal I,
\qquad |\mathcal I|=N_c,
\label{eq:lemma-electric-selected-poles}
\end{equation}
with distinct flavor labels.  On the magnetic side the selected poles are exactly
\begin{equation}
v_r=-\widetilde m_j,
\qquad j\in\mathcal J,
\qquad |\mathcal J|=n,
\label{eq:lemma-magnetic-selected-poles}
\end{equation}
again with distinct labels.  Thus the magnetic JK residue is a finite sum of \(\binom{N_f}{n}=\binom{N_f}{N_f-N_c}=\binom{N_f}{N_c}\) sectors.  The complement map \(\mathcal I\mapsto\mathcal J=\mathcal I^c\) is therefore a bijection between the electric and magnetic residue sectors.  The magnetic \(b_\alpha\)-hyperplanes and the same-argument contact pairs do not contribute in this chamber.  At the magnetic pole labelled by \(\mathcal J\), the vector factor cancels the selected RR factors with both flavor labels in \(\mathcal J\); the only off-diagonal RR factors left after this cancellation are those with one label in \(\mathcal J\) and the other in \(\mathcal J^c\).
\end{lemma}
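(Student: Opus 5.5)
The plan is to enumerate the polar divisors of each integrand directly from the elementary blocks \eqref{eq:4d-cylinder-blocks}, then to apply the oriented JK rule with \(\eta=(1,\ldots,1)\), and finally to use the Vandermonde zeros of the DD vector factor to discard degenerate configurations. First I will record the analytic facts. \(z_{\rm DD}(u)=e^{-\frac{\ii\pi}{3}P_3(-u)}\Theta(-u;\omega)\) is entire, and its only zeros are simple zeros on \(u\in\mathbb Z+\omega\mathbb Z\). \(z_{\rm RR}(u)\) has only simple poles, and they lie on the same lattice. By the reflected inverse relation \eqref{eq:zDDzRR-inverse}, a product \(z_{\rm RR}[x]z_{\rm DD}[x]\) has no poles at all, because the pole of \(z_{\rm RR}[x]\) at \(x=0\) is cancelled by the zero of \(z_{\rm DD}[x]\) there. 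Since the completed integrands are \(\omega\)-elliptic under \eqref{eq:UNc-balancing-reint}, it suffices to work modulo the lattice, i.e.\ on one fundamental cell.

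On the electric side, the only pole-carrying factors are \(z_{\rm RR}[-u_a-m_i]\), with charge \(-e_a\), and \(z_{\rm RR}[u_a-\widetilde m_i]\), with charge \(+e_a\). All other factors are DD factors and are entire. The charge covectors are therefore \(\pm e_a\). A generic \(\eta=(1,\ldots,1)\) lies in the cone spanned by a basis \(\{Q_1,\ldots,Q_{N_c}\}\subset\{\pm e_a\}\) only when every \(Q_a=+e_a\). So the JK residue is supported on the points \(u_a=\widetilde m_{i_a}\), and each such point gets iterated residue \((\varphi_0/\Theta_0')\), with the orientation convention fixed earlier.

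Next I need to show these poles are isolated, simple and have the right multiplicity. The genericity assumptions of Theorem~\ref{thm:determinant-pair-cylinder-duality} will be used for two things. First, \(\widetilde m_i-\widetilde m_j\notin\mathbb Z+\omega\mathbb Z\) for \(i\neq j\). Second, none of the other factors (\(z_{\rm RR}[-u_a-m_\ell]\), the DD flavour factors and the determinant factors) vanishes or has a pole at these points, and the points avoid the resonance hyperplanes \eqref{eq:quant-cond}. Under these assumptions the arrangement is projective, so the iterated residue is a plain product of one-variable residues. If \(i_a=i_b\) for some \(a\neq b\), then \(u_a-u_b=0\), and the factor \(z_{\rm DD}[u_a-u_b]z_{\rm DD}[u_b-u_a]\) gives a double zero. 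This double zero beats the two simple poles coming from the coincident hyperplanes, so the point contributes nothing. (This is the delicate point; see the last paragraph.) The surviving assignments are injective maps \(a\mapsto i_a\). Their \(N_c!\) orderings give equal residues by Weyl symmetry, and the \(1/N_c!\) cancels this, leaving one term for each subset \(\mathcal I\).

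On the magnetic side I run the same argument with the presentation \eqref{eq:Zmag-UNc-reint-completed}. The factors of charge \(+e_r\) are \(z_{\rm RR}[v_r+\widetilde m_i]\) and \(z_{\rm RR}[v_r-b_\alpha]\). The latter is always paired with \(z_{\rm DD}[v_r-b_\alpha]\), so by the first paragraph that pair is regular. The factors \(z_{\rm RR}[-v_r+b_\alpha]\) and \(z_{\rm RR}[-v_r-\widetilde b_\alpha]\) have charge \(-e_r\) and are excluded by the cone condition. The selected poles are therefore \(v_r=-\widetilde m_{j_r}\). Vandermonde and Weyl counting then give subsets \(\mathcal J\) with \(|\mathcal J|=n\), and there are \(\binom{N_f}{n}=\binom{N_f}{N_c}\) of them. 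Complementation \(\mathcal I\mapsto\mathcal I^c\) is then trivially a bijection. For the last claim, I evaluate at the pole \(\mathcal J\). The selected factors \(z_{\rm RR}[v_r+\widetilde m_j]\) with \(j=j_s\), \(s\neq r\), become \(z_{\rm RR}[\widetilde m_j-\widetilde m_{j_r}]\). The vector factor contributes \(z_{\rm DD}[v_r-v_s]=z_{\rm DD}[\widetilde m_{j_s}-\widetilde m_{j_r}]\). By \eqref{eq:zDDzRR-inverse} these cancel pairwise up to orientation, so only the RR factors with \(j\in\mathcal J^c\) survive.

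The main obstacle is the coincident-label step. At \(u_a=u_b=\widetilde m_i\) the hyperplanes \(u_a=\widetilde m_i\) and \(u_b=\widetilde m_i\) intersect transversally, so the point is a genuine non-degenerate vertex of the arrangement. Its contribution vanishes only because the vector factor has a zero along the diagonal \(u_a=u_b\), which passes through that vertex. My first approach will be the change of variables \(u_a=\widetilde m_i+\epsilon_a\). In these variables the integrand behaves like \((\epsilon_a-\epsilon_b)^2/(\epsilon_a\epsilon_b)\) times a regular function, and the iterated residue in \(\epsilon_a\) and then \(\epsilon_b\) can be computed explicitly and shown to be zero. If that fails, I will fall back on the degenerate-JK constructive definition.
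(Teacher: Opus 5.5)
Your proposal is correct and follows the paper's own proof. The steps match: the charge test with $\eta=(1,\ldots,1)$ keeps only the $+e_a$ (resp.\ $+e_r$) RR hyperplanes; the same-argument $b_\alpha$ pair is regular and the $-e_r$ factors lie in the opposite chamber; coincident labels are killed by the Vandermonde zero; and the Weyl factor removes orderings. Your worry about the coincident-label step is unnecessary, since at a non-degenerate vertex the JK residue is simply $f(u_*)/|\det Q|$ and here $f(u_*)=0$. One small correction to the last step: $z_{\rm RR}[v_r+\widetilde m_{j_s}]$ must be paired with $z_{\rm DD}[v_s-v_r]$, not $z_{\rm DD}[v_r-v_s]$, because the latter gives a same-argument product, which is a non-trivial non-vanishing factor rather than $1$; since both orderings occur in the vector factor, your conclusion is unaffected.
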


\begin{proof}
A polar hyperplane is selected by the JK covector only if it is produced by an RR block whose argument has positive coefficient with respect to the corresponding gauge fugacity.  For the electric integrand \eqref{eq:Zel-UNc-reint}, the only such factors are \(z_{\rm RR}[u_a-\widetilde m_i]\).  The local charge matrix at a pole with one such hyperplane for each \(a\) is a permutation matrix and hence has determinant \(\pm1\), with the sign absorbed by the oriented JK convention.  The Weyl factor \(1/N_c!\) removes the ordering of the variables.  If two variables are assigned the same flavor label, the vector factor contains \(z_{\rm DD}[0]\), so the corresponding would-be residue vanishes.  Thus only subsets \(\mathcal I\) of cardinality \(N_c\) remain, each once.  For distinct labels, the RR factors whose flavor label is also selected are paired with the ordered vector factors by the reflected inverse relation, leaving precisely the factors with labels outside \(\mathcal I\), as in \eqref{eq:electric-residue-UNc-reint}.

For the magnetic integrand \eqref{eq:Zmag-UNc-reint-completed}, the same charge test selects \(z_{\rm RR}[v_r+\widetilde m_i]\) and gives \eqref{eq:lemma-magnetic-selected-poles}.  The determinant factors, the vector factor and all DD factors are holomorphic at the selected hyperplanes and therefore do not generate JK charges.  The RR factors \(z_{\rm RR}[-v_r+b_\alpha]\) and \(z_{\rm RR}[-v_r-\widetilde b_\alpha]\) have charge \(-1\) with respect to \(v_r\), hence lie in the opposite chamber; they can contribute only after crossing a wall, which is excluded by the genericity assumption.  The product \(z_{\rm RR}[v_r-b_\alpha]z_{\rm DD}[v_r-b_\alpha]\) is holomorphic and non-vanishing, so it defines no polar hyperplane.  As on the electric side, the local charge matrix is a permutation matrix, the Weyl factor removes the ordering, and coincident labels are killed by the Vandermonde zero.  For distinct labels, the vector factor cancels the RR factors with both labels in \(\mathcal J\), leaving only the mixed factors with labels in \(\mathcal J\) and \(\mathcal J^c\).  This proves both the list of selected poles and the unit multiplicity statement.
\end{proof}

\begin{lemma}[Theta-only complementary residues]
\label{lem:theta-only-complementary-residues}
The algebraic core of the determinant-pair duality is already visible after stripping the Bernoulli contact phases and the universal simple-pole factors.  In this paragraph write \(\Theta[x]\equiv\Theta(x;\omega)\).  For a selected electric subset \(\mathcal I\) and complementary magnetic subset \(\mathcal J=\mathcal I^c\), define the normalized theta residues
\begin{align}
\widehat{\mathcal E}_{\mathcal I}
={}&
\Theta[-S_{\mathcal I}-M-\lambda]
\Theta[S_{\mathcal I}-\widetilde M-\lambda]
\nonumber\\
&\times
\prod_{i\in\mathcal I}\prod_{\ell=1}^{N_f}
\frac{1}{\Theta[-\widetilde m_i-m_\ell]}
\prod_{i\in\mathcal I}\prod_{j\in\mathcal J}
\frac{1}{\Theta[\widetilde m_i-\widetilde m_j]}
\nonumber\\
&\times
\prod_{i\in\mathcal I}\prod_{\alpha=1}^{n}
\Theta[-\widetilde m_i-b_\alpha]\,
\Theta[\widetilde m_i-\widetilde b_\alpha],
\label{eq:theta-only-electric-sector-main}
\end{align}
and
\begin{align}
\widehat{\mathcal M}_{\mathcal J}
={}&
\Theta[-S_{\mathcal J}-\lambda]
\Theta[S_{\mathcal J}-M-\widetilde M-\lambda]
\nonumber\\
&\times
\prod_{j\in\mathcal J}\prod_{\ell=1}^{N_f}
\Theta[-\widetilde m_j-m_\ell]
\prod_{j\in\mathcal J}\prod_{i\in\mathcal I}
\frac{1}{\Theta[\widetilde m_i-\widetilde m_j]}
\nonumber\\
&\times
\prod_{j\in\mathcal J}\prod_{\alpha=1}^{n}
\frac{1}{\Theta[-\widetilde m_j-b_\alpha]\,
\Theta[\widetilde m_j-\widetilde b_\alpha]}.
\label{eq:theta-only-magnetic-sector-main}
\end{align}
Let
\begin{equation}
\widehat Z_{\rm sing}=
\prod_{r=1}^{N_f}\prod_{\ell=1}^{N_f}
\frac{1}{\Theta[-\widetilde m_r-m_\ell]}
\prod_{r=1}^{N_f}\prod_{\alpha=1}^{n}
\Theta[-\widetilde m_r-b_\alpha]\,
\Theta[\widetilde m_r-\widetilde b_\alpha].
\label{eq:theta-only-singlet-main}
\end{equation}
Then the residue matching is sectorwise:
\begin{equation}
\widehat{\mathcal E}_{\mathcal I}
=
\widehat Z_{\rm sing}\,
\widehat{\mathcal M}_{\mathcal J},
\qquad
\mathcal J=\mathcal I^c .
\label{eq:theta-only-sectorwise-main}
\end{equation}
\end{lemma}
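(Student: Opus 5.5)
The plan is a direct factor-by-factor comparison of the two sides of \eqref{eq:theta-only-sectorwise-main}. No theta-function identity (quasi-periodicity, reflection, or the reflected inverse relation \eqref{eq:zDDzRR-inverse}) should be needed. The only input is the complementarity $\mathcal I\sqcup\mathcal J=\{1,\ldots,N_f\}$, together with the resulting relation
\begin{equation}
S_{\mathcal J}=\sum_{j\in\mathcal J}\widetilde m_j=\widetilde M-S_{\mathcal I},
\end{equation}
which is exactly the mechanism anticipated in the $N_f=5$, $N_c=2$ example around \eqref{eq:numerical-pole-example-magnetic}. I would split $\widehat Z_{\rm sing}\,\widehat{\mathcal M}_{\mathcal J}$ into four groups of factors and match each group against the corresponding group of $\widehat{\mathcal E}_{\mathcal I}$ in \eqref{eq:theta-only-electric-sector-main}.

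\textbf{Meson-type factors.} Group the factors $\Theta[-\widetilde m_r-m_\ell]$. In $\widehat Z_{\rm sing}$ they appear with exponent $-1$ for every $r\in\{1,\ldots,N_f\}$. In $\widehat{\mathcal M}_{\mathcal J}$ they appear with exponent $+1$ for $r\in\mathcal J$. Splitting $\{1,\ldots,N_f\}=\mathcal I\sqcup\mathcal J$, the $\mathcal J$ part cancels and leaves $\prod_{i\in\mathcal I}\prod_\ell\Theta[-\widetilde m_i-m_\ell]^{-1}$, as in the electric residue.

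\textbf{$b$-sector factors.} Treat $\Theta[-\widetilde m_r-b_\alpha]\,\Theta[\widetilde m_r-\widetilde b_\alpha]$ the same way. These carry exponent $+1$ for all $r$ in $\widehat Z_{\rm sing}$ and exponent $-1$ for $r\in\mathcal J$ in $\widehat{\mathcal M}_{\mathcal J}$, so only the $i\in\mathcal I$ factors survive.

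\textbf{Off-diagonal factors.} The mixed factors $\prod_{i\in\mathcal I,\,j\in\mathcal J}\Theta[\widetilde m_i-\widetilde m_j]^{-1}$ appear identically on both sides, with the same argument orientation, since the lemma's definitions were written that way. They match trivially.

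\textbf{Determinant pair.} Substitute $S_{\mathcal J}=\widetilde M-S_{\mathcal I}$ into the two determinant thetas of $\widehat{\mathcal M}_{\mathcal J}$. This gives
\begin{equation}
\Theta[-S_{\mathcal J}-\lambda]=\Theta[S_{\mathcal I}-\widetilde M-\lambda],
\qquad
\Theta[S_{\mathcal J}-M-\widetilde M-\lambda]=\Theta[-S_{\mathcal I}-M-\lambda],
\end{equation}
which are exactly the two determinant thetas of $\widehat{\mathcal E}_{\mathcal I}$, with their roles exchanged.

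The only genuine subtlety is bookkeeping. I would check that each factor of $\widehat Z_{\rm sing}$ ranges over all $N_f$ labels, so that the complement split is exact, and that the argument orientations (for instance $-\widetilde m-b$ versus $\widetilde m+b$) agree literally on both sides. This matters because no reflection identity for $\Theta$ is being assumed; if some orientation disagreed, one would need to invoke $\Theta(-x;\omega)$ versus $\Theta(x;\omega)$ relations, but with the definitions as stated no such step should arise.

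The Bernoulli phases and the universal $(\varphi_0/\Theta'_0)$ factors were stripped by hypothesis, so they play no role here. Multiplying the four matched groups together yields \eqref{eq:theta-only-sectorwise-main}.
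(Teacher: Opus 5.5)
Your proposal is correct and follows the paper's proof essentially verbatim. The determinant thetas match via $S_{\mathcal I}+S_{\mathcal J}=\widetilde M$, the off-diagonal factors agree after reordering, and the meson and $b$-sector factors cancel over $\mathcal J$ by the complement split. The only cosmetic difference is that the paper bundles the meson and spectator factors into a single per-flavor factor $X_r$, with $\widehat Z_{\rm sing}=\prod_r X_r$, whereas you treat them as two separate groups.
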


\begin{proof}
The determinant factors agree because \(S_{\mathcal I}+S_{\mathcal J}=\widetilde M\):
\begin{equation}
-S_{\mathcal J}-\lambda=S_{\mathcal I}-\widetilde M-\lambda,
\qquad
S_{\mathcal J}-M-\widetilde M-\lambda=-S_{\mathcal I}-M-\lambda .
\end{equation}
The off-diagonal factors are also identical after reordering the finite product
\(\prod_{j\in\mathcal J}\prod_{i\in\mathcal I}=\prod_{i\in\mathcal I}\prod_{j\in\mathcal J}\).
It remains to check the meson and spectator factors.  For each flavor label set
\begin{equation}
X_r:=
\left(\prod_{\ell=1}^{N_f}\frac{1}{\Theta[-\widetilde m_r-m_\ell]}\right)
\left(\prod_{\alpha=1}^{n}\Theta[-\widetilde m_r-b_\alpha]\,
\Theta[\widetilde m_r-\widetilde b_\alpha]\right).
\end{equation}
The theta-only singlet factor is \(\widehat Z_{\rm sing}=\prod_{r=1}^{N_f}X_r\), while the meson/spectator part of the magnetic residue contributes \(\prod_{j\in\mathcal J}X_j^{-1}\). Hence
\begin{equation}
\widehat Z_{\rm sing}\prod_{j\in\mathcal J}X_j^{-1}
=
\left(\prod_{r=1}^{N_f}X_r\right)
\left(\prod_{j\in\mathcal J}X_j^{-1}\right)
=
\prod_{i\in\mathcal I}X_i,
\end{equation}
which is exactly the meson/spectator part of \(\widehat{\mathcal E}_{\mathcal I}\).  This proves the sector identity.
\end{proof}

\begin{remark}[Theta-only versus elliptic parent integrand]
Lemma~\ref{lem:theta-only-complementary-residues} shows that the complementary residue identity itself does not use the spectator balancing condition \eqref{eq:UNc-balancing-reint}; the spectator fugacities enter only through the flavor-factorized variables \(X_r\).  The complementary-residue identity is therefore independent of the elliptic properties of \(\Theta\): it is a finite algebraic identity.  The balancing condition is needed earlier for the parent Bernoulli-decorated cylinder integrand to be single-valued on the auxiliary gauge-fugacity torus and hence to define the JK contour problem in the \(\omega\)-elliptic presentation used here.
\end{remark}

\begin{theorem}[Determinant-pair cylinder duality]
\label{thm:determinant-pair-cylinder-duality}
Let \(N_f>N_c\), set \(n=N_f-N_c\), and consider the electric and magnetic meromorphic forms \(Z_{\rm el}\) and \(Z_{\rm mag}\) defined in \eqref{eq:Zel-UNc-reint} and \eqref{eq:Zmag-UNc-reint-completed}, together with the singlet factor \(Z_{\rm sing}=Z_MZ_B\) in \eqref{eq:ZM-UNc-reint}--\eqref{eq:ZB-UNc-reint}.  Assume the determinant shifts are fixed by \eqref{eq:M-Mtilde}, the spectator fugacities obey \eqref{eq:UNc-balancing-reint}, and all flavor and spectator parameters are generic: the selected JK intersections are simple, no non-selected zero or pole collides with them, no resonance hyperplane of the mixed boundary problem passes through them, and no pole lies on the boundary of the chosen \(\omega\)-elliptic fundamental domain.  In the chamber \(\eta=(1,\ldots,1)\), with the pole normalization fixed by \eqref{eq:theta-prime-varphi0}, one has
\begin{equation}
\mathcal Z_{\rm el}^{U(N_c)}
=
\left(\frac{\varphi_0}{\Theta'_0}\right)^{N_c-n}
\mathcal Z_{\rm mag,full}^{U(n)} .
\label{eq:thm-det-pair-cylinder-duality}
\end{equation}
Equivalently, after removing the universal \(\tau,\omega\)-dependent normalization of simple poles,
\begin{equation}
\mathcal Z_{\rm el}^{U(N_c)}
\doteq
\mathcal Z_{\rm mag,full}^{U(N_f-N_c)} .
\end{equation}
\end{theorem}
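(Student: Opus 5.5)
The proof reduces both sides to finite sums of residues and matches them sector by sector via the complement map. First, I invoke Lemma~\ref{lem:selected-JK-poles-detpair}. In the chamber \(\eta=(1,\ldots,1)\) it gives:
\begin{equation}
\mathcal Z_{\rm el}^{U(N_c)}=\Bigl(\tfrac{\varphi_0}{\Theta'_0}\Bigr)^{N_c}\sum_{|\mathcal I|=N_c}\mathcal E_{\mathcal I},
\qquad
\mathcal Z_{\rm mag}^{U(n)}=\Bigl(\tfrac{\varphi_0}{\Theta'_0}\Bigr)^{n}\sum_{|\mathcal J|=n}\mathcal M_{\mathcal J}.
\end{equation}
Here \(\mathcal E_{\mathcal I}\) is \eqref{eq:electric-residue-UNc-reint}. \(\mathcal M_{\mathcal J}\) is the analogous evaluation of \eqref{eq:Zmag-UNc-reint-completed} at \(v_r=-\widetilde m_j\), \(j\in\mathcal J\), with the selected factors \(z_{\rm RR}[v_r+\widetilde m_j]\) stripped. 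Each selected simple pole contributes \(\varphi_0/\Theta'_0\), with Weyl factors and unit multiplicity handled by the lemma. I would write \(\mathcal M_{\mathcal J}\) out explicitly. Its factors are the two determinant blocks at \(\mp S_{\mathcal J}\). The residual vector-multiplet RR factors \(z_{\rm RR}[\widetilde m_i-\widetilde m_j]\) remain for \(i\in\mathcal I\), \(j\in\mathcal J\), produced by the reflected inverse relation \eqref{eq:zDDzRR-inverse} as in the lemma. The other factors are \(z_{\rm DD}[\widetilde m_j+m_\ell]\) and the \(b\)-sector. After using \eqref{eq:UNc-reint-identity-pair} and the same-argument contact pairs, the \(b\)-sector collapses to \(z_{\rm RR}[-\widetilde m_j-b_\alpha]z_{\rm RR}[\widetilde m_j-\widetilde b_\alpha]\). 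Since \(\mathcal I\mapsto\mathcal I^c\) is a bijection (again by the lemma), the theorem reduces to the sectorwise identity
\begin{equation}
\Bigl(\tfrac{\varphi_0}{\Theta'_0}\Bigr)^{N_c}\mathcal E_{\mathcal I}
=\Bigl(\tfrac{\varphi_0}{\Theta'_0}\Bigr)^{N_c}Z_{\rm sing}\,\mathcal M_{\mathcal I^c},
\end{equation}
which produces exactly the prefactor \((\varphi_0/\Theta'_0)^{N_c-n}\) relative to \(\mathcal Z_{\rm mag,full}=Z_{\rm sing}\mathcal Z_{\rm mag}\).

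Second, I split each block into its theta part and its Bernoulli phase. Using \(z_{\rm DD}(u)=e^{-\frac{\ii\pi}{3}P_3(-u)}\Theta(-u)\) and \(z_{\rm RR}(u)=e^{\frac{\ii\pi}{3}P_3(u)}/\Theta(u)\), the theta parts of \(\mathcal E_{\mathcal I}\), \(\mathcal M_{\mathcal J}\) and \(Z_{\rm sing}\) become \(\widehat{\mathcal E}_{\mathcal I}\), \(\widehat{\mathcal M}_{\mathcal J}\) and \(\widehat Z_{\rm sing}\) of Lemma~\ref{lem:theta-only-complementary-residues}. That lemma then gives the theta identity \(\widehat{\mathcal E}_{\mathcal I}=\widehat Z_{\rm sing}\widehat{\mathcal M}_{\mathcal I^c}\) directly. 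The bookkeeping here must confirm that the arguments line up, for instance \(z_{\rm DD}[\widetilde m_i+b_\alpha]\to\Theta[-\widetilde m_i-b_\alpha]\).

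Third, and this is the step I expect to be the main obstacle, I match the cubic phases. The plan mirrors the proof of the theta lemma. Define a per-flavor phase \(\Pi_r\) collecting the \(P_3\) terms attached to label \(r\) from the meson and spectator blocks, so that the singlet phase is \(\prod_r\Pi_r\) and the magnetic meson/spectator phase is \(\prod_{j\in\mathcal J}\Pi_j^{-1}\). This needs the reflection property \(P_3(-u)\) versus \(P_3(u)\) implicit in \eqref{eq:zDDzRR-inverse}, namely \(P_3(\omega-u)=P_3(u)\) together with \(z_{\rm DD}(u)z_{\rm RR}(-u)=1\). The identity must hold exactly for DD versus RR of the same argument, e.g. \(z_{\rm DD}[\widetilde m_j+m_\ell]\) against \(z_{\rm RR}[-\widetilde m_j-m_\ell]^{-1}\). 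If it does, the phases telescope over flavors exactly as the \(X_r\) do.

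The determinant blocks match by \(S_{\mathcal I}+S_{\mathcal J}=\widetilde M\). The off-diagonal \(z_{\rm RR}[\widetilde m_i-\widetilde m_j]\) factors are identical on both sides. If a residual discrepancy arises from the quadratic and linear pieces of \(P_3\) (cf. \eqref{eq:intro-theta-P3-def}), I would compute it as a polynomial in \(\widetilde m\), \(m\), \(b\), \(\widetilde b\). I would then show it vanishes using the balancing condition \eqref{eq:UNc-balancing-reint} and \(n=N_f-N_c\). These are the same conditions that make the integrand \(\omega\)-elliptic, so the cancellation of gauge-dependent quadratic terms used there should carry over at the poles. Summing the sectorwise identity over \(\mathcal I\) yields \eqref{eq:thm-det-pair-cylinder-duality}.
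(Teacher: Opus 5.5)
Your proof is correct and follows the paper's skeleton. You use Lemma~\ref{lem:selected-JK-poles-detpair} to reduce both sides to sums over selected poles. You match sectors through the bijection \(\mathcal J=\mathcal I^c\) and prove \(Z_{\rm sing}\mathcal M_{\mathcal J}=\mathcal E_{\mathcal J^c}\) sector by sector. The powers of \(\varphi_0/\Theta'_0\) then account for the prefactor \((\varphi_0/\Theta'_0)^{N_c-n}\).

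The one real difference is how the sector identity is proved. You split every block into a theta part and a Bernoulli phase. You import the theta identity from Lemma~\ref{lem:theta-only-complementary-residues}, and then treat the phases as a separate step that you expect to be the main obstacle. The paper instead does every cancellation directly on the full blocks \(z_{\rm DD}\) and \(z_{\rm RR}\):
\begin{itemize}
\item mesons against the magnetic DD fundamentals;
\item \(Z_B\) against the \(b\)-sector;
\item the vector multiplet against the \(\mathcal J\)--\(\mathcal J\) RR factors;
\item determinant blocks via \(S_{\mathcal I}+S_{\mathcal J}=\widetilde M\).
\end{itemize}

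Working with full blocks makes your third step automatic. The relation \(z_{\rm DD}(x)z_{\rm RR}(-x)=1\) holds identically with phases included: the DD phase at \(x\) is \(e^{-\frac{\ii\pi}{3}P_3(-x)}\), and the RR phase at \(-x\) is \(e^{+\frac{\ii\pi}{3}P_3(-x)}\). The determinant and off-diagonal factors also coincide as full blocks. So no symmetry such as \(P_3(\omega-u)=P_3(u)\) is needed, and no residual polynomial ever appears. Your fallback of cancelling a residual using \eqref{eq:UNc-balancing-reint} and \(n=N_f-N_c\) is never invoked. This agrees with the paper's remark that the complementary-residue identity does not use balancing.

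A small point of bookkeeping: in the \(b\)-sector collapse only the reflected pair \eqref{eq:UNc-reint-identity-pair} is used. The single factor \(z_{\rm DD}[v_r-b_\alpha]\) can be paired only once, so you do not also use the same-argument contact pair.

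Overall, the full-block route is shorter and shows directly that the identity is purely algebraic. Your split only isolates the pure theta identity, which the paper already records as a separate lemma.
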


\begin{proof}
By Lemma~\ref{lem:selected-JK-poles-detpair}, the selected magnetic poles are indexed by subsets
\begin{equation}
\mathcal J\subset\{1,\ldots,N_f\},
\qquad
|\mathcal J|=n,
\end{equation}
with unit multiplicity in the normalized JK measure.  We set
\begin{equation}
\mathcal I=\mathcal J^c,
\qquad
|\mathcal I|=N_c .
\end{equation}

At \(v_r=-\widetilde m_j\), \(j\in\mathcal J\), the determinant-type factors give
\begin{equation}
z_{\rm DD}\!\left[-\sum_r v_r+\lambda\right]
\longrightarrow
z_{\rm DD}\!\left[\lambda+\sum_{j\in\mathcal J}\widetilde m_j\right]
=
z_{\rm DD}[-S_{\mathcal I}+\widetilde M+\lambda],
\end{equation}
and
\begin{equation}
z_{\rm DD}\!\left[\sum_r v_r+M+\widetilde M+\lambda\right]
\longrightarrow
z_{\rm DD}\!\left[\lambda+M+\widetilde M-\sum_{j\in\mathcal J}\widetilde m_j\right]
=
z_{\rm DD}[S_{\mathcal I}+M+\lambda].
\end{equation}

For \(j\in\mathcal J\), the meson factors cancel the magnetic DD chirals:
\begin{equation}
z_{\rm DD}[-v_r+m_\ell]\big|_{v_r=-\widetilde m_j}
=
z_{\rm DD}[\widetilde m_j+m_\ell],
\end{equation}
and hence
\begin{equation}
z_{\rm RR}[-m_\ell-\widetilde m_j]\,
z_{\rm DD}[\widetilde m_j+m_\ell]=1,
\qquad j\in\mathcal J .
\end{equation}
The remaining meson factors are
\begin{equation}
\prod_{i\in\mathcal I}\prod_{\ell=1}^{N_f}
z_{\rm RR}[-\widetilde m_i-m_\ell].
\label{eq:UNc-reint-meson-leftover}
\end{equation}

Similarly,
\begin{equation}
z_{\rm RR}[-v_r-\widetilde b_\alpha]\big|_{v_r=-\widetilde m_j}
=
z_{\rm RR}[\widetilde m_j-\widetilde b_\alpha],
\end{equation}
and
\begin{equation}
z_{\rm DD}[-\widetilde m_j+\widetilde b_\alpha]\,
z_{\rm RR}[\widetilde m_j-\widetilde b_\alpha]=1,
\qquad j\in\mathcal J .
\end{equation}
For the \(b_\alpha\)-sector set
\begin{equation}
y_{j\alpha}:=\widetilde m_j+b_\alpha .
\end{equation}
At \(v_r=-\widetilde m_j\), the magnetic factors are
\begin{equation}
z_{\rm RR}[v_r-b_\alpha]\,
z_{\rm DD}[v_r-b_\alpha]\,
z_{\rm RR}[-v_r+b_\alpha]
=
z_{\rm RR}[-y_{j\alpha}]\,
z_{\rm DD}[-y_{j\alpha}]\,
z_{\rm RR}[y_{j\alpha}] .
\end{equation}
Using the reflected inverse relation \(z_{\rm DD}(x)z_{\rm RR}(-x)=1\),
\begin{equation}
z_{\rm DD}[-y_{j\alpha}]\,
z_{\rm RR}[y_{j\alpha}]=1,
\end{equation}
and
\begin{equation}
z_{\rm DD}[y_{j\alpha}]\,
z_{\rm RR}[-y_{j\alpha}]=1 .
\end{equation}
After multiplication by \(Z_B\), the surviving \(b_\alpha\)-sector is
\begin{equation}
\prod_{i\in\mathcal I}\prod_{\alpha=1}^{n}
z_{\rm DD}[\widetilde m_i+b_\alpha]\,
z_{\rm DD}[-\widetilde m_i+\widetilde b_\alpha].
\label{eq:UNc-reint-b-leftover}
\end{equation}

By Lemma~\ref{lem:selected-JK-poles-detpair}, the magnetic vector multiplet cancels the \(i,j\in\mathcal J\), \(i\neq j\), RR terms.  The remaining magnetic RR chirals give
\begin{equation}
z_{\rm RR}[v_r+\widetilde m_i]
\longrightarrow
z_{\rm RR}[\widetilde m_i-\widetilde m_j] .
\end{equation}
The remaining factor is
\begin{equation}
\prod_{i\in\mathcal I}\prod_{j\in\mathcal J}
z_{\rm RR}[\widetilde m_i-\widetilde m_j],
\label{eq:UNc-reint-offdiag-leftover}
\end{equation}
Combining \eqref{eq:UNc-reint-meson-leftover}, \eqref{eq:UNc-reint-b-leftover}, and \eqref{eq:UNc-reint-offdiag-leftover} gives
\begin{equation}
Z_{\rm sing}\,\mathcal M_{\mathcal J}
=
\mathcal E_{\mathcal J^c},
\label{eq:UNc-reint-termwise}
\end{equation}
where \(\mathcal M_{\mathcal J}\) denotes the magnetic residue of \eqref{eq:Zmag-UNc-reint-completed}. Summing over \(\mathcal J\) gives
\begin{equation}
\operatorname{JK}_{U(N_c),\eta}Z_{\rm el}
=
\left(\frac{\varphi_0}{\Theta'_0}\right)^{N_c-n}
Z_{\rm sing}\,
\operatorname{JK}_{U(n),\eta}Z_{\rm mag}.
\label{eq:UNc-reint-final}
\end{equation}
Equivalently, up to an overall normalization depending only on \(\tau\) and \(\omega\),
\begin{equation}
\operatorname{JK}_{U(N_c),\eta}Z_{\rm el}
\doteq
Z_{\rm sing}\,
\operatorname{JK}_{U(N_f-N_c),\eta}Z_{\rm mag}.
\end{equation}
This is \eqref{eq:thm-det-pair-cylinder-duality} written in the shorthand JK notation.
\end{proof}

For \(n=0\), equivalently \(N_f=N_c\), the magnetic gauge group is \(U(0)\).  The rank-changing identity then becomes a finite-dimensional theta-function evaluation.  We write this endpoint explicitly, without using the shorthand cylinder blocks, after the \(SU\) projection in Subsection~\ref{subsubsec:theta-U0-identity}.

For \(n>0\), the dual is the magnetic \(U(N_f-N_c)\) theory with singlet sector \(Z_MZ_B\), evaluated in the JK presentation \eqref{eq:Zmag-UNc-reint-completed}.

\subsubsection{\texorpdfstring{\(SU(N_c)\leftrightarrow SU(N_f-N_c)\) projection}{SU(Nc) <-> SU(Nf-Nc) projection}}
\label{subsubsec:SU-projection}

The projection from \(U(N_c)\) and \(U(n)\) to \(SU(N_c)\) and \(SU(n)\) is implemented by inserting periodic Dirac delta factors imposing the vanishing of the corresponding center fugacity. On the electric side one inserts
\begin{equation}
\delta_T\!\left(\sum_{a=1}^{N_c}u_a\right)
\end{equation}
inside the electric integral, and on the magnetic side
\begin{equation}
\delta_T\!\left(\sum_{r=1}^{n}v_r\right)
\end{equation}
inside the magnetic integral. These delta factors introduce no additional meromorphic poles and are evaluated on the JK poles.  Equivalently, the projected partition functions are
\begin{align}
\mathcal Z_{\rm el}^{SU(N_c)}
&:=
\oint_{\mathcal C_\eta^{\rm el}}
\prod_{a=1}^{N_c}\frac{\dd u_a}{2\pi\ii}\,
\delta_T\!\left(\sum_{a=1}^{N_c}u_a\right)Z_{\rm el}(u),
\nn\\
\mathcal Z_{\rm mag,full}^{SU(n)}
&:=
Z_{\rm sing}
\oint_{\mathcal C_\eta^{\rm mag}}
\prod_{r=1}^{n}\frac{\dd v_r}{2\pi\ii}\,
\delta_T\!\left(\sum_{r=1}^{n}v_r\right)Z_{\rm mag}(v).
\label{eq:SU-projected-partition-functions}
\end{align}

In the chamber \eqref{eq:UNc-selected-chamber} the electric delta gives
\begin{equation}
\delta_T(S_{\mathcal I})
=
\delta_T\!\left(\sum_{i\in\mathcal I}\widetilde m_i\right),
\end{equation}
while the magnetic delta gives
\begin{equation}
\delta_T\!\left(-\sum_{j\in\mathcal J}\widetilde m_j\right).
\end{equation}
Since \(\mathcal I=\mathcal J^c\), the two constraints agree on the flavor \(SU(N_f)\) slice
\begin{equation}
\widetilde M=\sum_{i=1}^{N_f}\widetilde m_i=0
\end{equation}
modulo the period lattice of the torus. Thus the \(SU\) projection is compatible with the complementarity \(\mathcal I=\mathcal J^c\) on the traceless flavor slice.

Equivalently, one may solve the \(SU\) constraint explicitly before applying the JK prescription.


\subsubsection{Theta-function confining identity}
\label{subsubsec:theta-U0-identity}

\begin{proposition}[Theta-function confining endpoint]
\label{prop:theta-U0-confining-endpoint}
The endpoint of the rank-changing identity with magnetic gauge group \(U(0)\) gives a
pure theta-function evaluation.  Set
\begin{equation}
N_f=N_c=:N,
\qquad
M=\sum_{i=1}^{N}m_i,
\qquad
\widetilde M=\sum_{i=1}^{N}\widetilde m_i .
\label{eq:theta-U0-mass-defs}
\end{equation}
There are no \(b_\alpha,\widetilde b_\alpha\) fugacities in this endpoint.  With the mass
identification \eqref{eq:theta-U0-mass-defs}, the Bernoulli phase in the block expression is
independent of the integration variables.  The remaining constant is cancelled by the standard
normalization of the selected simple poles.  The result can therefore be written as the following
identity involving theta functions only, in the same special-function family as classical theta and elliptic hypergeometric evaluations~\cite{Spiridonov:2001ig,Rains:2003}:
\begin{equation}
\begin{aligned}
&\frac{1}{N!}
\oint_{\mathcal C_{\eta}^{(N)}}
\prod_{a=1}^{N}
\frac{\mathrm d u_a}{2\pi\ii}\;
\frac{
\Theta\!\left(-\sum_{a=1}^{N}u_a-M-\lambda;\omega\right)
\Theta\!\left(\sum_{a=1}^{N}u_a-\widetilde M-\lambda;\omega\right)
}{
\displaystyle\prod_{a=1}^{N}\prod_{i=1}^{N}
\Theta(-u_a-m_i;\omega)
\Theta(u_a-\widetilde m_i;\omega)
}
\prod_{\substack{a,b=1\\ a\neq b}}^{N}
\Theta(-u_a+u_b;\omega)
\\[1mm]
&\qquad =
\frac{1}{(\Theta'_0)^N}
\frac{
\Theta(-\lambda;\omega)
\Theta(-\lambda-M-\widetilde M;\omega)
}{
\displaystyle\prod_{i,j=1}^{N}
\Theta(-m_i-\widetilde m_j;\omega)
} .
\end{aligned}
\label{eq:theta-U0-evaluation}
\end{equation}
Here \(\mathcal C_{\eta}^{(N)}\) is the JK contour in the chamber
\begin{equation}
\eta=(1,\ldots,1),
\end{equation}
selecting the poles
\begin{equation}
u_a=\widetilde m_{i_a},
\qquad
\{i_1,\ldots,i_N\}=\{1,\ldots,N\},
\label{eq:theta-U0-selected-poles}
\end{equation}
while the Vandermonde theta factor eliminates coincident labels.  Equation~\eqref{eq:theta-U0-evaluation}
is the confining endpoint of the \(U(N_c)\leftrightarrow U(N_f-N_c)\) residue identity.
\end{proposition}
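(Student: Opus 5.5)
The plan is to evaluate the left-hand side of \eqref{eq:theta-U0-evaluation} directly by the JK prescription, following Lemma~\ref{lem:selected-JK-poles-detpair} with \(n=0\). In the chamber \(\eta=(1,\ldots,1)\) only the denominator factors \(\Theta(u_a-\widetilde m_i;\omega)\) have charge \(+1\), so the selected poles are \(u_a=\widetilde m_{i_a}\). The factors \(\Theta(-u_a-m_i;\omega)\) have charge \(-1\) and belong to the opposite chamber; numerator thetas are holomorphic. When two labels coincide, \(i_a=i_b\), the Vandermonde factor \(\Theta(-u_a+u_b;\omega)\) vanishes while only one new pole hyperplane is gained per variable, so those would-be residues are zero. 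This leaves the \(N!\) orderings of a permutation, which cancel the \(1/N!\). Hence the left-hand side equals a single sector, namely the one with \(\mathcal I=\{1,\ldots,N\}\).

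Next I compute that sector. Near \(u_a=\widetilde m_{i_a}\) we have \(\Theta(u_a-\widetilde m_{i_a};\omega)\simeq\Theta'_0\,(u_a-\widetilde m_{i_a})\), and each such simple pole gives the factor \(1/\Theta'_0\), for a total of \((\Theta'_0)^{-N}\).

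The remaining denominator factors \(\Theta(\widetilde m_{i_a}-\widetilde m_j;\omega)\) with \(j\neq i_a\) range over all ordered pairs of distinct labels. The Vandermonde factor evaluated at the pole gives \(\prod_{a\neq b}\Theta(-\widetilde m_{i_a}+\widetilde m_{i_b};\omega)\), which runs over exactly the same set of ordered pairs. The two products therefore cancel identically. This is the \(\mathcal J=\emptyset\) case of the off-diagonal cancellation in Lemma~\ref{lem:theta-only-complementary-residues}, and the cancellation is exact with no elliptic transformation needed.

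The factors that survive are as follows:
\begin{itemize}
\item the meson denominators \(\prod_{i,\ell}\Theta(-\widetilde m_i-m_\ell;\omega)\);
\item the determinant numerators with \(\sum_a u_a\to\widetilde M\), which become \(\Theta(-\widetilde M-M-\lambda;\omega)\) and \(\Theta(\widetilde M-\widetilde M-\lambda;\omega)=\Theta(-\lambda;\omega)\).
\end{itemize}
This is exactly the right-hand side. It coincides with \(\widehat{\mathcal E}_{\{1,\dots,N\}}=\widehat Z_{\rm sing}\widehat{\mathcal M}_{\emptyset}\) from Lemma~\ref{lem:theta-only-complementary-residues} at \(n=0\).

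The main obstacle is the sign and orientation bookkeeping rather than the algebra. I would check three things:
\begin{itemize}
\item that the oriented JK convention assigns \(+1/\Theta'_0\), and not \(-1/\Theta'_0\), to each pole of \(1/\Theta(u_a-\widetilde m_i)\);
\item that permuting the labels \(i_a\) gives identical contributions, since the Vandermonde product over ordered pairs is symmetric;
\item that the \(\omega\)-ellipticity needed to define the JK contour holds.
\end{itemize}

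For the ellipticity check I would show that the multiplier under \(u_a\mapsto u_a+\omega\) cancels. This uses the quasi-periodicity \(\Theta(x+\omega;\omega)=-e^{-2\pi\ii x}\Theta(x;\omega)\) applied to:
\begin{itemize}
\item the \(N\) numerator thetas, weighted against the \(2N\) denominator thetas of each \(u_a\);
\item the two determinant thetas;
\item the \(2(N-1)\) Vandermonde thetas.
\end{itemize}
The mass dependence enters only through \(M\) and \(\widetilde M\), as argued before Lemma~\ref{lem:selected-JK-poles-detpair}.

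Finally, I would remark that the Bernoulli phases of the block version are gauge independent. With \(n=0\), Theorem~\ref{thm:determinant-pair-cylinder-duality} gives the normalization \((\varphi_0/\Theta'_0)^{N}\), and the factors \(\varphi_0^N\) are exactly the stripped pole phases. So the theta-only statement is consistent with the theorem.
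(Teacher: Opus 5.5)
Your proposal is correct and follows essentially the paper's route. The paper's proof simply calls this the $n=0$ case of Theorem~\ref{thm:determinant-pair-cylinder-duality}, whose content is exactly the single-sector JK computation you carry out: permutation poles $u_a=\widetilde m_{i_a}$, the Vandermonde killing coincident labels and cancelling the off-diagonal thetas exactly, and $S_{\mathcal I}=\widetilde M$ in the determinant factors. Your direct theta-only evaluation has the mild advantage of avoiding the Bernoulli-phase bookkeeping, and it does not rely on the theorem, which is formally stated only for $N_f>N_c$.
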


\begin{proof}
This is the \(n=0\) specialization of Theorem~\ref{thm:determinant-pair-cylinder-duality}.  The magnetic gauge integral is empty, the spectator sectors disappear, and the remaining electric residues are labelled by permutations of the \(N\) anti-fundamental poles.  The Vandermonde theta factor removes coincident labels, while the selected-pole normalization cancels the residual Bernoulli phase, leaving precisely the theta-function evaluation \eqref{eq:theta-U0-evaluation}.
\end{proof}

\subsection{Complete intersections on the cylinder}
\label{sec:complete-intersections-cylinder}

For the GLSM of a complete intersection Calabi--Yau in projective space, the relevant one-loop building block is the regularized determinant
\begin{equation}
z_{\rm RR}^{2d}(u)
=
-\frac{e^{\pi i u}}{1-e^{2\pi \ii u}}
=
\frac{1}{2 \ii\sin(\pi u)}.
\end{equation}
The Dirichlet block is normalized with the reflected inverse convention inherited from the four-dimensional blocks:
\begin{equation}
z_{\rm DD}^{2d}(u)
=
e^{-\pi i u}\bigl(1-e^{2\pi \ii u}\bigr)
=
-2\ii\sin(\pi u).
\end{equation}
Thus
\begin{equation}
z_{\rm DD}^{2d}(u)z_{\rm RR}^{2d}(-u)=1,
\qquad
z_{\rm DD}^{2d}(u)z_{\rm RR}^{2d}(u)=-1.
\end{equation}
The elementary monodromies and reflection properties are
\begin{equation}
z_{\rm RR}^{2d}(u+1)=-z_{\rm RR}^{2d}(u),
\qquad
z_{\rm DD}^{2d}(u+1)=-z_{\rm DD}^{2d}(u),
\end{equation}
and
\begin{equation}
z_{\rm RR}^{2d}(-u)=-z_{\rm RR}^{2d}(u),
\qquad
z_{\rm DD}^{2d}(-u)=-z_{\rm DD}^{2d}(u).
\end{equation}
The block \(z_{\rm RR}^{2d}\) has simple poles at \(u\in\mathbb Z\), with leading behavior
\begin{equation}
z_{\rm RR}^{2d}(u)
=
\frac{1}{2\pi \ii\,u}+O(1)
\qquad\text{as }u\to 0.
\end{equation}
More generally, for an integer charge \(Q\in\mathbb Z_{\ne 0}\) and a shift \(c\in\mathbb C\), the function \(z_{\rm RR}^{2d}(Q u+c)\) has simple poles at the \(|Q|\) points
\begin{equation}
u_{*,\ell}=\frac{\ell-c}{Q},
\qquad
\ell=0,\ldots,|Q|-1,
\end{equation}
with ordinary residue
\begin{equation}
\operatorname*{Res}_{u=u_{*,\ell}}
z_{\rm RR}^{2d}(Q u+c)
=
\frac{(-1)^{\ell}}{2\pi \ii\,Q}.
\label{eq:ci-local-residue-2d}
\end{equation}
The sign \((-1)^\ell\) reflects the monodromy of \(z_{\rm RR}^{2d}\) under integer shifts of its argument. Throughout this subsection we absorb the factor \(1/(2\pi \ii)\) in the contour measure, so that the residue contribution of this pole is
\begin{equation}
\frac{(-1)^\ell}{Q}.
\end{equation}
We use the oriented rank-one JK convention
\begin{equation}
\operatorname{JK}_{+}
=
\sum_{\text{positive-charge poles}}\operatorname{Res},
\qquad
\operatorname{JK}_{-}
=
-\sum_{\text{negative-charge poles}}\operatorname{Res}.
\label{eq:ci-oriented-JK-definition}
\end{equation}
With this convention, the two chambers compute the same partition function whenever the residue theorem applies:
\begin{equation}
\operatorname{JK}_{+}Z=\operatorname{JK}_{-}Z.
\end{equation}

We consider a \(U(1)\) GLSM with \(N\) chiral multiplets \(X_i\), \(i=1,\ldots,N\), of gauge charge \(+1\), and \(r\) chiral multiplets \(P_A\), \(A=1,\ldots,r\), of gauge charge \(-\mathfrak n_A\), with
\begin{equation}
\mathfrak n_A\in\mathbb Z_{>0}.
\end{equation}
The superpotential is
\begin{equation}
W=\sum_{A=1}^{r}P_A\,G_A(X_1,\ldots,X_N),
\end{equation}
where each \(G_A\) is homogeneous of degree \(\mathfrak n_A\). The Calabi--Yau condition is
\begin{equation}
\sum_{A=1}^{r}\mathfrak n_A=N.
\label{eq:ci-CY-condition}
\end{equation}
Geometrically, the positive JK phase realizes the complete intersection
\begin{equation}
X_{\mathfrak n_1,\ldots,\mathfrak n_r}
=
\{G_1=\cdots=G_r=0\}
\subset \mathbb{CP}^{N-1}.
\end{equation}
Denoting by \(u\) the \(U(1)\) gauge fugacity and by \(x_i,\mathfrak p_A\) the flavor fugacities of \(X_i,P_A\), the cylinder one-loop integrand is
\begin{equation}
Z_{\mathfrak n_1,\ldots,\mathfrak n_r}^{2d}(u)
=
\prod_{A=1}^{r}
z_{\rm RR}^{2d}\bigl[-\mathfrak n_Au+\mathfrak p_A\bigr]
\prod_{i=1}^{N}
z_{\rm RR}^{2d}\bigl[u+x_i\bigr].
\label{eq:ci-integrand}
\end{equation}
Under \(u\mapsto u+1\), the full integrand transforms as
\begin{equation}
Z_{\mathfrak n_1,\ldots,\mathfrak n_r}^{2d}(u+1)
=
(-1)^{N+\sum_{A=1}^{r}\mathfrak n_A}
Z_{\mathfrak n_1,\ldots,\mathfrak n_r}^{2d}(u).
\end{equation}
The Calabi--Yau condition \eqref{eq:ci-CY-condition} gives
\begin{equation}
N+\sum_{A=1}^{r}\mathfrak n_A=2N,
\end{equation}
and therefore
\begin{equation}
Z_{\mathfrak n_1,\ldots,\mathfrak n_r}^{2d}(u+1)=Z_{\mathfrak n_1,\ldots,\mathfrak n_r}^{2d}(u).
\end{equation}
Equivalently, in the multiplicative variable
\begin{equation}
z=e^{2\pi \ii u},
\end{equation}
the Calabi--Yau condition cancels the net gauge-dependent half-power of \(z\). The meromorphic one-form
\begin{equation}
\frac{dz}{2\pi \ii z}\,Z_{\mathfrak n_1,\ldots,\mathfrak n_r}^{2d}(z)
\end{equation}
has no extra residue at \(z=0\) or at \(z=\infty\). This is the multiplicative form of the same single-valuedness condition.

The superpotential constrains the flavor fugacities. If the monomial
\begin{equation}
P_A X_1^{k_1}\cdots X_N^{k_N},
\qquad
\sum_{i=1}^{N}k_i=\mathfrak n_A,
\end{equation}
appears in \(P_A G_A\), flavor neutrality requires
\begin{equation}
\mathfrak p_A+\sum_{i=1}^{N}k_i x_i=0.
\end{equation}
For a generic polynomial \(G_A\), this fixes most flavor symmetries. In residue computations it is convenient to keep the \(x_i\) distinct as an equivariant regulator and impose the desired superpotential constraints only after the residues have been evaluated.

The positive chamber receives contributions from the poles of the positively charged fields \(X_i\), namely
\begin{equation}
u_i^+=-x_i,
\qquad
 i=1,\ldots,N.
\end{equation}
Assuming the \(x_i\) are distinct, these poles are simple. Each pole satisfies \(u_i^++x_i=0\), so there is no additional monodromy sign. The positive-chamber JK residue is
\begin{equation}
\operatorname{JK}_{+}
Z_{\mathfrak n_1,\ldots,\mathfrak n_r}^{2d}
=
\sum_{i=1}^{N}
\left[
\prod_{A=1}^{r}
z_{\rm RR}^{2d}\bigl[\mathfrak n_Ax_i+\mathfrak p_A\bigr]
\right]
\prod_{\substack{j=1\\ j\neq i}}^{N}
z_{\rm RR}^{2d}\bigl[x_j-x_i\bigr].
\label{eq:ci-JK-plus}
\end{equation}
This is the cylinder expression for the geometric phase, localized at the poles of the fields \(X_i\) parametrizing the ambient projective space \(\mathbb{CP}^{N-1}\).

The negative chamber receives contributions from the poles of the negatively charged fields \(P_A\). For fixed \(A\), the pole equation is
\begin{equation}
-\mathfrak n_Au+\mathfrak p_A\in\mathbb Z.
\end{equation}
Modulo the periodicity \(u\sim u+1\), this gives \(\mathfrak n_A\) solutions,
\begin{equation}
u_{A,\ell}^{-}=\frac{\mathfrak p_A-\ell}{\mathfrak n_A},
\qquad
\ell=0,\ldots,\mathfrak n_A-1.
\end{equation}
At such a pole, the local residue contribution of \(z_{\rm RR}^{2d}[-\mathfrak n_Au+\mathfrak p_A]\) is
\begin{equation}
-\frac{(-1)^{\ell}}{\mathfrak n_A}.
\end{equation}
The oriented negative-chamber prescription \eqref{eq:ci-oriented-JK-definition} includes an additional minus sign, yielding a net contribution \((+1)(-1)^\ell/\mathfrak n_A\). Hence
\begin{equation}
\begin{aligned}
\operatorname{JK}_{-}
Z_{\mathfrak n_1,\ldots,\mathfrak n_r}^{2d}
=
\sum_{A=1}^{r}\frac{1}{\mathfrak n_A}
\sum_{\ell=0}^{\mathfrak n_A-1}(-1)^{\ell}
&
\left[
\prod_{i=1}^{N}
z_{\rm RR}^{2d}
\left[
x_i+\frac{\mathfrak p_A-\ell}{\mathfrak n_A}
\right]
\right]
\\
&\times
\left[
\prod_{\substack{B=1\\ B\neq A}}^{r}
z_{\rm RR}^{2d}
\left[
\mathfrak p_B-\frac{\mathfrak n_B}{\mathfrak n_A}(\mathfrak p_A-\ell)
\right]
\right].
\end{aligned}
\label{eq:ci-JK-minus}
\end{equation}
The signs \((-1)^\ell\) are the one-cycle cylinder analogue of the orbifold projection phases that appear in Landau--Ginzburg orbifold expansions of GLSM partition functions~\cite{Witten:1993yc}.

\begin{proposition}[Two-dimensional chamber equality]
\label{prop:ci-two-dimensional-chamber-equality}
Assume that the two-dimensional complete intersection cylinder integrand is single-valued on the gauge circle.  In the complete-intersection family \eqref{eq:ci-integrand}, this follows from the Calabi--Yau relation \eqref{eq:ci-CY-condition}; equivalently, it is the two-dimensional degeneration of the cancellation of the gauge-dependent Bernoulli anomaly, schematically \(\sum_s\varepsilon_s q_s^2=0\), together with the corresponding linear phase condition in the higher-dimensional block.  Assume also that all poles are simple and away from the contour at infinity.  Then the positive and negative oriented JK chambers compute the same partition function,
\begin{equation}
\operatorname{JK}_{+}Z_{\mathfrak n_1,\ldots,\mathfrak n_r}^{2d}
=
\operatorname{JK}_{-}Z_{\mathfrak n_1,\ldots,\mathfrak n_r}^{2d}.
\label{eq:ci-chamber-duality}
\end{equation}
\end{proposition}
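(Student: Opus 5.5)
The plan is to prove \eqref{eq:ci-chamber-duality} by the one-dimensional residue theorem on the gauge circle. By the Calabi--Yau condition \eqref{eq:ci-CY-condition}, the integrand \eqref{eq:ci-integrand} satisfies \(Z^{2d}(u+1)=Z^{2d}(u)\). It therefore descends to a meromorphic function of \(z=e^{2\pi\ii u}\) on \(\mathbb C^\times\), and \(\omega_Z:=\frac{dz}{2\pi\ii z}Z^{2d}\) is a meromorphic one-form on \(\mathbb{CP}^1\). Summing all residues of \(\omega_Z\) on \(\mathbb{CP}^1\) gives zero. Equivalently, one integrates \(Z^{2d}(u)\,du\) around the boundary of the strip \(0\le \operatorname{Re}u\le 1\), truncated at \(|\operatorname{Im}u|=\Lambda\). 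The vertical sides cancel by periodicity, and the horizontal sides reproduce the residues at \(z=0,\infty\).

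\textbf{Step 1: the poles in \(\mathbb C^\times\).} The finite poles are exactly those listed. The \(X_i\) poles sit at \(u=-x_i\), with residue computed from \(z_{\rm RR}^{2d}(u)=\frac{1}{2\pi\ii u}+O(1)\); these give the summands of \eqref{eq:ci-JK-plus}. The \(P_A\) poles sit at \(u^-_{A,\ell}\), with residue \(-(-1)^\ell/\mathfrak n_A\) times the regular factors, by \eqref{eq:ci-local-residue-2d} with \(Q=-\mathfrak n_A\). Simplicity and genericity of the \(x_i,\mathfrak p_A\) guarantee that these two families are disjoint simple poles and that no other singularities occur, since \(z_{\rm RR}^{2d}\) has no zeros. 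The total finite residue sum is therefore \(\operatorname{JK}_+Z-\operatorname{JK}_-Z\), using the oriented sign convention \eqref{eq:ci-oriented-JK-definition}.

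\textbf{Step 2: the points \(z=0,\infty\), which I expect to be the main obstacle.} Here I will use the explicit form \(z_{\rm RR}^{2d}(w)=-e^{\pi\ii w}/(1-e^{2\pi\ii w})\). As \(\operatorname{Im}u\to+\infty\) (\(z\to0\)), each factor \(z_{\rm RR}^{2d}[u+x_i]\) behaves like \(-e^{\pi\ii(u+x_i)}\sim z^{1/2}\). Each factor \(z_{\rm RR}^{2d}[-\mathfrak n_Au+\mathfrak p_A]\) behaves like \(e^{-\pi\ii(-\mathfrak n_Au+\mathfrak p_A)}\sim z^{\mathfrak n_A/2}\). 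The integrand therefore decays like \(z^{(N+\sum_A\mathfrak n_A)/2}=z^{N}\). As \(\operatorname{Im}u\to-\infty\), the same estimate gives growth like \(z^{-N}\). With \(N\ge1\), \(\omega_Z\) is holomorphic at both \(z=0\) and \(z=\infty\), since \(\frac{dz}{z}\cdot z^{\pm N}\) is regular there. Concretely, the horizontal contributions of the strip contour vanish exponentially as \(\Lambda\to\infty\).

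This is exactly the ``no extra residue at \(z=0,\infty\)'' statement recorded after \eqref{eq:ci-CY-condition}. The delicate points are the half-integer powers and the fact that periodicity (no net half-power) is what makes \(\omega_Z\) single-valued. I will make sure the exponent count uses the Calabi--Yau relation to obtain an integer, and I will double-check the sign conventions at \(z=\infty\).

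\textbf{Step 3: conclude.} By Steps 1 and 2, \(0=\sum_{\text{all}}\operatorname{Res}\,\omega_Z=\operatorname{JK}_+Z-\operatorname{JK}_-Z\), which is \eqref{eq:ci-chamber-duality}. Before concluding, I will check the orientation sign carefully: the residues of \(\omega_Z\) in \(z\) equal the residues of \(Z\,du\) in \(u\) once the \(1/(2\pi\ii)\) is absorbed into the measure. The superpotential constraints on the flavor fugacities may then be imposed by continuity, since both sides are meromorphic in \((x_i,\mathfrak p_A)\).
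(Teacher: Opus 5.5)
Your proof is correct and is essentially the paper's argument: the Calabi--Yau relation makes $Z^{2d}\,du$ single-valued on the gauge circle, the residue theorem makes the positive- and negative-charge residues sum to zero, and the extra sign in the oriented $\operatorname{JK}_-$ convention turns this into the chamber equality. Your Step~2 adds something the paper only asserts, namely an explicit check that $\omega_Z$ is regular at $z=0$ and $z=\infty$, where the integrand behaves as $z^{N}$ and $z^{-N}$; note that $z^{-N}$ is decay, not growth, as $|z|\to\infty$. This check is exactly what justifies dropping the horizontal sides of the strip.
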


\begin{proof}
Single-valuedness, ensured in the examples above by \eqref{eq:ci-CY-condition}, makes the integrand a meromorphic one-form on the gauge circle.  The residue theorem gives
\begin{equation}
\sum_{\text{positive-charge poles}}\operatorname{Res}
+
\sum_{\text{negative-charge poles}}\operatorname{Res}=0.
\end{equation}
The oriented JK convention \eqref{eq:ci-oriented-JK-definition} includes the compensating sign for the negative chamber.  Therefore the equality of ordinary residues is equivalent to \eqref{eq:ci-chamber-duality}.
\end{proof}

The formulas above reproduce explicitly examples such as the cubic curve in \(\mathbb{CP}^2\), the quintic threefold, and the intersection of two quadrics in \(\mathbb{CP}^3\). In these examples the equality of chambers requires the signs \((-1)^\ell\) in \eqref{eq:ci-JK-minus}.

For \(r=1\), one obtains a hypersurface of degree \(\mathfrak n=N\) in \(\mathbb{CP}^{N-1}=\mathbb{CP}^{\mathfrak n-1}\). The integrand reduces to
\begin{equation}
Z_{\mathfrak n}^{2d}(u)
=
z_{\rm RR}^{2d}\bigl[-\mathfrak n u+\mathfrak p\bigr]
\prod_{i=1}^{\mathfrak n}
z_{\rm RR}^{2d}\bigl[u+x_i\bigr].
\end{equation}
Imposing, for instance, the flavor constraint associated with the monomial \(PX_1\cdots X_{\mathfrak n}\),
\begin{equation}
\mathfrak p+\sum_{i=1}^{\mathfrak n}x_i=0,
\end{equation}
the positive-chamber residue becomes
\begin{equation}
\operatorname{JK}_{+}Z_{\mathfrak n}^{2d}
=
\sum_{i=1}^{\mathfrak n}
z_{\rm RR}^{2d}
\left[
\mathfrak n x_i-\sum_{k=1}^{\mathfrak n}x_k
\right]
\prod_{\substack{j=1\\ j\neq i}}^{\mathfrak n}
z_{\rm RR}^{2d}\bigl[x_j-x_i\bigr],
\end{equation}
while the oriented negative-chamber residue is
\begin{equation}
\operatorname{JK}_{-}Z_{\mathfrak n}^{2d}
=
\frac{1}{\mathfrak n}
\sum_{\ell=0}^{\mathfrak n-1}(-1)^{\ell}
\prod_{i=1}^{\mathfrak n}
z_{\rm RR}^{2d}
\left[
x_i-\frac{1}{\mathfrak n}\sum_{k=1}^{\mathfrak n}x_k-\frac{\ell}{\mathfrak n}
\right].
\label{eq:ci-JK-minus-hypersurface}
\end{equation}
For \(\mathfrak n=5\), this is the cylinder counterpart of the quintic GLSM. The positive chamber computes the cylinder partition function in the geometric phase of the quintic threefold
\begin{equation}
X_5\subset \mathbb{CP}^4,
\end{equation}
localized on the five torus fixed points of \(\mathbb{CP}^4\). The negative chamber computes the same cylinder partition function in the Landau--Ginzburg orbifold phase, with orbifold group \(\mathbb Z_{\mathfrak n}\) and \(\mathfrak n\) twisted sectors labelled by \(\ell=0,\ldots,\mathfrak n-1\). Since the cylinder has only one periodic holonomy direction, the orbifold sum has \(\mathfrak n\) sectors rather than the \(\mathfrak n^2\) elliptic sectors that arise on \(T^2\) in the elliptic genus.

For \(r>1\), the negative chamber is naturally interpreted as a hybrid phase. In the negative JK  chamber the fields \(P_A\) cannot all vanish, and after quotienting by the \(U(1)\) gauge symmetry they parametrize the weighted projective stack
\begin{equation}
\mathbb{WCP}^{r-1}_{[\mathfrak n_1,\ldots,\mathfrak n_r]}
:=
\bigl[(\mathbb C^r\setminus\{0\})/\mathbb C^\ast\bigr],
\end{equation}
where \(\mathbb C^\ast\) acts with weights \((\mathfrak n_1,\ldots,\mathfrak n_r)\). The sign of the gauge charges does not affect the resulting stack, since one may invert the \(\mathbb C^\ast\) parameter. The integers \(\mathfrak n_A\) need not be pairwise coprime. If
\begin{equation}
\gcd(\mathfrak n_1,\ldots,\mathfrak n_r)=g>1,
\end{equation}
then the stack has a generic stabilizer \(\mathbb Z_g\). A stronger condition, often called well-formedness, requires
\begin{equation}
\gcd(\mathfrak n_1,\ldots,\widehat{\mathfrak n}_A,\ldots,\mathfrak n_r)=1
\end{equation}
for every \(A\). This removes generic codimension-one stabilizers, but is not needed for the GLSM nor for the residue computation. Thus the negative chamber is a hybrid Landau--Ginzburg phase fibered over \(\mathbb{WCP}^{r-1}_{[\mathfrak n_1,\ldots,\mathfrak n_r]}\), with Landau--Ginzburg fiber given by the fields \(X_i\) and superpotential \(W=\sum_A P_A G_A(X)\). For \(r=1\), the base \(\mathbb{WCP}^{0}_{[\mathfrak n]}\) is a stacky point with stabilizer \(\mathbb Z_{\mathfrak n}\), recovering the familiar Landau--Ginzburg orbifold phase.

The cylinder determinant \(z_{\rm RR}^{2d}\) gives a direct \(\cyl\simeq I\times S^1\) analogue of the GLSM phase structure, compatible with the exact boundary framework for two-dimensional \(2,2\) gauge theories \cite{Hori:2013ika}. The equality \eqref{eq:ci-chamber-duality} provides a stringent consistency check of the cylinder partition function. This two-dimensional construction should be viewed as the \(\cyl\) counterpart of the four-dimensional \(\cyl\times T^2\) one-loop blocks discussed above. Starting from the full elliptic-gamma representation, the first degeneration from \(\cyl\times T^2\) to \(\cyl\times S^1\) is
\begin{equation}
\operatorname{Im}\tau\to+\infty,
\qquad
p=e^{2\pi \ii\tau}\to0.
\end{equation}
The remaining theta block has elliptic nome
\begin{equation}
q=e^{2\pi \ii\omega}.
\end{equation}
Its cylinder degeneration is
\begin{equation}
\operatorname{Im}\omega\to+\infty,
\qquad
q=e^{2\pi \ii\omega}\to0.
\end{equation}
In this limit
\begin{equation}
\Theta(u;\omega)=(e^{2\pi \ii u};q)_\infty(qe^{-2\pi \ii u};q)_\infty
\longrightarrow
1-e^{2\pi \ii u}.
\end{equation}
Thus the four-dimensional one-loop block reduces, up to Bernoulli/contact terms, to the two-dimensional block \(z_{\rm RR}^{2d}(u)\). The chamber equality \eqref{eq:ci-chamber-duality} is the two-dimensional shadow of the electric--magnetic residue identities discussed in Section~\ref{sec:duality}.

\subsection{Factorization and cap wavefunctions}
\label{sec:factorization-half-blocks}

The factorization viewpoint follows the holomorphic-block philosophy of Refs.~\cite{Pasquetti:2011fj,Beem:2012mb,Nieri:2013yra,Nieri:2013vba,Nieri:2015yia} and its four-dimensional elliptic realization on geometries such as \(D^2\times T^2\)~\cite{Longhi:2019hdh}.

The cylinder \(\cyl\times T^2\) has two boundary components, and its partition function is naturally a boundary-to-boundary kernel. At the level of one-loop determinants this kernel factorizes into two cap wavefunctions, equivalently the holomorphic blocks associated with the two caps of the cylinder. We do not repeat the three-dimensional factorization analysis: the factorization of \(3d\) interval partition functions on \(I\times T^2\simeq \cyl\times S^1\) into hemisphere blocks has been worked out recently in Ref.~\cite{Zhao:2025ixe}. We focus instead on the four-dimensional elliptic cylinder kernel and on its two-dimensional degeneration on \(\cyl\simeq I\times S^1\).

The words ``holomorphic'' and ``anti-holomorphic'' below refer to the two opposite cap orientations in the gluing. They should not be interpreted as complex conjugation.

\subsubsection{Four-dimensional factorization on \texorpdfstring{\(\cyl\times T^2\)}{Cyl x T2}}

We first discuss the factorization of the four-dimensional cylinder blocks on
\(\cyl\times T^2\).  We use the full cylinder blocks defined in \eqref{eq:4d-cylinder-blocks}, with
\begin{equation}
B_3(u):=B_{3,3}(u;1,\tau,\omega).
\end{equation}
With the elliptic-Gamma conventions of Refs.~\cite{Felder:1999mq,Narukawa:2003,Friedman:2004}, the elliptic Gamma function satisfies the gluing identity
\begin{equation}
\Gamma_e(u;\tau,\omega)\,
\Gamma_e(\omega-u;\tau,\omega)
=
\frac{1}{\Theta(u;\omega)}.
\label{eq:gamma-theta-factorization}
\end{equation}
Thus the theta factor appearing in the cylinder determinant can be regarded as the inverse product of two elliptic-Gamma cap wavefunctions.  These cap wavefunctions are the \(D^2\times T^2\) holomorphic blocks of Refs.~\cite{Pasquetti:2011fj,Beem:2012mb,Nieri:2013yra,Nieri:2013vba,Nieri:2015yia,Longhi:2019hdh}, written in the Shintani--Barnes normalization used in this paper.  Compared with the common holomorphic-block conventions, we keep the Bernoulli contact factors explicit and use a reflected Dirichlet argument so that the elementary inverse relation is exactly \(\mathcal B_{\mathrm D}(u)\mathcal B_{\mathrm R}(-u)=1\).

We denote the \(D^2\times T^2\) cap wavefunctions by
\begin{equation}
\mathcal B_{\mathrm D}(u)
:=
\frac{
\exp\!\left[-\frac{\ii\pi}{3}B_3(-u)\right]
}{
\Gamma_e(-u;\tau,\omega)
},
\qquad
\mathcal B_{\mathrm R}(u)
:=
\exp\!\left[+\frac{\ii\pi}{3}B_3(u)\right]\,
\Gamma_e(u;\tau,\omega).
\label{eq:cap-wavefunctions-4d}
\end{equation}
The wavefunctions associated with the oppositely oriented cap are
\begin{equation}
\mathcal B_{\mathrm D}^{\vee}(u)
:=
\frac{
\exp\!\left[-\frac{\ii\pi}{3}B_3(\omega+u)\right]
}{
\Gamma_e(\omega+u;\tau,\omega)
},
\qquad
\mathcal B_{\mathrm R}^{\vee}(u)
:=
\exp\!\left[+\frac{\ii\pi}{3}B_3(\omega-u)\right]\,
\Gamma_e(\omega-u;\tau,\omega).
\label{eq:opposite-cap-wavefunctions-4d}
\end{equation}
The superscript \(\vee\) refers to the opposite cap orientation in the gluing, not to Hermitian conjugation.

Using \eqref{eq:gamma-theta-factorization}, one obtains the exact factorization
\begin{equation}
z_{\rm DD}(u)
=
\mathcal B_{\mathrm D}(u)\,
\mathcal B_{\mathrm D}^{\vee}(u),
\qquad
z_{\rm RR}(u)
=
\mathcal B_{\mathrm R}(u)\,
\mathcal B_{\mathrm R}^{\vee}(u).
\label{eq:4d-block-factorization}
\end{equation}
Moreover, the reflected inverse relation already holds separately at the level of the two caps:
\begin{equation}
\mathcal B_{\mathrm D}(u)\,
\mathcal B_{\mathrm R}(-u)=1,
\qquad
\mathcal B_{\mathrm D}^{\vee}(u)\,
\mathcal B_{\mathrm R}^{\vee}(-u)=1.
\label{eq:4d-half-block-inverse}
\end{equation}
Thus the elementary cancellation
\begin{equation}
z_{\rm DD}(u)\,z_{\rm RR}(-u)=1
\end{equation}
is refined to two independent cap-wavefunction cancellations.

This immediately factorizes any integrand built out of \(z_{\rm DD}\) and \(z_{\rm RR}\) into cap and opposite-cap contributions. For example, the electric integrand of the \(U(N_c)\leftrightarrow U(N_f-N_c)\) duality can be written as
\begin{equation}
Z_{\rm el}(u)
=
Z_{\rm el}^{\rm hol}(u)\,
Z_{\rm el}^{\rm anti}(u),
\label{eq:el-factorization}
\end{equation}
where
\begin{equation}
\begin{aligned}
Z_{\rm el}^{\rm hol}(u)
=&\frac{1}{N_c!}\,
\mathcal B_{\mathrm D}\!\left[\sum_{a=1}^{N_c}u_a+M+\lambda\right]\,
\mathcal B_{\mathrm D}\!\left[-\sum_{a=1}^{N_c}u_a+\widetilde M+\lambda\right]
\\
&\times
\prod_{\substack{a,b=1\\ a\neq b}}^{N_c}
\mathcal B_{\mathrm D}[u_a-u_b]
\\
&\times
\prod_{a=1}^{N_c}\prod_{i=1}^{N_f}
\mathcal B_{\mathrm R}[-u_a-m_i]\,
\mathcal B_{\mathrm R}[u_a-\widetilde m_i]
\\
&\times
\prod_{a=1}^{N_c}\prod_{\alpha=1}^{n}
\mathcal B_{\mathrm D}[u_a+b_\alpha]\,
\mathcal B_{\mathrm D}[-u_a+\widetilde b_\alpha].
\end{aligned}
\label{eq:el-hol-factor}
\end{equation}
The factor \(Z_{\rm el}^{\rm anti}(u)\) is obtained from \eqref{eq:el-hol-factor} by replacing \(\mathcal B_{\mathrm D},\mathcal B_{\mathrm R}\) with \(\mathcal B_{\mathrm D}^{\vee},\mathcal B_{\mathrm R}^{\vee}\).

Similarly, the magnetic integrand \eqref{eq:Zmag-UNc-reint-completed} factorizes as
\begin{equation}
Z_{\rm mag}(v)
=
Z_{\rm mag}^{\rm hol}(v)\,
Z_{\rm mag}^{\rm anti}(v).
\label{eq:mag-factorization}
\end{equation}
Explicitly,
\begin{equation}
\begin{aligned}
Z_{\rm mag}^{\rm hol}(v)
=&\frac{1}{n!}\,
\mathcal B_{\mathrm D}\!\left[-\sum_{r=1}^{n}v_r+\lambda\right]\,
\mathcal B_{\mathrm D}\!\left[\sum_{r=1}^{n}v_r+M+\widetilde M+\lambda\right]
\\
&\times
\prod_{\substack{r,s=1\\ r\neq s}}^{n}
\mathcal B_{\mathrm D}[v_r-v_s]
\\
&\times
\prod_{r=1}^{n}\prod_{i=1}^{N_f}
\mathcal B_{\mathrm R}[v_r+\widetilde m_i]\,
\mathcal B_{\mathrm D}[-v_r+m_i]
\\
&\times
\prod_{r=1}^{n}\prod_{\alpha=1}^{n}
\mathcal B_{\mathrm R}[v_r-b_\alpha]\,
\mathcal B_{\mathrm D}[v_r-b_\alpha]\,
\mathcal B_{\mathrm R}[-v_r+b_\alpha]\,
\mathcal B_{\mathrm R}[-v_r-\widetilde b_\alpha].
\end{aligned}
\label{eq:mag-hol-factor}
\end{equation}
The factor \(Z_{\rm mag}^{\rm anti}(v)\) is obtained from \eqref{eq:mag-hol-factor} by replacing \(\mathcal B_{\mathrm D},\mathcal B_{\mathrm R}\) with \(\mathcal B_{\mathrm D}^{\vee},\mathcal B_{\mathrm R}^{\vee}\).

The identity inserted in \eqref{eq:Zmag-UNc-reint-completed} factorizes separately on the two caps:
\begin{equation}
\mathcal B_{\mathrm D}[v_r-b_\alpha]\,
\mathcal B_{\mathrm R}[-v_r+b_\alpha]
=1,
\qquad
\mathcal B_{\mathrm D}^{\vee}[v_r-b_\alpha]\,
\mathcal B_{\mathrm R}^{\vee}[-v_r+b_\alpha]
=1 .
\label{eq:factorization-half-block-invertible-pairs}
\end{equation}
Thus the cap-wavefunction factorization is compatible with the JK presentation without introducing any separate local factor to be split between the two caps.

The singlet sector factorizes in the same way:
\begin{equation}
Z_{\rm sing}
=
Z_{\rm sing}^{\rm hol}\,
Z_{\rm sing}^{\rm anti}.
\label{eq:singlet-factorization}
\end{equation}
For instance,
\begin{equation}
Z_M^{\rm hol}
=
\prod_{i,j=1}^{N_f}
\mathcal B_{\mathrm R}[-m_i-\widetilde m_j],
\qquad
Z_M^{\rm anti}
=
\prod_{i,j=1}^{N_f}
\mathcal B_{\mathrm R}^{\vee}[-m_i-\widetilde m_j],
\end{equation}
and similarly for the boundary singlet sector \(Z_B\).

The residue expansion is compatible with this factorization. At an electric pole labelled by
\begin{equation}
\mathcal I\subset\{1,\ldots,N_f\},
\qquad
|\mathcal I|=N_c,
\end{equation}
the residue factorizes as
\begin{equation}
\mathcal E_{\mathcal I}
=
\mathcal E_{\mathcal I}^{\rm hol}\,
\mathcal E_{\mathcal I}^{\rm anti}.
\label{eq:electric-residue-factorization}
\end{equation}
Likewise, at a magnetic pole labelled by
\begin{equation}
\mathcal J\subset\{1,\ldots,N_f\},
\qquad
|\mathcal J|=n,
\end{equation}
one has
\begin{equation}
\mathcal M_{\mathcal J}
=
\mathcal M_{\mathcal J}^{\rm hol}\,
\mathcal M_{\mathcal J}^{\rm anti}.
\label{eq:magnetic-residue-factorization}
\end{equation}
In the magnetic \(b_\alpha\)-sector, the completed DD/RR presentation separates the same-argument contact factor from the RR factors which lie in the opposite chamber. The reflected inverse pairs in \eqref{eq:factorization-half-block-invertible-pairs} remain equal to one separately on each cap, while the same-argument DD/RR products contribute only holomorphic contact factors.

The term-by-term matching of the full cylinder duality,
\begin{equation}
Z_{\rm sing}\,\mathcal M_{\mathcal J}
=
\mathcal E_{\mathcal J^c},
\end{equation}
is therefore refined to a matching of factorized kernels:
\begin{equation}
Z_{\rm sing}^{\rm hol}\,
\mathcal M_{\mathcal J}^{\rm hol}\,
Z_{\rm sing}^{\rm anti}\,
\mathcal M_{\mathcal J}^{\rm anti}
=
\mathcal E_{\mathcal J^c}^{\rm hol}\,
\mathcal E_{\mathcal J^c}^{\rm anti},
\label{eq:factorized-residue-matching}
\end{equation}
with the completed DD/RR pairs inserted separately on the two caps.

\subsubsection{The elliptic kernel interpretation}
\label{subsec:elliptic-kernel-interpretation}

We now make the boundary-to-boundary interpretation of the cylinder block more explicit.  The following bra-ket notation should be read as a formal boundary-state calculus for the localized wavefunctions on the elliptic fugacity torus.  Its content is the set of exact identities among the cap wavefunctions introduced in \eqref{eq:cap-wavefunctions-4d} and \eqref{eq:opposite-cap-wavefunctions-4d}; no additional Hilbert-space completion is assumed. Accordingly, this subsection should be viewed as a formal calculus for the elementary \(D/R\) matrix elements computed in this paper, rather than as a construction of a complete boundary-state basis. A more intrinsic kernel would require allowing general boundary interactions or defects, for instance boundary operators lifting Wilson-loop bases, and is not attempted here.

We denote a localized boundary wavefunction by
\begin{equation}
|u;\mathsf q\rangle,
\qquad
\mathsf q\in\{\mathrm D,\mathrm R\},
\end{equation}
where \(u\) is the boundary fugacity and \(\mathsf q\) is the boundary polarization.  The cap state has components \(\mathcal B_{\mathrm D},\mathcal B_{\mathrm R}\), while the opposite cap has components \(\mathcal B_{\mathrm D}^{\vee},\mathcal B_{\mathrm R}^{\vee}\).  As above, \(\vee\) indicates the opposite orientation of the cap, not Hermitian conjugation.  Equations \eqref{eq:cap-wavefunctions-4d} and \eqref{eq:opposite-cap-wavefunctions-4d} imply the elementary pairings
\begin{equation}
\mathcal B_{\mathrm R}^{\vee}(u)\,\mathcal B_{\mathrm D}(u-\omega)=1,
\qquad
\mathcal B_{\mathrm D}^{\vee}(u)\,\mathcal B_{\mathrm R}(u+\omega)=1.
\label{eq:cap-dual-pairings}
\end{equation}
Equivalently, the affine shifts by \(\omega\) can be absorbed into the formal gluing pairings
\begin{equation}
\langle u;\mathrm R^{\vee}|v;\mathrm D\rangle
=
\delta_E(v-u+\omega),
\qquad
\langle u;\mathrm D^{\vee}|v;\mathrm R\rangle
=
\delta_E(v-u-\omega),
\label{eq:formal-boundary-pairing}
\end{equation}
where \(\delta_E\) is the formal delta function on the elliptic fugacity torus.  These shifts are part of the gluing rule on the boundary torus.

The cylinder partition function is the localized matrix element of a boundary-to-boundary operator.  For the four elementary polarizations we write
\begin{equation}
\mathsf K_{\mathsf q_+\mathsf q_-}^{\cyl\times T^2}
=
\int_E \mathrm d\mu(u)\,
|u;\mathsf q_+\rangle\,
z_{\mathsf q_+\mathsf q_-}^{\cyl\times T^2}(u)\,
\langle u+\mu_{\mathsf q_+\mathsf q_-};\mathsf q_-^{\vee}|,
\label{eq:cyl-kernel-operator}
\end{equation}
with
\begin{equation}
\mu_{\mathrm{RR}}=0,
\qquad
\mu_{\mathrm{DD}}=0,
\qquad
\mu_{\mathrm{DR}}=\omega,
\qquad
\mu_{\mathrm{RD}}=-\omega.
\label{eq:cyl-kernel-shifts}
\end{equation}
In components this says
\begin{equation}
z_{\mathsf q_+\mathsf q_-}^{\cyl\times T^2}(u)
=
\mathcal B_{\mathsf q_+}(u)\,
\mathcal B_{\mathsf q_-}^{\vee}\!\left(u+\mu_{\mathsf q_+\mathsf q_-}\right).
\label{eq:cyl-kernel-components-halfblocks}
\end{equation}
Using \eqref{eq:cap-dual-pairings}, one obtains the sewing identities
\begin{equation}
z_{\rm RR}^{\cyl\times T^2}(u)\,\mathcal B_{\mathrm D}(u-\omega)
=
\mathcal B_{\mathrm R}(u),
\qquad
z_{\rm DD}^{\cyl\times T^2}(u)\,\mathcal B_{\mathrm R}(u+\omega)
=
\mathcal B_{\mathrm D}(u),
\label{eq:cyl-kernel-changing-polarization}
\end{equation}
and
\begin{equation}
z_{\rm DR}^{\cyl\times T^2}(u)\,\mathcal B_{\mathrm D}(u)
=
\mathcal B_{\mathrm D}(u),
\qquad
z_{\rm RD}^{\cyl\times T^2}(u)\,\mathcal B_{\mathrm R}(u)
=
\mathcal B_{\mathrm R}(u).
\label{eq:cyl-kernel-identity-polarization}
\end{equation}
Equivalently,
\begin{equation}
\mathsf K_{\rm RR}^{\cyl\times T^2}|\mathcal B_{\mathrm D}\rangle
=
|\mathcal B_{\mathrm R}\rangle,
\qquad
\mathsf K_{\rm DD}^{\cyl\times T^2}|\mathcal B_{\mathrm R}\rangle
=
|\mathcal B_{\mathrm D}\rangle,
\label{eq:cyl-kernel-operator-changing-polarization}
\end{equation}
while
\begin{equation}
\mathsf K_{\rm DR}^{\cyl\times T^2}|\mathcal B_{\mathrm D}\rangle
=
|\mathcal B_{\mathrm D}\rangle,
\qquad
\mathsf K_{\rm RD}^{\cyl\times T^2}|\mathcal B_{\mathrm R}\rangle
=
|\mathcal B_{\mathrm R}\rangle.
\label{eq:cyl-kernel-operator-identity-polarization}
\end{equation}
Thus the mixed blocks are identity kernels in the two polarized sectors, whereas the unmixed blocks implement elementary changes of polarization.

This interpretation is compatible with radial slicing of the cap and belongs to the same broad family of gluing and cigar constructions studied in \cite{Dedushenko:2018aox,Dedushenko:2018tgx,Dedushenko:2021mds,Dedushenko:2023qjq}.  Cutting the disk into annuli gives
\begin{equation}
D^2\times T^2
=
(\cyl\times T^2)_n
\cup_{T^3}\cdots\cup_{T^3}
(\cyl\times T^2)_1
\cup_{T^3}
D^{2\prime}\times T^2.
\label{eq:radial-slicing-cap}
\end{equation}
In the boundary-state calculus this becomes a composition of the kernels \eqref{eq:cyl-kernel-operator}.  For instance,
\begin{equation}
\mathsf K_{\rm DD}^{\cyl\times T^2}\mathsf K_{\rm RR}^{\cyl\times T^2}
=
\mathsf K_{\rm DR}^{\cyl\times T^2},
\qquad
\mathsf K_{\rm RR}^{\cyl\times T^2}\mathsf K_{\rm DD}^{\cyl\times T^2}
=
\mathsf K_{\rm RD}^{\cyl\times T^2},
\label{eq:cyl-kernel-composition}
\end{equation}
where the shifts of the fugacity are fixed by the pairings \eqref{eq:formal-boundary-pairing}.  In components this is
\begin{equation}
z_{\rm DD}^{\cyl\times T^2}(u)\,
z_{\rm RR}^{\cyl\times T^2}(u+\omega)
=
z_{\rm DR}^{\cyl\times T^2}(u)=1,
\qquad
z_{\rm RR}^{\cyl\times T^2}(u)\,
z_{\rm DD}^{\cyl\times T^2}(u-\omega)
=
z_{\rm RD}^{\cyl\times T^2}(u)=1.
\label{eq:cyl-kernel-component-composition}
\end{equation}
More generally, any admissible word in the four elementary kernels is a formal radial slicing.  Closed words in the two polarizations refine the same cap, while open words describe radial interfaces changing the boundary polarization.

The same identities also give the expected behavior of a massive chiral pair.  Consider two chiral multiplets \(Q_a,Q_b\) with a superpotential mass term
\begin{equation}
W=m Q_a Q_b,
\qquad
r_a+r_b=2,
\label{eq:massive-pair-superpotential}
\end{equation}
and with opposite gauge and flavor charges.  If the first chiral has cylinder argument
\begin{equation}
x=\rho(u)+\nu+r_a\gamma_R
\label{eq:massive-pair-argument}
\end{equation}
in an \(RR\) block, then the second chiral, in the complementary \(DD\) polarization, has argument
\begin{equation}
-\rho(u)-\nu+(r_b-2)\gamma_R=-x .
\label{eq:massive-pair-opposite-argument}
\end{equation}
Therefore the massive pair contributes the identity
\begin{equation}
z_{\rm RR}(x)z_{\rm DD}(-x)=1.
\label{eq:massive-pair-RR-DD}
\end{equation}
Similarly,
\begin{equation}
z_{\rm DD}(x)z_{\rm RR}(-x)=1,
\qquad
z_{\rm DR}(x)z_{\rm RD}(-x)=1,
\qquad
z_{\rm RD}(x)z_{\rm DR}(-x)=1.
\label{eq:massive-pair-all-polarizations}
\end{equation}
Thus the usual statement that a massive chiral pair with a single supersymmetric vacuum contributes one to the protected index is realized pointwise by the reflected inverse relation of the cylinder blocks, provided the two chirals are assigned complementary boundary polarizations at the two ends.  Non-complementary assignments require additional boundary terms or boundary degrees of freedom in order to make the massive superpotential compatible with the supersymmetric variational problem.

The same affine gluing calculus also reproduces the compact \(S^2\times T^2\) chiral determinant.  Let \(\gamma_\Phi\) denote the total additive fugacity appearing in the chiral determinant, including the gauge, flavor and R-symmetry contributions appropriate to the chiral multiplet under consideration.  For an integer \(S^2\) flux \(b\), set
\begin{equation}
u_+
=
-\gamma_\Phi+\frac{1-b}{2}\omega,
\qquad
u_-
=
-\gamma_\Phi+\frac{1+b}{2}\omega,
\label{eq:S2T2-gluing-arguments}
\end{equation}
so that \(u_--u_+=b\omega\).  The standard \(S^2\) fusion with flux \(b\) pairs the Robin cap with the oppositely oriented Dirichlet cap as
\begin{equation}
Z_{S^2\times T^2}^{\Phi}(\gamma_\Phi,b)
=
\langle \mathcal B_{\mathrm D}^{\vee}|
\mathsf G_{S^2}^{(b)}
|\mathcal B_{\mathrm R}\rangle
=
\mathcal B_{\mathrm R}(u_+)\,
\mathcal B_{\mathrm D}^{\vee}(u_- -\omega).
\label{eq:S2T2-gluing-halfblocks}
\end{equation}
Using \eqref{eq:cap-wavefunctions-4d} and \eqref{eq:opposite-cap-wavefunctions-4d}, this gives
\begin{equation}
Z_{S^2\times T^2}^{\Phi}(\gamma_\Phi,b)
=
\exp\!\left[
\frac{\ii\pi}{3}\left(B_3(u_+)-B_3(u_-)\right)
\right]
\frac{\Gamma_e(u_+;\tau,\omega)}
{\Gamma_e(u_-;\tau,\omega)}.
\label{eq:S2T2-gluing-gamma-ratio}
\end{equation}
The match with the standard \(T^2\times S^2\) chiral determinant is obtained by using the elliptic-Gamma difference equation along the \(\omega\)-direction,
\begin{equation}
\Gamma_e(u+\omega;\tau,\omega)
=
\Theta(u;\tau)\,\Gamma_e(u;\tau,\omega).
\label{eq:elliptic-gamma-omega-recursion}
\end{equation}
Since \(u_-=u_++b\omega\), \eqref{eq:S2T2-gluing-gamma-ratio} becomes the finite theta-product formula
\begin{equation}
Z_{S^2\times T^2}^{\Phi}(\gamma_\Phi,b)
=
\exp\!\left[
\frac{\ii\pi}{3}\left(B_3(u_+)-B_3(u_-)\right)
\right]
\begin{cases}
\displaystyle
\prod_{j=0}^{b-1}
\Theta(u_+ + j\omega;\tau)^{-1}, & b>0,\\[3mm]
1, & b=0,\\[2mm]
\displaystyle
\prod_{j=1}^{-b}
\Theta(u_+ - j\omega;\tau), & b<0.
\end{cases}
\label{eq:S2T2-gluing-theta-product}
\end{equation}
Equivalently, for \(b>0\),
\begin{equation}
\prod_{j=0}^{b-1}
\Theta(u_+ + j\omega;\tau)^{-1}
=
\prod_{j=0}^{b-1}
\Theta\!\left(-\gamma_\Phi+\frac{1-b+2j}{2}\omega;\tau\right)^{-1}.
\label{eq:S2T2-Closset-Shamir-form-positive-flux}
\end{equation}
The exponential prefactor in \eqref{eq:S2T2-gluing-theta-product} is the local Bernoulli contact term associated with our Shintani--Barnes normalization of the cap wavefunctions.  After the corresponding local counterterm normalization, the non-local determinant is precisely the finite theta-product form of the \(T^2\times S^2\) chiral index of Closset and Shamir \cite{Closset:2013sxa}.  Thus both the open cylinder block and the closed \(S^2\times T^2\) determinant are built from the same cap wavefunctions and the same affine gluing shifts on the boundary elliptic curve.  This is the sense in which the localized partition function on \(\cyl\times T^2\) defines an elliptic kernel: it is the matrix element of a boundary-to-boundary operator whose entries are meromorphic elliptic functions of the boundary fugacity and whose sewing with \(D^2\times T^2\) holomorphic-block cap states reproduces the expected cap and compact-space blocks.

\subsubsection{Two-dimensional factorization on \texorpdfstring{\(\cyl\)}{Cyl}}
\label{subsubsec:two-dimensional-factorization-cylinder}

The same structure appears in the two-dimensional cylinder degeneration.  The elementary blocks are
\begin{equation}
z_{\rm RR}^{2d}(u)
=
\frac{1}{2\ii\sin(\pi u)},
\qquad
z_{\rm DD}^{2d}(u)
=
-2\ii\sin(\pi u).
\label{eq:2d-blocks-factorization-section}
\end{equation}
Using Euler's reflection formula,
\begin{equation}
\Gamma(u)\Gamma(1-u)
=
\frac{\pi}{\sin(\pi u)},
\label{eq:euler-reflection}
\end{equation}
one obtains
\begin{equation}
z_{\rm RR}^{2d}(u)
=
\frac{\Gamma(u)\Gamma(1-u)}{2\pi\ii},
\qquad
z_{\rm DD}^{2d}(u)
=
\frac{2\pi\ii}{\Gamma(-u)\Gamma(1+u)}.
\label{eq:2d-blocks-gamma-factorization}
\end{equation}
Choosing a branch of \((2\pi\ii)^{1/2}\), we define the two-dimensional cap wavefunctions by
\begin{equation}
\mathcal B_{\mathrm R}^{2d}(u)
:=
\frac{\Gamma(u)}{(2\pi\ii)^{1/2}},
\qquad
\mathcal B_{\mathrm D}^{2d}(u)
:=
\frac{(2\pi\ii)^{1/2}}{\Gamma(-u)}.
\label{eq:2d-cap-wavefunctions}
\end{equation}
The opposite-cap wavefunctions are
\begin{equation}
\mathcal B_{\mathrm R}^{2d,\vee}(u)
:=
\frac{\Gamma(1-u)}{(2\pi\ii)^{1/2}},
\qquad
\mathcal B_{\mathrm D}^{2d,\vee}(u)
:=
\frac{(2\pi\ii)^{1/2}}{\Gamma(1+u)}.
\label{eq:2d-opposite-cap-wavefunctions}
\end{equation}
The superscript \(\vee\) again denotes the opposite orientation of the cap, not Hermitian conjugation.  With these definitions,
\begin{equation}
z_{\rm RR}^{2d}(u)
=
\mathcal B_{\mathrm R}^{2d}(u)\,
\mathcal B_{\mathrm R}^{2d,\vee}(u),
\qquad
z_{\rm DD}^{2d}(u)
=
\mathcal B_{\mathrm D}^{2d}(u)\,
\mathcal B_{\mathrm D}^{2d,\vee}(u),
\label{eq:2d-block-factorization}
\end{equation}
and the reflected inverse relation already holds at the level of the cap wavefunctions:
\begin{equation}
\mathcal B_{\mathrm D}^{2d}(u)\,
\mathcal B_{\mathrm R}^{2d}(-u)=1,
\qquad
\mathcal B_{\mathrm D}^{2d,\vee}(u)\,
\mathcal B_{\mathrm R}^{2d,\vee}(-u)=1.
\label{eq:2d-cap-wavefunction-inverse}
\end{equation}
The gluing identities are the two-dimensional degeneration of \eqref{eq:cap-dual-pairings}, with the unit shift replacing the elliptic shift \(\omega\):
\begin{equation}
\mathcal B_{\mathrm R}^{2d,\vee}(u)\,
\mathcal B_{\mathrm D}^{2d}(u-1)=1,
\qquad
\mathcal B_{\mathrm D}^{2d,\vee}(u)\,
\mathcal B_{\mathrm R}^{2d}(u+1)=1.
\label{eq:2d-cap-dual-pairings}
\end{equation}

Thus the boundary-to-boundary interpretation also applies to the two-dimensional cylinder \(\cyl\simeq I\times S^1\).  In a formal localized boundary basis \(|u;\mathsf q\rangle\), with \(u\sim u+1\), the elementary two-dimensional kernels can be written as
\begin{equation}
\mathsf K_{\mathsf q_+\mathsf q_-}^{2d}
=
\int_{\mathbb C/\mathbb Z}\mathrm d\mu(u)\,
|u;\mathsf q_+\rangle\,
z_{\mathsf q_+\mathsf q_-}^{2d}(u)\,
\langle u+\mu_{\mathsf q_+\mathsf q_-}^{2d};\mathsf q_-^{\vee}|,
\label{eq:2d-cylinder-kernel-operator}
\end{equation}
where
\begin{equation}
\mu_{\mathrm{RR}}^{2d}=0,
\qquad
\mu_{\mathrm{DD}}^{2d}=0,
\qquad
\mu_{\mathrm{DR}}^{2d}=1,
\qquad
\mu_{\mathrm{RD}}^{2d}=-1.
\label{eq:2d-cylinder-kernel-shifts}
\end{equation}
The component form is
\begin{equation}
z_{\mathsf q_+\mathsf q_-}^{2d}(u)
=
\mathcal B_{\mathsf q_+}^{2d}(u)\,
\mathcal B_{\mathsf q_-}^{2d,\vee}\!\left(u+\mu_{\mathsf q_+\mathsf q_-}^{2d}\right).
\label{eq:2d-cylinder-kernel-components}
\end{equation}
Consequently,
\begin{equation}
z_{\rm DR}^{2d}(u)
=
\mathcal B_{\mathrm D}^{2d}(u)\,
\mathcal B_{\mathrm R}^{2d,\vee}(u+1)=1,
\qquad
z_{\rm RD}^{2d}(u)
=
\mathcal B_{\mathrm R}^{2d}(u)\,
\mathcal B_{\mathrm D}^{2d,\vee}(u-1)=1,
\label{eq:2d-mixed-kernels-identity}
\end{equation}
while
\begin{equation}
z_{\rm DD}^{2d}(u)\,z_{\rm RR}^{2d}(u+1)=1,
\qquad
z_{\rm RR}^{2d}(u)\,z_{\rm DD}^{2d}(u-1)=1.
\label{eq:2d-kernel-component-composition}
\end{equation}
These equations are the two-dimensional analogues of \eqref{eq:cyl-kernel-identity-polarization} and \eqref{eq:cyl-kernel-component-composition}.  They show that further slicing of the two-dimensional cylinder is again represented by composition of identity kernels and polarization-changing interfaces.  The massive-pair identity degenerates in the same way:
\begin{equation}
z_{\rm RR}^{2d}(u)z_{\rm DD}^{2d}(-u)=1,
\qquad
z_{\rm DD}^{2d}(u)z_{\rm RR}^{2d}(-u)=1.
\label{eq:2d-massive-pair-identity}
\end{equation}

For the complete-intersection GLSM, the cylinder integrand
\begin{equation}
Z_{\mathfrak n_1,\ldots,\mathfrak n_r}^{2d}(u)
=
\prod_{A=1}^{r}
z_{\rm RR}^{2d}\bigl[-\mathfrak n_Au+\mathfrak p_A\bigr]
\prod_{i=1}^{N}
z_{\rm RR}^{2d}\bigl[u+x_i\bigr]
\end{equation}
factorizes as
\begin{equation}
Z_{\mathfrak n_1,\ldots,\mathfrak n_r}^{2d}(u)
=
Z_{\mathfrak n_1,\ldots,\mathfrak n_r}^{2d,{\rm hol}}(u)\,
Z_{\mathfrak n_1,\ldots,\mathfrak n_r}^{2d,{\rm anti}}(u),
\label{eq:CI-2d-factorization}
\end{equation}
where the two factors are now written directly in terms of the cap wavefunctions,
\begin{equation}
Z_{\mathfrak n_1,\ldots,\mathfrak n_r}^{2d,{\rm hol}}(u)
=
\prod_{A=1}^{r}
\mathcal B_{\mathrm R}^{2d}\bigl[-\mathfrak n_Au+\mathfrak p_A\bigr]
\prod_{i=1}^{N}
\mathcal B_{\mathrm R}^{2d}\bigl[u+x_i\bigr],
\label{eq:CI-2d-hol-block}
\end{equation}
and
\begin{equation}
Z_{\mathfrak n_1,\ldots,\mathfrak n_r}^{2d,{\rm anti}}(u)
=
\prod_{A=1}^{r}
\mathcal B_{\mathrm R}^{2d,\vee}\bigl[-\mathfrak n_Au+\mathfrak p_A\bigr]
\prod_{i=1}^{N}
\mathcal B_{\mathrm R}^{2d,\vee}\bigl[u+x_i\bigr].
\label{eq:CI-2d-anti-block}
\end{equation}

The chamber equality
\begin{equation}
\operatorname{JK}_{+}
Z_{\mathfrak n_1,\ldots,\mathfrak n_r}^{2d}
=
\operatorname{JK}_{-}
Z_{\mathfrak n_1,\ldots,\mathfrak n_r}^{2d}
\end{equation}
can therefore be interpreted as an equality between two residue decompositions of the same factorized cylinder kernel. In the positive chamber, the poles are
\begin{equation}
u_i^+=-x_i,
\qquad
i=1,\ldots,N,
\end{equation}
and give the geometric-phase expansion. In the negative chamber, the poles are
\begin{equation}
u_{A,\ell}^-=
\frac{\mathfrak p_A-\ell}{\mathfrak n_A},
\qquad
\ell=0,\ldots,\mathfrak n_A-1,
\end{equation}
and give the hybrid or Landau--Ginzburg expansion.

Thus the two-dimensional identity has the same kernel structure.

The cylinder partition function is a two-boundary kernel.  Its residue expansion is a sum of factorized contributions,
\begin{equation}
Z_{\cyl}
\sim
\sum_{\alpha}
B_{\alpha}^{\rm hol}\,
B_{\alpha}^{\rm anti}.
\end{equation}
For \(\cyl\times T^2\), the cap wavefunctions are built from elliptic Gamma functions. For \(\cyl\simeq I\times S^1\), the cap wavefunctions are built from Euler Gamma functions.

\subsection{Holographic theories on \texorpdfstring{\(\cyl\times T^2\)}{Cyl x T2}}
\label{sec:holographic-benchmarks}

Two standard four-dimensional matter contents provide useful benchmarks for theories that also have holographic incarnations: \(\mathcal N=4\) SYM with gauge group \(SU(N)\)
and the Klebanov--Witten conifold theory~\cite{Klebanov:1998hh} with gauge group \(SU(N)\times SU(N)\).
The point we emphasize is that the naive assignment in which all chiral multiplets
are placed in \(RR\) polarization does not define a \(\omega\)-elliptic meromorphic
integrand on the \(\omega\)-elliptic gauge-fugacity torus of \(\cyl\times T^2\).  However, by changing the
boundary polarization of a suitable subset of chiral multiplets, equivalently by
replacing some \(z_{\rm RR}\) factors by \(z_{\rm DD}\) factors, one obtains
\(\omega\)-elliptic cylinder integrands.

We assign a sign
\begin{equation}
\sigma_{\rm ch}=
\begin{cases}
+1, & \text{chiral in }RR\text{ polarization},\\
-1, & \text{chiral in }DD\text{ polarization}.
\end{cases}
\label{eq:polarization-sign}
\end{equation}
The gauge-dependent part of the \(\omega\)-elliptic monodromy is controlled by the
quadratic terms in the Bernoulli phases together with the theta-function
quasi-periodicities.  In the examples below, cancellation of this gauge-dependent
monodromy imposes simple linear constraints on the signs \(\sigma_{\rm ch}\).

\subsubsection{\texorpdfstring{\(\mathcal N=4\) SYM with gauge group \(SU(N)\)}{N=4 SYM with gauge group SU(N)}}

We first consider \(\mathcal N=4\) SYM in \(\mathcal N=1\) language, one of the canonical gauge-theory examples in the AdS/CFT correspondence~\cite{Maldacena:1997re}.  The theory
contains one vector multiplet and three adjoint chiral multiplets
\begin{equation}
X_I\in {\rm Adj}_{SU(N)},
\qquad
I=1,2,3,
\end{equation}
with superpotential
\begin{equation}
W_{\mathcal N=4}
=
\operatorname{Tr}X_1[X_2,X_3].
\label{eq:N4-superpotential}
\end{equation}
Let
\begin{equation}
u_a,\qquad a=1,\ldots,N,
\end{equation}
be the \(SU(N)\) gauge fugacities, subject to
\begin{equation}
\sum_{a=1}^{N}u_a=0.
\label{eq:N4-SU-constraint}
\end{equation}

The naive cylinder integrand, with all three adjoint chirals in \(RR\) polarization,
is
\begin{align}
Z_{\mathcal N=4}^{\rm naive}
={}&
\frac{1}{N!}
\left[
z_{\rm RR}(\Delta_1)
z_{\rm RR}(\Delta_2)
z_{\rm RR}(\Delta_3)
\right]^{N-1}
\nonumber\\
&\times
\prod_{\substack{a,b=1\\a\neq b}}^{N}
z_{\rm DD}(u_{ab})
\prod_{I=1}^{3}
z_{\rm RR}(u_a-u_b+\Delta_I).
\label{eq:N4-naive-all-RR}
\end{align}
This expression is not \(\omega\)-elliptic on the \(SU(N)\) gauge fugacity torus.  Indeed, the
three adjoint chiral multiplets have polarization signs
\begin{equation}
\sigma_1+\sigma_2+\sigma_3=1+1+1=3,
\end{equation}
whereas cancellation of the gauge-dependent \(\omega\)-elliptic monodromy of the vector multiplet
requires
\begin{equation}
\sigma_1+\sigma_2+\sigma_3=1.
\label{eq:N4-SV-condition}
\end{equation}
Thus the all-\(RR\) assignment over-cancels the vector contribution and leaves a
nontrivial \(\omega\)-elliptic multiplier.

An \(\omega\)-elliptic assignment is obtained by taking two adjoint chirals in \(RR\)
polarization and one in \(DD\) polarization.  We choose
\begin{equation}
X_1,X_2\in RR,
\qquad
X_3\in DD.
\label{eq:N4-RRDD-polarization}
\end{equation}
Then
\begin{equation}
\sigma_1+\sigma_2+\sigma_3=1+1-1=1,
\end{equation}
and the condition \eqref{eq:N4-SV-condition} is satisfied.  The corresponding
\(\omega\)-elliptic \(SU(N)\) contour integral is the one displayed in
\eqref{eq:intro-N4-SU-partition}.  The factor raised to the power \(N-1\)
there is the Cartan contribution of the three adjoint chirals of \(SU(N)\), while
the product over \(a\neq b\) gives the contribution of the roots.  The
superpotential imposes the usual flavor balancing condition
\begin{equation}
\Delta_1+\Delta_2+\Delta_3=\Delta_W
\qquad
\mathrm{mod}\ \mathbb Z+\omega\mathbb Z.
\label{eq:N4-superpotential-balancing}
\end{equation}
This condition is independent of the cancellation of gauge \(\omega\)-elliptic
monodromies; it expresses the neutrality of the superpotential
\eqref{eq:N4-superpotential}.

\subsubsection{Klebanov--Witten theory}
We now consider the Klebanov--Witten conifold quiver~\cite{Klebanov:1998hh}.  The gauge group is
\begin{equation}
SU(N)_u\times SU(N)_v,
\end{equation}
with gauge fugacities
\begin{equation}
u_a,\qquad v_b,\qquad a,b=1,\ldots,N,
\end{equation}
subject to
\begin{equation}
\sum_{a=1}^{N}u_a=0,
\qquad
\sum_{b=1}^{N}v_b=0.
\label{eq:KW-SU-constraints-simple}
\end{equation}
The matter fields are
\begin{equation}
A_i\in(\mathbf N_u,\overline{\mathbf N}_v),
\qquad i=1,2,
\end{equation}
and
\begin{equation}
B_j\in(\overline{\mathbf N}_u,\mathbf N_v),
\qquad j=1,2.
\end{equation}
The superpotential is
\begin{equation}
W_{\rm KW}
=
\operatorname{Tr}
\left(
A_1B_1A_2B_2
-
A_1B_2A_2B_1
\right).
\label{eq:KW-superpotential-simple}
\end{equation}

The naive all-\(RR\) integrand is
\begin{align}
Z_{\rm KW}^{\rm naive}
={}&
\prod_{\substack{a,b=1\\a\neq b}}^{N}
z_{\rm DD}(u_{ab})
\prod_{\substack{a,b=1\\a\neq b}}^{N}
z_{\rm DD}(v_a-v_b)
\nonumber\\
&\times
\prod_{a,b=1}^{N}
z_{\rm RR}(u_a-v_b+\alpha_1)\,
z_{\rm RR}(u_a-v_b+\alpha_2)
\nonumber\\
&\times
\prod_{a,b=1}^{N}
z_{\rm RR}(-u_a+v_b+\beta_1)\,
z_{\rm RR}(-u_a+v_b+\beta_2).
\label{eq:KW-naive-all-RR}
\end{align}
This expression is not \(\omega\)-elliptic.  For the conifold quiver with equal ranks
\(N=N_u=N_v\), cancellation of the gauge-dependent monodromy of the two vector
multiplets requires
\begin{equation}
\sigma_{A_1}+\sigma_{A_2}+\sigma_{B_1}+\sigma_{B_2}=2.
\label{eq:KW-SV-condition}
\end{equation}
The naive all-\(RR\) assignment gives instead
\begin{equation}
1+1+1+1=4,
\end{equation}
and hence leaves a nontrivial \(\omega\)-elliptic multiplier.

An \(\omega\)-elliptic polarization is obtained by taking three bifundamentals in \(RR\)
polarization and one in \(DD\) polarization.  For definiteness we choose
\begin{equation}
A_1,A_2,B_1\in RR,
\qquad
B_2\in DD.
\label{eq:KW-RRRD-polarization-simple}
\end{equation}
Then
\begin{equation}
\sigma_{A_1}+\sigma_{A_2}+\sigma_{B_1}+\sigma_{B_2}
=
1+1+1-1=2,
\end{equation}
and \eqref{eq:KW-SV-condition} is satisfied.  The corresponding \(\omega\)-elliptic
cylinder integrand is
\begin{align}
Z_{\rm KW}^{SU(N)\times SU(N)}
={}&
\prod_{\substack{a,b=1\\a\neq b}}^{N}
z_{\rm DD}(u_{ab})
\prod_{\substack{a,b=1\\a\neq b}}^{N}
z_{\rm DD}(v_a-v_b)
\nonumber\\
&\times
\prod_{a,b=1}^{N}
z_{\rm RR}(u_a-v_b+\alpha_1)\,
z_{\rm RR}(u_a-v_b+\alpha_2)
\nonumber\\
&\times
\prod_{a,b=1}^{N}
z_{\rm RR}(-u_a+v_b+\beta_1)\,
z_{\rm DD}(-u_a+v_b+\beta_2),
\label{eq:KW-single-valued-integrand}
\end{align}
with the \(SU(N)\times SU(N)\) constraints
\eqref{eq:KW-SU-constraints-simple}.  The superpotential imposes the balancing
condition
\begin{equation}
\alpha_1+\alpha_2+\beta_1+\beta_2=\Delta_W
\qquad
\mathrm{mod}\ \mathbb Z+\omega\mathbb Z.
\label{eq:KW-balancing-simple}
\end{equation}

The choice \eqref{eq:KW-RRRD-polarization-simple} is not unique.  Any assignment
with three bifundamentals in \(RR\) polarization and one in \(DD\) polarization
satisfies \eqref{eq:KW-SV-condition}.  Different choices make different subgroups
of the flavor symmetry manifest, but they are equivalent from the viewpoint of
\(\omega\)-ellipticity.

\appendix

\section{Conventions and Auxiliary Formulae}\label{app:notation-conventions}

This appendix collects conventions and auxiliary formulae used in the main text.

\subsection{Spinors and sigma matrices}

We use two-component Euclidean spinors.  Undotted spinors, such as \(\zeta_\alpha\), transform in the \((\mathbf 2,\mathbf 1)\) representation of \(\operatorname{Spin}(4)=SU(2)_+\times SU(2)_-\), while dotted spinors, such as \(\widetilde\zeta_{\dot\alpha}\), transform in \((\mathbf 1,\mathbf 2)\).  Hermitian conjugation gives spinors with the conjugate index structure, and the norms are
\begin{equation}
|\zeta|^2=\zeta^\dagger_{\alpha}\zeta^\alpha,
\qquad
|\widetilde\zeta|^2=\widetilde\zeta^\dagger_{\dot\alpha}\widetilde\zeta^{\dot\alpha}.
\label{eq:app-spinor-norms}
\end{equation}
Spinor indices are raised and lowered with antisymmetric tensors \(\epsilon_{\alpha\beta}\), \(\epsilon^{\alpha\beta}\), \(\epsilon_{\dot\alpha\dot\beta}\), \(\epsilon^{\dot\alpha\dot\beta}\).  We use
\begin{equation}
\epsilon^{12}=+1,\qquad \epsilon_{12}=-1,
\qquad
\epsilon^{\alpha\beta}\epsilon_{\beta\gamma}=\delta^\alpha{}_{\gamma},
\qquad
\epsilon_{\alpha\beta}\epsilon^{\beta\gamma}=\delta_\alpha{}^{\gamma},
\label{eq:app-epsilon-conventions}
\end{equation}
and similarly for dotted indices.  These are the raising and lowering conventions used in the Wolfram Mathematica checks of the background.  We contract spinors as \(\zeta\chi=\zeta^\alpha\chi_\alpha\) and \(\widetilde\zeta\widetilde\chi=\widetilde\zeta_{\dot\alpha}\widetilde\chi^{\dot\alpha}\).

In an orthonormal frame the Euclidean sigma matrices are
\begin{equation}
\sigma^a=(\sigma^1,\sigma^2,\sigma^3,-\ii\mathbf{1}_2),
\qquad
\widetilde\sigma^a=(-\sigma^1,-\sigma^2,-\sigma^3,-\ii\mathbf{1}_2),
\label{eq:app-sigma-frame}
\end{equation}
where the first three matrices on the right-hand side are the Pauli matrices.  Curved sigma matrices are obtained by contraction with the frame and inverse frame:
\begin{equation}
\sigma^\mu=E_a^{\ \mu}\sigma^a,
\qquad
\widetilde\sigma^\mu=E_a^{\ \mu}\widetilde\sigma^a,
\qquad
\sigma_\mu=e^a_{\ \mu}\sigma_a,
\qquad
\widetilde\sigma_\mu=e^a_{\ \mu}\widetilde\sigma_a.
\label{eq:app-curved-sigma}
\end{equation}
The generators acting on the two chiralities are
\begin{equation}
\sigma^{ab}:=\frac14(\sigma^a\widetilde\sigma^b-\sigma^b\widetilde\sigma^a),
\qquad
\widetilde\sigma^{ab}:=\frac14(\widetilde\sigma^a\sigma^b-\widetilde\sigma^b\sigma^a).
\label{eq:sigmaab}
\end{equation}
With \(\varepsilon_{1234}=+1\), they obey
\begin{equation}
\frac12\varepsilon_{abcd}\sigma^{cd}=\sigma_{ab},
\qquad
\frac12\varepsilon_{abcd}\widetilde\sigma^{cd}=-\widetilde\sigma_{ab}.
\label{eq:app-sigma-duality}
\end{equation}
The Clifford relations and hermiticity properties are
\begin{equation}
\sigma^a\widetilde\sigma^b+\sigma^b\widetilde\sigma^a=-2\delta^{ab}\mathbf 1_2,
\qquad
\widetilde\sigma^a\sigma^b+\widetilde\sigma^b\sigma^a=-2\delta^{ab}\mathbf 1_2,
\label{eq:app-clifford}
\end{equation}
\begin{equation}
(\sigma^a)^\dagger=-\widetilde\sigma^a,
\qquad
(\sigma^{ab})^\dagger=-\sigma^{ab},
\qquad
(\widetilde\sigma^{ab})^\dagger=-\widetilde\sigma^{ab}.
\label{eq:app-sigma-hermiticity}
\end{equation}
\begin{equation}
\sigma^a_{\alpha\dot\alpha}\widetilde\sigma_{a}^{\dot\beta\beta}
=-2\,\delta_\alpha{}^{\beta}\delta_{\dot\alpha}{}^{\dot\beta},
\label{eq:app-sigma-completeness}
\end{equation}
\begin{align}
\sigma^a\widetilde\sigma^b\sigma^c
&=-\delta^{ab}\sigma^c+\delta^{ac}\sigma^b-\delta^{bc}\sigma^a
+\varepsilon^{abcd}\sigma_d,
\nn\\
\widetilde\sigma^a\sigma^b\widetilde\sigma^c
&=-\delta^{ab}\widetilde\sigma^c+\delta^{ac}\widetilde\sigma^b-\delta^{bc}\widetilde\sigma^a
-\varepsilon^{abcd}\widetilde\sigma_d.
\label{eq:app-triple-sigma}
\end{align}
For commuting supersymmetry parameters one has the exchange rules
\begin{equation}
\zeta\chi=-\chi\zeta,
\qquad
\widetilde\zeta\widetilde\chi=-\widetilde\chi\widetilde\zeta,
\qquad
\zeta\sigma^a\widetilde\chi=\widetilde\chi\widetilde\sigma^a\zeta,
\qquad
\zeta\sigma^{ab}\chi=\chi\sigma^{ab}\zeta.
\label{eq:app-commuting-spinor-rules}
\end{equation}
If two dynamical fermions are exchanged, the additional minus sign from their Grassmann parity must also be included.  A useful Fierz identity, in these conventions, is
\begin{equation}
(\zeta\chi)(\widetilde\zeta\widetilde\chi)
=-\frac12(\zeta\sigma^a\widetilde\chi)(\chi\sigma_a\widetilde\zeta).
\label{eq:app-Fierz}
\end{equation}

The spinor covariant derivatives are
\begin{equation}
\nabla_\mu\zeta=\partial_\mu\zeta-\frac12\omega_{\mu ab}\sigma^{ab}\zeta,
\qquad
\nabla_\mu\widetilde\zeta=\partial_\mu\widetilde\zeta-\frac12\omega_{\mu ab}\widetilde\sigma^{ab}\widetilde\zeta.
\label{eq:spinor-cov-der}
\end{equation}
The spin connection is determined by the torsion-free Cartan equation
\begin{equation}
\dd e^a+\omega^a{}_{b}\wedge e^b=0,
\label{eq:app-Cartan}
\end{equation}
Equivalently, in components, with \(E_a{}^\mu\) the inverse frame,
\begin{equation}
\omega_{\mu ab}
=-e_{b\nu}\left(\partial_\mu E_a{}^\nu+\Gamma^\nu{}_{\mu\lambda}E_a{}^\lambda\right),
\label{eq:app-spin-connection-component}
\end{equation}
which is the component convention used in the symbolic checks.  The curvature convention is
\begin{equation}
R_{\mu\nu}{}^{ab}
=\partial_\mu\omega_\nu{}^{ab}-\partial_\nu\omega_\mu{}^{ab}
+\omega_\mu{}^{ac}\omega_{\nu c}{}^{b}
-\omega_\nu{}^{ac}\omega_{\mu c}{}^{b}.
\label{eq:app-curvature}
\end{equation}

\subsection{Shintani--Barnes regularization}\label{app:Barnes}

We define the $q$-Pochhammer symbol by
\begin{equation}
(z;q)_\infty
:=
\prod_{k=0}^{\infty}(1-z q^k).
\label{eq:q-pochhammer}
\end{equation}
For additive variables we use throughout the theta function
\begin{equation}
\Theta(u;\omega):=
(e^{2\pi\ii u};e^{2\pi\ii\omega})_\infty
(e^{2\pi\ii(\omega-u)};e^{2\pi\ii\omega})_\infty .
\label{eq:additive-Theta-def}
\end{equation}
and similarly for the double $q$-Pochhammer symbol and elliptic Gamma function
\begin{equation}
(z;q,p)_\infty := \prod_{k,j=0}^\infty (1-z\,q^k p^j),
\qquad
\Gamma(z;q,p):=\frac{(pq z^{-1};q,p)_\infty}{(z;q,p)_\infty}~.
\label{eq:poch2-Gamma}
\end{equation}
A useful property is
\begin{equation}
\frac{\Gamma(p z;q,p)}{\Gamma(z;q,p)}=\Theta(u;\omega)~,\qquad z=e^{2\pi\ii u},\quad q=e^{2\pi\ii\omega}~.
\end{equation}
Define the fugacities
\begin{equation}
z:=e^{2\pi \ii u}~,\quad q := e^{2\pi \ii \omega}~,\quad p:=e^{2\pi\ii \tau}~.
\label{eq:qdef}
\end{equation}
The product representations in \eqref{eq:q-pochhammer}--\eqref{eq:additive-Theta-def} converge for $|q|,|p|<1$ (equivalently $\Im(\omega),\Im(\tau)>0$), while the final expressions below define the analytically continued regularization for generic $\omega,\tau$.  The regularization is ordered: we first regularize the torus lattice and then the half-cylinder product.  This ordering fixes the Bernoulli phase, and a different ordering would shift the answer by a holomorphic non-vanishing contact factor, equivalently by a choice of local counterterm.  The symbols \(C_{4d}(\omega,\tau)\), \(C_{3d}(\omega)\) and \(C_{2d}\) below denote such scheme-dependent normalizations independent of the holonomy variable. We also use the additive elliptic-Gamma notation
\begin{equation}
\Gamma_e(u;\tau,\omega):=\Gamma(z;p,q)~.
\label{eq:additive-notation}
\end{equation}

The fundamental unregularized product we encounter in the computation of 1-loop determinants on $D^2\times T^2$ or $\cyl\times T^2$ is 
\begin{equation}
\prod_{m_\varphi\geq 0}\prod_{(m_x,m_y)\in\mathbb{Z}^2}
\Big(\omega m_\varphi + \tau m_x-m_y+u\Big)~.
\end{equation}
This product is to be understood in the Shintani--Barnes sense, i.e.\ by analytic continuation of the Hurwitz multiple zeta function and its associated multiple Shintani-Barnes Gamma functions~\cite{Barnes1904,Shintani1976,Friedman:2004}.  The elliptic-Gamma form and the accompanying Bernoulli phase are consistent with the modular and product formulae of Refs.~\cite{Felder:1999mq,Narukawa:2003}.  Namely,

\begin{equation}
\prod_{m_\varphi\geq 0}\prod_{(m_x,m_y)\in\mathbb{Z}^2}
\Big(\omega m_\varphi + \tau m_x-m_y+\mu\Big) \ \stackrel{\rm SB}{=}\ C_{4d}(\omega,\tau) \ \frac{e^{-\ii\frac{\pi}{3}B_{3,3}(\mu\mid 1,\tau,\omega)}}{\Gamma(e^{2\pi\ii \mu};e^{2\pi\ii \tau},e^{2\pi\ii \omega})}~,
\end{equation}
where $C_{4d}(\omega,\tau)$ depends only on $\omega,\tau$ (and on the chosen normalization of the Barnes double Gamma) and is independent of $\mu$. Here we defined the cubic Bernoulli polynomial
\begin{multline}
B_{3,3}(u;\omega_1,\omega_2,\omega_3) 
= \frac{u^3}{\omega _1 \omega _2 \omega _3}-\frac{3 u^2 \left(\omega _1+\omega _2+\omega _3\right)}{2 \omega _1 \omega _2 \omega _3}+\\
+\frac{u \left(\omega _1^2+\omega _2^2+\omega _3^2+3 \left(\omega _1 \omega _2+\omega _3 \omega _2+\omega _1 \omega _3\right)\right)}{2 \omega _1 \omega _2 \omega _3}+\\
 - \frac{\left(\omega _1+\omega _2+\omega _3\right) \left(\omega _1 \omega _2+\omega _3 \omega _2+\omega _1 \omega _3\right)}{4 \omega _1 \omega _2 \omega _3}
\end{multline}

\subsection{Jeffrey--Kirwan residue conventions}\label{app:JK-residue-conventions}

This appendix fixes the residue convention used throughout the paper.  The purpose is only to make the normalization of the contour measure explicit.  In particular, every occurrence of
\(
\operatorname{JK}_{U(r),\eta} Z
\)
in the main text is normalized with the measure
\begin{equation}
\prod_{a=1}^{r}\frac{\dd u_a}{2\pi \ii} .
\label{eq:JK-normalized-measure}
\end{equation}
Thus no extra factor of \((2\pi\ii)^r\) is missing from the residue formulae in the duality identities.

Let \(u=(u_1,\ldots,u_r)\) be local coordinates near a pole \(u_*\), and suppose that the polar hyperplanes passing through \(u_*\) are locally given by
\begin{equation}
Q_j(u-u_*)=0,\qquad j=1,\ldots,r,
\label{eq:JK-local-hyperplanes}
\end{equation}
where the charge covectors \(Q_j\in(\mathbb R^r)^*\) are linearly independent.  Near such a non-degenerate pole, a meromorphic \(r\)-form can be written as
\begin{equation}
\omega(u)
=
\frac{f(u)\,\dd u_1\wedge\cdots\wedge\dd u_r}
{Q_1(u-u_*)\cdots Q_r(u-u_*)}+\text{less singular terms},
\label{eq:JK-local-form}
\end{equation}
with \(f\) holomorphic at \(u_*\).  The Jeffrey--Kirwan residue in chamber \(\eta\in\mathbb R^r\) is
\begin{equation}
\operatorname*{JK\mbox{-}Res}_{u=u_*,\eta}\,\omega
=
\begin{cases}
\displaystyle
\frac{f(u_*)}{\left|\det(Q_1,\ldots,Q_r)\right|},
& \eta\in\operatorname{Cone}(Q_1,\,\ldots,Q_r),\\[3mm]
0,& \eta\notin\operatorname{Cone}(Q_1,\,\ldots,Q_r),
\end{cases}
\label{eq:JK-local-definition}
\end{equation}
up to the orientation convention chosen for the ordered set of charges.  In the convention used in the main text, the orientation is fixed so that a simple positive-charge pole of the one-dimensional factor \(z_{\rm RR}(u)\) contributes positively.

Equivalently, for an isolated pole selected by the chamber, the JK residue can be represented by the normalized contour integral
\begin{equation}
\operatorname*{JK\mbox{-}Res}_{u=u_*,\eta}\,\omega
=
\int_{\Gamma_{u_*,\eta}}
\prod_{a=1}^{r}\frac{\dd u_a}{2\pi\ii}\,
\frac{f(u)}{Q_1(u-u_*)\cdots Q_r(u-u_*)},
\label{eq:JK-contour-definition}
\end{equation}
where \(\Gamma_{u_*,\eta}\) is the small real \(r\)-torus encircling the selected hyperplanes with the orientation compatible with \(\eta\).  If one wrote the same contour integral using \(\prod_a\dd u_a\) instead, the answer would be multiplied by \((2\pi\ii)^r\).

For the cylinder blocks, the relevant simple poles come from the RR factor
\begin{equation}
z_{\rm RR}(u)
=
\frac{\exp\!\left[\frac{\ii\pi}{3}P_3(u)\right]}{\Theta(u;\omega)} .
\label{eq:JK-RR-block-reminder}
\end{equation}
Since
\begin{equation}
\Theta(u;\omega)=\Theta'_0\,u+O(u^2),
\qquad
\Theta'_0=\left.\frac{\partial}{\partial u}\Theta(u;\omega)\right|_{u=0},
\label{eq:JK-theta-local-expansion}
\end{equation}
one has
\begin{equation}
z_{\rm RR}(u)
=
\frac{\varphi_0}{\Theta'_0}\,\frac{1}{u}+O(1),
\qquad
\varphi_0:=
\exp\!\left[\frac{\ii\pi}{3}P_3(0)\right].
\label{eq:JK-simple-RR-pole}
\end{equation}
Therefore a selected simple pole of a positive-charge RR hyperplane contributes
\begin{equation}
\frac{\varphi_0}{\Theta'_0}
\label{eq:JK-RR-simple-factor}
\end{equation}
in the normalized JK convention.  This is the origin of the powers of \(\varphi_0/\Theta'_0\) in the residue formulae in the main text.

The theta-function confining identity in Subsection~\ref{subsubsec:theta-U0-identity} is written directly as a contour integral rather than as the abstract symbol \(\operatorname{JK}\).  For this reason its measure is displayed explicitly as
\begin{equation}
\prod_{a=1}^{N}\frac{\dd u_a}{2\pi\ii} .
\label{eq:JK-confining-measure-reminder}
\end{equation}
With this normalization the theta identity has the right-hand side shown in the main text.  With the unnormalized measure \(\prod_a\dd u_a\), the left-hand side would be multiplied by \((2\pi\ii)^N\).

\section{Anomaly Equations}\label{app:anomalies}

In this appendix the periods of the cubic Bernoulli polynomial are kept arbitrary.  We use the same symbol as in the main text, namely
\begin{equation}
B_3(u):=B_{3,3}(u;\omega_1,\omega_2,\omega_3)~,\quad P_3(x):=B_3(x)+B_3(\omega_3-x)~.
\end{equation}
A Robin block with argument \(u\) contributes \(P_3(u)\), while a Dirichlet block with argument \(u\) contributes the negative of the same expression evaluated at \(-u\).  Whenever a DD multiplet appears, its contribution is obtained by applying this reflection to its actual block argument.  This is the anomaly-polynomial version of the inverse relation \(z_{\rm DD}(u)z_{\rm RR}(-u)=1\).
For the $U(N_c)$ vector contribution we get the polynomial
\begin{equation}
\sum_{i,j=1}^{N_c}P_3(x_i-x_j)=2A
\left[
N_c\sum_i x_i^2-\left(\sum_i x_i\right)^2
\right]
+N_c^2 C
\end{equation}
where 
\begin{equation}
A:=  -\frac{3(\omega_1+\omega_2)}
{\omega_1\omega_2\omega_3}~,\quad C:= 
-\frac{(\omega_1+\omega_2)(\omega_1\omega_2+\omega_3^2)}
{2\omega_1\omega_2\omega_3}
~.
\end{equation}
For an adjoint DD/RR multiplet we get
\begin{align}
\pm\sum_{i,j=1}^{N_c}
P_3(u_i-u_j+t)
 =
\pm 2A\left[N_c\sum_i u_i^2
-\left(\sum_i u_i\right)^2\right]
\pm
N_c^2\left[At(t-\omega_3)+C\right]~.
\end{align}
For a DD/RR determinant multiplet we get
\begin{align}
P_3\Bigl(\sum_{i=1}^{N_c}u_i+M\Bigr)=\pm A
\left[
\left(\sum_i u_i\right)^2
+(2M-\omega_3)\sum_i u_i\right]
\pm \left[AM(M-\omega_3)
+C\right]~,
\end{align}
and similarly for a DD/RR anti-determinant multiplet 
\begin{align}
\pm P_3\Bigl(-\sum_{i=1}^{N_c}u_i+\bar M\Bigr)
=\pm A
\left[
\left(\sum_i u_i\right)^2
-(2\bar M-\omega_3)\sum_i u_i\right]
\pm \left[A\bar M(\bar M-\omega_3)
+C\right]~.
\end{align}
For a fundamental DD/RR multiplet we get the contribution
\begin{equation}
\pm\sum_i P_3(u_i+m)
=
\pm A\left[\sum_i u_i^2
+
(2m-\omega_3)\sum_i u_i\right]\pm N_c\left[Am(m-\omega_3)+C\right]~,
\end{equation}
and similarly for an anti-fundamental DD/RR multiplet
\begin{equation}
\pm\sum_i P_3(-u_i+\bar m)
=
\pm A\left[\sum_i u_i^2
-(2\bar m-\omega_3)\sum_i u_i\right]\pm N_c\left[A\bar m(\bar m-\omega_3)+C\right]~.
\end{equation}
Consider a vector, adjoints ($N_\text{ad}$), (anti-)determinants ($N_\text{d}$, $\bar N_\text{d}$) and (anti-)fundamentals ($N_\text{f}$, $\bar N_\text{f}$), contributing with signs (i.e. DD/RR conditions) $\epsilon_a, \eta_d,\bar\eta_{\bar d},\rho_f, \bar\rho_{\bar f}$ respectively. In order to cancel the quadratic anomaly we need the conditions
\begin{align}
& \ 2N_c (1+DD_\text{ad}-RR_\text{ad})+(DD_\text{f}-RR_\text{f})+(\bar DD_\text{f}-\bar RR_\text{f})=0~,\\
& \ 2(1+DD_\text{ad}-RR_\text{ad})=(DD_\text{d}-RR_\text{d})+(\bar DD_\text{d}-\bar RR_\text{d})~.
\end{align}
For the linear anomaly we need to fulfill the condition
\begin{align}
& 
\sum_d\eta_d M_d
-\sum_{\bar d}\bar\eta_{\bar d}\bar M_{\bar d}
+\sum_f\rho_f m_f
-\sum_{\bar f}\bar\rho_{\bar f}\bar m_{\bar f}=0~,
\end{align}
where we shifted all the masses by $\omega_3/2$ for convenience.  An option is 2 RR and 1 DD adjoints (e.g. $\mathcal{N}=4$). A related flavor-adjoint option requires one RR adjoint, not one DD adjoint, together with $N$ RR fundamentals and $N$ DD anti-fundamentals. Another possibility is the theory without adjoints, 1 DD determinant and 1 DD anti-determinant and fundamentals/anti-fundamentals satisfying
\begin{equation}
RR_\text{f}+\overline{RR}_\text{f}-DD_\text{f}-\overline{DD}_\text{f}=2N_c~.
\end{equation}
and the linear constraint 
\begin{equation}
M-\bar M+\sum_{f=1}^{DD_\text{f}} m^{DD}_f-\sum_{f=1}^{RR_\text{f}} m^{RR}_f-\sum_{f=1}^{\bar DD_\text{f}} \bar m^{DD}_f+\sum_{f=1}^{\bar RR_\text{f}} \bar m^{RR}_f=0~.
\end{equation}

The determinant-pair duality used in the main text is one concrete solution of these anomaly equations.  In that example the determinant masses are identified with the total fundamental and anti-fundamental flavor masses, and the remaining spectator fugacities obey the balancing condition displayed in Section~\ref{sec:duality}.  The full duality statement and its JK-residue proof are given in the main text, so we do not repeat them in this appendix.

\section{Lower Dimensional Limits}\label{app:dimensional-limits}

\subsection{Three-dimensional limit}

Given the form of the determinant, the degeneration from $\cyl \times T^2$ to $\cyl \times S^1$ is obtained by taking $\operatorname{Im}\tau\to+\infty$, equivalently $p=e^{2\pi\ii\tau}\to0$. The corresponding 3d 1-loops then are, for the Dirichlet polarization,
\begin{align}
Z^{\mathrm{CM}}_{\mathrm{1\mbox{-}loop}}\big|_{DD}  =
\prod_{(m_\varphi,m_y)\in\mathbb{Z}^2}
\Big(\omega m_\varphi + m_y-q_G\gamma_G^{(3d)}-(r-2)\gamma_R^{(3d)}\Big).
\label{eq:ZrawD3d}
\end{align}
and,  for the Robin polarization,   the raw product reads
\begin{align}
Z^{\mathrm{CM}}_{\mathrm{1\mbox{-}loop}}\big|_{RR} 
 =
\prod_{(m_\varphi,m_y)\in\mathbb{Z}^2}
\Big(\omega m_\varphi + m_y+q_G\gamma_G^{(3d)}+r\gamma_R^{(3d)}\Big)^{-1} ,
\label{eq:ZrawR3d}
\end{align}
with
\begin{equation}
    \gamma_R^{(3d)} = \frac{1}{2}( \alpha_\varphi \omega + \alpha_y )
\end{equation}
and
\begin{equation}
    \gamma_G^{(3d)} = \mathcal A_0 + \beta_y + \omega \beta_\varphi .
\end{equation}

The products in \eqref{eq:ZrawD3d}--\eqref{eq:ZrawR3d} are over the full lattice $\mathbb{Z}^2$ and are understood in the Shintani--Barnes sense, i.e.\ by analytic continuation of the Barnes multiple zeta function and its associated multiple Gamma functions.

Introduce now the shifts
\begin{equation}
\mu^{(3d)}_{D}:=q_G\gamma_G^{(3d)}+(r-2)\gamma_R^{(3d)},
\qquad
\mu^{(3d)}_{R}:=q_G\gamma_G^{(3d)}+r\,\gamma_R^{(3d)}.
\label{eq:mu3d}
\end{equation}
The Robin raw product is the reflected product with shift \(-\mu_R^{(3d)}\).  Its literal Shintani--Barnes evaluation is therefore obtained by applying the formula below at \(\mu=-\mu_R^{(3d)}\) and then inverting the result.  In the displayed RR block we instead use the same contact-term normalization as in the four-dimensional block convention and write the answer with argument \(\mu_R^{(3d)}\).  The two expressions differ by a holomorphic non-vanishing contact factor, so we denote the normalized equality by \(\doteq\) rather than by \(\stackrel{\rm SB}{=}\).  The canonical factorization of the $\mathbb{Z}^2$ product is
\begin{equation}
\prod_{(m_\varphi,m_y)\in\mathbb{Z}^2}\big(\omega m_\varphi+m_y-\mu\big)
\ \stackrel{\rm SB}{=}\
C_{3d}(\omega)\,
\exp\!\Big(-\frac{\pi \ii}{2}\,B_{2,2}(\mu\mid\omega,1)\Big)\,
\Theta(\mu;\omega),
\label{eq:Z2reg}
\end{equation}
where $C_{3d}(\omega)$ depends only on $\omega$ (and on the chosen normalization of the Barnes double Gamma) and is independent of $\mu$. Here we defined the quadratic Bernoulli polynomial
\begin{align}
B_{2,2}(u\mid\omega_1,\omega_2) & \ 
=
\frac{u^2}{\omega_1\omega_2}
-\frac{\omega_1+\omega_2}{\omega_1\omega_2}\,u
+\frac{\omega_1^2+3\omega_1\omega_2+\omega_2^2}{6\,\omega_1\omega_2}~.
\label{eq:B22explicit}
\end{align}

Applying \eqref{eq:Z2reg} to \eqref{eq:ZrawD3d}--\eqref{eq:ZrawR3d} gives
\begin{align}
Z^{\mathrm{CM}}_{\mathrm{1\mbox{-}loop}}\big|_{DD}^{(3d)}
&\stackrel{\rm SB}{=}\
C_{3d}(\omega)\,
\exp\!\Big(-\frac{\pi \ii}{2}\,B_{2,2}\big(\mu^{(3d)}_{D}\mid\omega,1\big)\Big)\,
\Theta( \mu^{(3d)}_{D};\omega)~,
\label{eq:Z3dDDreg}\\[2mm]
Z^{\mathrm{CM}}_{\mathrm{1\mbox{-}loop}}\big|_{RR}^{(3d)}
&\doteq
C_{3d}(\omega)^{-1}\,
\exp\!\Big(+\frac{\pi \ii}{2}\,B_{2,2}\big(\mu^{(3d)}_{R}\mid\omega,1\big)\Big)\,
\Theta(\mu^{(3d)}_{R};\omega)^{-1}~.
\label{eq:Z3dRRreg}
\end{align}
The prefactor $C_{3d}(\omega)$ cancels in determinant ratios and in polarization changes where only $\mu$ varies.  With the normalization displayed above, the three-dimensional DD and RR blocks are inverse at the same argument; the reflected product differs by a holomorphic non-vanishing contact factor.  The functional form of these 3d one-loop determinants agrees with the bulk $I\times T^2$ interval-index blocks of Ref.~\cite{Sugiyama:2020uqh}, once the contact-term normalization and boundary-matter sectors are matched.  In this precise sense, the bulk one-loop sector of the construction of Ref.~\cite{Sugiyama:2020uqh} is contained in the present framework as the $\operatorname{Im}\tau\to+\infty$ degeneration of the four-dimensional cylinder kernel.  It is also consistent with the lower-dimensional hyperbolic-block limits studied in Ref.~\cite{Pittelli:2018rpl}.

\subsection{Two-dimensional limit}

The further degeneration from $\cyl \times S^1$ to $\cyl$ is obtained by taking $\operatorname{Im}\omega\to+\infty$, equivalently $q=e^{2\pi\ii\omega}\to0$, or by setting $\alpha_y=\beta_y=0$ and keeping only the corresponding zero modes in the infinite products. For instance, if we take the $\operatorname{Im}\omega\to+\infty$ limit we obtain, for the Dirichlet polarization,
\begin{align}
Z^{\mathrm{CM}}_{\mathrm{1\mbox{-}loop}}\big|_{DD}  =
\prod_{m_y\in\mathbb{Z}}
\Big( m_y-q_G\gamma_G^{(2d)}-(r-2)\gamma_R^{(2d)}\Big).
\label{eq:ZrawD2d}
\end{align}
and,  for the Robin polarization,   the raw product reads
\begin{align}
Z^{\mathrm{CM}}_{\mathrm{1\mbox{-}loop}}\big|_{RR} 
 =
\prod_{m_y\in\mathbb{Z}}
\Big( m_y+q_G\gamma_G^{(2d)}+r\gamma_R^{(2d)}\Big)^{-1} ,
\label{eq:ZrawR2d}
\end{align}
with
\begin{equation}
    \gamma_R^{(2d)} = \frac{1}{2} \alpha_y 
\end{equation}
and
\begin{equation}
    \gamma_G^{(2d)} = \mathcal A_0 + \beta_y .
\end{equation}

Instead, if we just set all $y$-components to zero, we obtain for the Dirichlet polarization,  
\begin{align}
Z^{\mathrm{CM}}_{\mathrm{1\mbox{-}loop}}\big|_{DD}  =
\prod_{m_\varphi \in\mathbb{Z}}
\Big( m_\varphi -q_G\gamma_G^{(2d)}-(r-2)\gamma_R^{(2d)}\Big).
\label{eq:ZrawD2dbis}
\end{align}
and,  for the Robin polarization,   the raw product reads
\begin{align}
Z^{\mathrm{CM}}_{\mathrm{1\mbox{-}loop}}\big|_{RR} 
 =
\prod_{m_\varphi\in\mathbb{Z}}
\Big( m_\varphi +q_G\gamma_G^{(2d)}+r\gamma_R^{(2d)}\Big)^{-1} ,
\label{eq:ZrawR2dbis}
\end{align}
with
\begin{equation}
    \gamma_R^{(2d)} = \frac{1}{2} \alpha_\varphi 
\end{equation}
and
\begin{equation}
    \gamma_G^{(2d)} = \frac{\mathcal A_0}{\omega} + \beta_\varphi = \widehat{\mathcal A}_0 + \beta_\varphi .
\end{equation}

\subsection[The Z-product and the cylinder determinant]{The \texorpdfstring{$\mathbb{Z}$}{Z}-product and the cylinder determinant}

The $\mathbb{Z}$-products in \eqref{eq:ZrawD2d}--\eqref{eq:ZrawR2d} and in \eqref{eq:ZrawD2dbis}--\eqref{eq:ZrawR2dbis} are regularized by the Barnes one-Gamma function (equivalently, by the Hurwitz zeta prescription).

For $\mu\in\mathbb{C}$ one has the canonical Shintani--Barnes identity
\begin{equation}
\prod_{m\in\mathbb{Z}}(m-\mu)
\ \stackrel{\rm SB}{=}\
C_{2d}\,
\exp\!\big(-\pi \ii\,B_{1,1}(\mu\mid 1)\big)\,
\big(1-e^{2\pi \ii \mu}\big),
\label{eq:Z1reg}
\end{equation}
where $C_{2d}$ is a $\mu$-independent constant fixed by the normalization of $\Gamma_1$ (equivalently by the choice of zeta subtraction). In particular, \eqref{eq:Z1reg} is equivalent to the familiar form $2\sin(\pi\mu)$ up to a $\mu$-independent factor, and exhibits explicitly the phase defined by
the linear Bernoulli polynomial 
\begin{equation}
B_{1,1}(u\mid\omega)=\frac{u}{\omega}-\frac12~.
\label{eq:B11}
\end{equation}

Define now the shifts
\begin{equation}
\mu^{(2d)}_{D}:=q_G\gamma_G^{(2d)}+(r-2)\gamma_R^{(2d)},
\qquad
\mu^{(2d)}_{R}:=q_G\gamma_G^{(2d)}+r\,\gamma_R^{(2d)}.
\label{eq:mu2d}
\end{equation}
As in the three-dimensional limit, the Robin product is evaluated in the contact normalization compatible with \(z_{\rm DD}^{2d}(u)z_{\rm RR}^{2d}(-u)=1\).  Applying \eqref{eq:Z1reg} yields, for the $m_y\in\mathbb{Z}$ reduction \eqref{eq:ZrawD2d}--\eqref{eq:ZrawR2d},
\begin{align}
Z^{\mathrm{CM}}_{\mathrm{1\mbox{-}loop}}\big|_{DD}^{(2d)}
&\stackrel{\rm SB}{=}\
C_{2d}\,
\exp\!\big(-\pi \ii\,B_{1,1}(\mu^{(2d)}_{D}\mid 1)\big)\,
\big(1-e^{2\pi \ii \mu^{(2d)}_{D}}\big),
\label{eq:Z2dDDreg}\\[2mm]
Z^{\mathrm{CM}}_{\mathrm{1\mbox{-}loop}}\big|_{RR}^{(2d)}
&\stackrel{\rm SB}{=}\
C_{2d}^{-1}\,
\exp\!\big(+\pi \ii\,B_{1,1}(\mu^{(2d)}_{R}\mid 1)\big)\,
\big(1-e^{2\pi \ii \mu^{(2d)}_{R}}\big)^{-1}.
\label{eq:Z2dRRreg}
\end{align}
The same formulas apply verbatim to the alternative reduction \eqref{eq:ZrawD2dbis}--\eqref{eq:ZrawR2dbis}, with $\mu^{(2d)}_{D},\mu^{(2d)}_{R}$ computed using the corresponding $(\gamma_R^{(2d)},\gamma_G^{(2d)})$ in that branch. As in three dimensions, the constants $C_{2d}^{\pm 1}$ cancel in determinant ratios.


\end{document}